\newtheorem{definition}{\textbf{Definition}}
\newtheorem{remark}{\textbf{Remark}}
\newtheorem{prop}{\textbf{Proposition}}
\newcommand{\alias}[2]{
\providecommand{#1}{}
\renewcommand{#1}{#2}
}
\alias{\P}{\mathbb{P}}
\alias{\N}{\mathcal{N}}
\alias{\L}{\mathcal{L}}
\alias{\Z}{\mathbb{Z}}
\alias{\Q}{\mathbb{Q}}
\alias{\R}{\mathbb{R}}
\alias{\C}{\mathcal{C}}
\alias{\T}{\mathbb{T}}
\alias{\E}{\mathbb{E}}
\alias{\H}{\mathcal{H}}
\alias{\B}{\mathcal{B}}
\alias{\M}{\mathcal{M}}
\alias{\G}{\mathcal{G}}
\alias{\V}{\mathbb{V}}
\alias{\Y}{Y_{\bullet}}
\newcommand{\x}{x^{\pm}}
\newcommand{\xs}{x^{\pm\ast}}
\newcommand{\Dd}{\widehat{\omega}}
\newcommand{\Dq}{\widehat{v}}
\newcommand{\w}{\omega}
\newcommand{\nc}{\newcommand}
\nc{\bl}{\begin{align*}}\nc{\el}{\end{align*}}
\nc{\ba}{\begin{array}}\nc{\ea}{\end{array}}
\nc{\be}{\begin{equation}} \nc{\ee}{\end{equation}}
\nc{\bea}{\begin{eqnarray}} \nc{\eea}{\end{eqnarray}}
\nc{\bean}{\begin{eqnarray*}} \nc{\eean}{\end{eqnarray*}}
\nc{\bel}{\begin{align}} \nc{\eel}{\end{align}}
\nc{\beln}{\begin{align*}} \nc{\eeln}{\end{align*}}
\nc{\bit}{\begin{itemize}}\nc{\eit}{\end{itemize}}
\nc{\bu}{\bullet} \nc{\nn}{\nonumber}
\nc{\cA}{{\mathcal A}} \nc{\cB}{{\mathcal B}} \nc{\cC}{{\mathcal
C}} \nc{\cD}{{\mathcal D}} \nc{\bbD}{\mathbb{D}}
\nc{\cG}{{\mathcal G}} \nc{\cF}{{\mathcal F}} \nc{\cS}{{\mathcal
S}} \nc{\cU}{{\mathcal U}} \nc{\cH}{{\mathcal H}}
\nc{\cK}{{\mathcal K}} \nc{\cM}{{\mathcal M}} \nc{\cO}{{\mathcal
O}} \nc{\cP}{{\mathcal P}} \nc{\bbE}{\mathbb{E}}
\nc{\bbEQ}{\mathbb{E}_{\mathbb{Q}}} \nc{\eps}{\varepsilon}
\nc{\bbEP}{\mathbb{E}_{\mathbb{P}}}\nc{\bbL}{\mathbb{L}}
\nc{\bbP}{\mathbb{P}} \nc{\bbQ}{\mathbb{Q}} \nc{\del}{\partial}
\nc{\Om}{\Omega} \nc{\om}{\omega} \nc{\bbR}{\mathbb{R}}
\nc{\bbC}{\mathbb{C}} \nc{\bfr}{\begin{flushright}}
\nc{\efr}{\end{flushright}} \nc{\dXt}{\Delta X_{t}}
\nc{\dXs}{\Delta X_{s}} \nc{\bs}{\blacksquare} \nc{\dX}{\Delta X}
\nc{\dY}{\Delta Y}
\nc{\dnkx}{\left(X(T^{n}_{k})-X(T^{n}_{k-1})\right)}
\nc{\esssup}{\mathrm{ess}\mbox{ }\mathrm{sup}}
\nc{\essinf}{\mathrm{ess}\mbox{ } \mathrm{inf}}
\nc{\dhats}{\widehat{\delta_s}}
\nc{\chf}{\mbox{$\mathbf1$}}
\nc{\ind}{\mathds{1}}
\begin{document}
\author{Ren\'e A\"id  \quad  Luciano Campi \quad Liangchen Li \quad Mike Ludkovski    }
\title{An Impulse--Regime Switching Game Model \\ of Vertical Competition}
\maketitle
\begin{abstract}
We study a new kind of non-zero-sum stochastic differential game with mixed impulse/switching controls, motivated by strategic competition in commodity markets. A representative upstream firm produces a commodity that is used by a representative downstream firm to produce a final consumption good. Both firms can influence the price of the commodity. By shutting down or increasing generation capacities, the upstream firm influences the price with impulses. By switching (or not) to a substitute, the downstream firm influences the drift of the commodity price process. We study the resulting impulse--regime switching game between the two firms, focusing on explicit threshold-type equilibria. Remarkably, this class of games naturally gives rise to multiple Nash equilibria, which we obtain via a verification based approach. We exhibit three types of equilibria depending on the ultimate  number of switches by the downstream firm (zero, one or an infinite number of switches). We illustrate the diversification effect provided by vertical integration in the specific case of the crude oil market. Our analysis shows that the diversification gains strongly depend on the pass-through from the crude price to the gasoline price.
\end{abstract}
\tableofcontents

\section{Introduction}

Since Hotelling's (1931) \cite{H31} seminal study of commodity prices, considerable efforts have been undertaken to understand the dynamics of the equilibrium price of commodities and in particular, its long--run properties. The cyclical nature of price dynamics is driven by the substitution effect, whereby consumers will switch to a different commodity if prices rise too high. In a deterministic setting the switching time to the substitute is simple to analyze, but with the stochastic economic cycle consumers face a huge challenge in determining when is the appropriate moment to switch. The succession of booms and busts of commodity prices complicates the switching timing. In the long--run, production capacities adapt to demand and make the price oscillate around a long--term equilibrium.  Indeed, the long--run behaviour of commodity prices exhibits super--cycle patterns. The econometric studies in Leon and Soto (1997) \cite{LS97}, Erten and Ocampo (2012) \cite{EO12},  Jacks (2013) \cite{J13}  and more recently Stuemer (2018) \cite{S18}, all find the presence of super--cycles of several decades in the price of commodities. This phenomenon makes one wonder whether it is even necessary for the consumers to ever switch and whether it is not preferable to just wait for the prices to crash again.

In this paper we design a dynamic model of competition between production and consumption of a commodity used as an intermediate good, allowing to draw conclusions on the long--run dynamics of the commodity price.  In our model, two factors drive the price of the commodity: on the one hand, short--term but persistent shocks of demand and/or production, and on the other hand, strategic decisions of the (representative) upstream production firm and of the (representative) downstream consumer firm. The upstream producer extracts the commodity at cost $c_p$ and sells it for a price $X$. The downstream industry buys the commodity and converts it into a final good that has a price $P$, non--decreasing in $X$. This framework covers a wide range of industries. One might think for example,  of the agricultural sector where soy enters as an input for the food industry to produce a large range of consumer goods. In the aluminum industry, upstream smelters produce aluminum to be used by the automotive and transportation industries. In the oil industry, the crude is extracted by production firms, then transformed into gasoline and kerosene by downstream refineries, and then consumed in the retail market. For the sake of simplicity, we identify the downstream firm that transforms the commodity with the final consumer and this downstream firm's profit with the consumer's surplus.

We focus on the role of the commodity price $X$ that intrinsically creates competition between the representative agents of producers and consumers. In a nutshell, producers prefer high price  $X$, while consumers prefer low price $X$. This competition is dynamic and manifests itself through strategic price effects actuated by the two industries. Therefore, $X$ is (partially) jointly controlled by the producers/consumers, leading to game--theoretic impacts.

On the upstream production side, the producer needs the commodity price $X$ to be high enough to make a profit margin. We suppose that the dynamics of investment and disinvestment in upstream production is driven by  \emph{production capacity shocks} that cause \emph{jumps} in the price $X$. This assumption is consistent with the theory of real options that predicts the existence of threshold prices triggering the  decision of entry and the exit from the market (see MacDonald and Siegel (1986) \cite{MS86} and Dixit and Pindyck (1994) \cite{DP94}). It is also consistent with the observations of quick swings in investment and disinvestement in production, see e.g.~the boom and bust of commodity prices in 2008--10.

On the downstream consumer side, consumers induce a long--term effect on the commodity price only if they switch to a substitute, and they switch to a substitute  only if they anticipate that $X$ will remain high enough for a long time.  The downstream side faces slower dynamics because it involves the transformation of many local installations using the commodity. To have an example in mind, one may think of the thousands of adjustments required to change heating systems in buildings, or of the slow effect of the energy saving programs launched by OECD  after the 1970s oil shock.  Thus, in our model, the downstream market for the final good can be in {\em contraction} or {\em expansion} regime. The contraction regime corresponds to a decreasing demand for the primary commodity, i.e.~the market is abandoning the use of the commodity for a substitute, while the expansion mode corresponds to an increasing demand for the commodity. Depending on the state of the downstream retail market, the \emph{drift}  of the commodity price takes either a constant positive value in the expansion mode or a negative value in the contraction mode. Because such consumer shifts are slow and expensive, the state is persistent (i.e.~piecewise constant in time) and changing the state of the final good market incurs heavy switching costs. This toggling of the price trend can be interpreted as endogenous \emph{regime-switching}, a common way of modeling commodity prices through the business cycle. Beyond the impact of producers and consumers decisions, the commodity price is subject to exogenous short--term stochastic shocks, captured through a Brownian motion driving risk factor.

Our aim is to construct and characterize the dynamic equilibrium in the commodity market due to this vertical competition. Our major contribution is to provide an endogenous, game-theoretic basis for two key stylized features of commodity markets: (i) super--cycles that manifest as long--term mean-reversion; (ii) fundamental impact of supply and demand that maintains the price in a range of values rather than a single equilibrium value. Furthermore, our model allows for three types of equilibria depending on the number of demand switches undertaken by the consumer at equilibrium: zero, one, or an infinite number of switches.  All equilibria exhibit the latter qualitative properties. Besides, the higher the consumer's switching cost, the more she is compelled to endure an unfavorable range of prices.

Along the way, we also make mathematical contributions to the literature on non-zero-sum stochastic games (see Martyr and Moriarty (2017) \cite{MM17}, Atard (2018) \cite{A18}, De~Angelis et al. (2018) \cite{DFM18}, A\"id et al. (2020) \cite{ABCCV20}). To our knowledge ours is the first paper that: (i) considers a mixed impulse-control/switching-control stochastic game; (ii) explicitly constructs impulse-switching threshold-type equilibria in non-zero-sum games;  provides new verification theorems regarding best-response strategies for (iii) an impulsing agent in a regime-switching setting and (iv) switching agent with an impulsed state process. While our solution is non exhaustive in the sense that we a priori focus on a special class of equilibria (leaving open the question of existence of other equilibrium families), it is highly tractable. Namely, we are able to provide closed-form description of the dynamic equilibrium, offering precise quantitative insights regarding the producer and consumer roles and their equilibrium behavior.

To emphasize the latter point, beyond several synthetic examples that illustrate and visualize our model features, we also present a detailed case-study of the diversification effect provided by vertical integration in the crude oil market circa 2019, viewed as a competition between crude oil producers and oil refiners that convert crude into gasoline and other consumer goods.  Indeed, the industrial organization of upstream and downstream segments is an important concern both for the anti-trust regulators and for the firms themselves. It is reflected in the extensive economic literature on the subject and in its persistent presence within the political debate (see Lafontaine and Slade (2007) \cite{LS07} for a review of the topic). From a firm's perspective, vertical integration brings multiple  virtues, including  the potential to reduce the long--term exposure to commodity price fluctuations. See for example Helfat and Teece (1987) \cite{HT87} for an empirical estimation of the hedge procured by vertical integration in the oil business. In our case study, we consider the generic type of equilibrium and a small downstream firm asking herself whether she has an interest in getting more vertically integrated. We show that the gains from integration are directly linked to the pass-through parameter that links the crude oil price to the retail gasoline price. The higher this pass-through, the higher production activity dominates the retail activity both in terms of expected rate of profit and the standard deviation of rate of profit.

The rest of the paper is organized as follows. Section \ref{sec:model} sets up the competitive producer-consumer commodity market. Section~\ref{sec:eqm} then constructs the respective threshold-type impulse-switching equilibria by considering the producer and consumer best-response strategies. Section~\ref{sec:eqm-types} illustrates and discusses the different types of emergent equilibria using toy examples. Section~\ref{sec:oil-study} presents the above vertical integration case study and Section~\ref{sec:conclude} concludes. All the proofs, as well as additional comparative statics, are delegated to Section~\ref{sec:proofs}.

\section{The model}\label{sec:model}

\subsection{Description}

We use $(X_t)$ to denote the (pre-equilibrium) commodity price, modeled as a continuous-time stochastic process. The two players are denoted as $p$roducer and $c$onsumer. In what follows sub-index $p$ (resp.~$c$) in the notation will always refer to the producer (resp.~consumer).  The market involves the original raw commodity that is being produced and the goods market (e.g.~gasoline).
The producer extracts the commodity at cost $c_p$ and sells it for price $x$. The consumer buys it for price $x$, converts it into a final good, and sells it for price $P$.

\paragraph{Profit rates:} The price $x$ of the commodity influences the \emph{volume} of trade, captured by the demand function $D_p (x)$. A similar phenomenon plays out in the final-good market: the goods price $P$ leads to sales volume $D_c (P)$. Since the consumer is in effect the intermediary between the commodity and the goods market, she will pass some of her input price shocks to the output price $P \equiv P(x)$.

We ignore the players' fixed costs because they can be considered to be integrated in the investment costs, and concentrate on the variable costs and revenues that are driven by the respective input/output prices.
 Based on the above discussion, the instantaneous profit rate of the producer is
\begin{align}
\pi_p (x) := (x - c_p) D_p (x).
 \end{align}
Let  $c_c$ be the processing/conversion cost from input commodity to final good and $\alpha$ be the respective conversion factor, so that one unit of commodity becomes $\alpha$ units of the final good (e.g.~barrels of crude oil, converted into barrels of gasoline). Then
 the instantaneous profit rate of the consumer is
\begin{align}\pi_c (x) :=  D_c (P) P - \frac{D_c (P)}{\alpha} (x + c_c).\end{align}
 We note that while the consumer has market power, he is not the only user of the commodity (e.g.~crude oil is also used by the petrochemical industry), so there is no direct link between production volume and consumption volume. Thus, while there is a physical link between the consumer input volume $D_c(P)/\alpha$ and her output volume $D_c(P)$, there is no direct link between $D_c(P)/\alpha$ and aggregate commodity demand $D_p(x)$.

We shall consider linear inverse demand $$D_p (x) = d_0 - d_1 x.$$
If we further assume that $P(x) = p_0 + p_1 x$ (the price of the final good is linearly proportional to the commodity price), and $D_c (P) = d'_0 - d'_1 P$ (final good demand is linearly decreasing in its price $P$), the profit rate of the consumer becomes:
\begin{align}\notag
\pi_c (x) & =  D_c  (P(x)) \cdot\left( P(x) - \frac{(x+c_c)}{\alpha}\right)  \\ \notag
	& =   \left( d'_0-d'_1p_0 \right)\left(p_0-\frac{c_c}{\alpha}\right)+ \left( \left( d'_0-d'_1p_0 \right)\left(p_1-\frac{1}{\alpha}\right) -d'_1p_1\left(p_0-\frac{c_c}{\alpha}\right)\right) x +d'_1p_1\left(\frac{1}{\alpha}-p_1\right)x^2 \\
	& =: \gamma_0 + \gamma_1 x + \gamma_2 x^2. \label{cons-profit}
	\end{align}
	
The consumer profit is concave in the commodity price $x$, $\gamma_2 < 0$, if and only if  the pass--through coefficient $p_1$ is higher than the conversion factor $1/\alpha$. It means that the final good price increases faster than the need of the downstream industry to produce one more good, which is a sound economic condition for having a sustainable downstream industry. To sum up, the profit rates of both producer and consumer are concave and quadratic in $x$.

\paragraph{Market conditions:} We model the commodity price process $(X_t)$ as a controlled It\^o diffusion of the form
\begin{align}\label{dyn-X} dX_t = \mu_t dt + \sigma dW_t - dN_t.\end{align}
The Brownian motion $(W_t)$ captures exogenous price shocks due to random demand or production fluctuations, or pertinent economic shocks for the industry. In this sense, the model is agnostic in the reasons why the commodity price fluctuates around its mean trend. The point process $N_t := \sum_{i\ge 1} \xi_i \mathbf 1_{\{ \tau_i \le t \} }$ captures the producer interventions at times $(\tau_i)_{i \ge 1}$ and impulses $(\xi_i)_{i \ge 1}$. A positive impulse is triggered by an investment phase, and has a negative impact on the price. A negative impulse is induced by a disinvestment phase and has a positive impact on the price.

The drift process $(\mu_t)$ represents the state of the retail market for the final good. It is either in expansion or in contraction state. When in expansion, demand is growing faster than the available production capacity, hence prices tend to rise: $\mu_t = \mu_+ > 0$. When in contraction, the demand is shrinking faster than the production capacity, thus the price tends to decrease, and thus $\mu_t = \mu_- < 0$. This modeling corresponds to an imperfect adjustment of the market as in a {\em sticky price} model in macroeconomics. The drift is fully controlled by the consumer, $$\mu_t = \mu_+ \sum_{i=0}^\infty \mathbf 1_{\{ \sigma_{2i} \le t < \sigma_{2i+1} \} } + \mu_-  \sum_{i=1}^\infty \mathbf 1_{\{ \sigma_{2i-1} \le t < \sigma_{2i} \} },\qquad t \ge 0,$$ where $\sigma_i$ is the $i$-th switching instance taken by the consumer in the case $\mu_{0-}=\mu_+$ (with the convention $\sigma_0 =0$, so that $\sigma_1$ is the first switching time) and analogously when $\mu_{0-}=\mu_-$ by interchanging odd and even switching times. Thus, both players influence $(X_t)$, although their actions are of distinct types, namely impulse control $(N_t)$ by the producer and switching-drift control $(\mu_t)$ by the consumer. The resulting controlled price dynamics are denoted as $X^{(\mu,N)}$.

The quadratic nature of the upstream and downstream profit rate functions  $\pi_p(\cdot)$ and $\pi_c(\cdot)$ implies that each player has their own natural habitat given by the intervals $(x_c^1, x_c^2)$ and $(x_p^1,x_p^2)$ for commodity price levels with:
\begin{align}
  x^1_p & := \min\Big\{c_p,\frac{d_0}{d_1}\Big\} ,   \qquad && x^2_p :=  \max\Big\{c_p,\frac{d_0}{d_1}\Big\} , \\
  x^1_c & := \min\Big\{ \frac{p_0-c_c/\alpha}{1/\alpha-p_1},\frac{d'_0-d'_1p_0}{p_1d'_1}\Big\} ,  \qquad && x^2_c := \max\Big\{\frac{p_0-c_c/\alpha}{1/\alpha-p_1},\frac{d'_0-d'_1p_0}{p_1d'_1}\Big\}.
\end{align}

Players make a positive profit only if the price stays in the interval $(x^1_i,x^2_i)$, $i \in \{c,p\}$. The concavity of the profit functions implies that players have \emph{preferred} commodity levels $\bar{X}_p, \bar{X}_c$ that maximize their profit rates, namely:
\begin{align}
  \bar{X}_p := \frac{d_0 + c_p d_1}{2 d_1}, \qquad \bar{X}_c := -\frac{\gamma_1}{2 \gamma_2}.
\end{align}
Typically, we expect that $\bar{X}_c < \bar{X}_p$, so that the preferred commodity price of the consumer is lower than that of the producer. The stochastic fluctuations coming from $(W_t)$ can generate three different market conditions:
\begin{align*}
  X_t < \bar{X}_c &  \quad\text{I: abnormally low prices}; \\
  \bar{X}_c \le X_t \le \bar{X}_p &   \quad\text{II: vertical competition}; \\
\bar{X}_p  < X_t   &  \quad\text{III: abnormally high prices}.
\end{align*}

In the first and last cases, both players have the same preferences to raise or decrease $X_t$; in the intermediate case, they compete against each other. Because both players can in principle push $(X_t)$ in either direction, the market organization is influenced by their relative gain of doing so, as well as their action costs. In cases I and III, the players are in waiting mode because of the second--mover advantage, hoping that the other will act first which allows the second to benefit from the price effect without paying the cost of (dis-)investment or of switching. In case II they are in preemption mode, with the player who moves first being able to increase her profits at the expense of the other. These dynamic shifts between waiting and preemption is an important feature of vertical competition.

\paragraph{Objective functions and admissible strategies:} The objective functionals of the players consist of integrated profit rates $\pi_{\cdot}(x)$, discounted at constant rate $\beta >0$ and subtracting the control costs that are paid at respective intervention epochs. We take the investment cost function of the producer to be some convex function $K_p : \mathbb R \to \mathbb R$, and of the consumer as $H: \{\mu_-, \mu_+\} \to \mathbb R_+$. We denote the latter as $H(\mu_-) = h_-, H(\mu_+) = h_+$. Depending on the initial drift $\mu_0$ being positive/negative the producer's objective function is given by:
\begin{align}
J_p ^\pm (x; N,\mu) :=  \E\Big[ \int_0^\infty e^{- \beta \, t} \big( X_t - c_p) D_p(X_t) dt - \sum_i e^{-\beta \, \tau_i} K_p (\xi_{i}) \Big| \, \mu_0 = \mu^\pm, X_0 = x\Big],\end{align}
and, similarly, the representative consumer's objective function is:
\begin{align}
J_c ^\pm (x; N,\mu) := \E\Big[ \int_0^\infty e^{- \beta \, t} \big( \gamma_0 + \gamma_1 X_t + \gamma_2 X_t ^2)  dt - \sum_j e^{-\beta \, \sigma_j} H(\mu_{\sigma_j}) \Big| \, \mu_0 = \mu^\pm, X_0 = x \Big].\end{align}

In order for the state variable dynamics and players' expected payoffs to be well-defined we give the following definition of admissible strategies. To this end, let $(\Omega, \mathcal F, (\mathcal F_t)_{t \ge 0}, \mathbb P)$ be a probability space with a filtration satisfying the usual conditions and supporting an $(\mathcal F_t)_{t \ge 0}$-Brownian motion $(W_t)$. 

\begin{definition}[Admissible strategies] \label{def:adm}
We say that $(\tau_i, \xi_i)_{i \ge 1}$ is an admissible strategy for the producer if the following properties hold:\begin{enumerate}
\item $(\tau_i)_{i \ge 1}$ is a sequence of $[0,\infty]$-valued stopping times such that $0 \le \tau_1< \tau_2 < \cdots $ and $\lim_{i \to \infty} \tau_i =\infty$ a.s., with the convention that $\tau_i = \infty$ for some $i \ge 1$ implies $\tau_k = \infty$ for all $k \ge i$;
\item $(\xi_i)_{i \ge 1}$ is a sequence of real-valued $\mathcal F_{\tau_i}$-measurable random variables;
\item the sequence $(\tau_i , \xi_i)_{i \ge 1}$ satisfies $\sum_{i \ge 1} e^{-\beta \tau_i } \xi_i \in L^2 (\mathbb P)$.
\end{enumerate}
Similarly, we say that the sequence $(\sigma_j)_{j \ge 1} $ is an admissible strategy for the consumer if \begin{enumerate}
\item[4.] each $\sigma_j$ is a $[0,\infty]$-valued stopping time, $0 \le \sigma_1 < \sigma_2 < \cdots $, with the convention that $\sigma_j = \infty$ for some $j \ge 1$ implies $\sigma_k = \infty$ for all $k \ge j$;
\item[5.] $\sum_{j \ge 1} e^{-\beta \sigma_j} \in L^2 (\mathbb P)$.
\end{enumerate}
The set of all producer's (resp.~consumer's) admissible strategies is denoted by $\mathcal A_p$ (resp.~$\mathcal A_c$). 
\end{definition}

\begin{remark}
{\rm Observe that the property 1 above implies that the producer intervention times do not accumulate in finite time, so that for all $t >0$ the process $N_t = \sum_{i\ge 1} \xi_i \mathbf 1_{\{ \tau_i \le t \} }$, $t \ge 0$, is well-defined, adapted and finite-valued. Moreover, the integrability condition in 5 gives that $\sigma_j \to \infty$ (as $j \to \infty$), i.e.~the switching times of the consumer do not accumulate in finite time either, so that the dynamics of the controlled state variable \eqref{dyn-X} is well-defined too. Regarding the expected profits of the players, they are both finite due to integrability properties in 3 and 5 above.}
\end{remark}

\begin{remark}\label{rmk:priority}
{\rm According to the definition of admissibility above, neither player can intervene more than once at a time. However, simultaneous interventions coming from both of them are not excluded. As discussed, the dynamics of the intervention in upstream production is much faster than the switching of the consumption regime for final good. Thus, in case both players try to act simultaneously,  we assume that the producer has priority. 
This avoids  unnecessary technicalities and allows for a consistent modeling of the vertical competition.
}
\end{remark}

\subsection{Equilibrium}
Using this notion of admissible strategies, we give the definition of Nash equilibrium.

\begin{definition}[Nash equilibrium]
 A Nash equilibrium is any pair $((\xi_i, \tau_i)_{i \ge 1}, (\sigma_j)_{j \ge 1}) \in \mathcal A_p \times \mathcal A_c$ satisfying the following property:
\[ J_p ^\pm (x; N',\mu) \le J_p ^\pm (x; N,\mu), \qquad J_c ^\pm (x; N,\mu ') \le J_c ^\pm (x; N,\mu), \qquad \forall x \in \mathbb{R},\]
for any other pair of strategies $((\xi_i ', \tau'_i )_{i \ge 1},(\sigma'_j)_{j \ge 1}) \in \mathcal A_p \times \mathcal A_c$, where in the payoffs $J^- _r (x; \cdot)$, $r \in \{c,p\}$, above we have $N'_t = \sum_{i \ge 1} \xi'_i \mathbf 1_{\{ \tau'_i \le t \} }$ and $\mu'_t = \mu_+ \sum_{i=0}^\infty 1_{\{ \sigma'_{2i} \le t < \sigma'_{2i+1} \} } + \mu_-  \sum_{i=1}^\infty \mathbf 1_{\{ \sigma'_{2i-1} \le t < \sigma'_{2i} \} }$ for $t\ge 0$, $\mu' _{0-}=\mu_+$ and the convention $\sigma^\prime _0 =0$ (analogously in the other case $\mu'_{0-}=\mu_-$ by interchanging odd and even switching times).
\end{definition}

In line with the envisioned Markovian structure and in order to maximize tractability, we concentrate on a specific class of dynamic equilibria. Namely, we aim to
construct threshold-type Feedback Nash Equilibria which are of the form
\begin{align}\label{eq:imp-strategies}
\tau_0 = 0, \quad \tau_i = \inf \{ t > \tau_{i-1} : X_t \in \Gamma_p (t-) \}, \quad i \ge 1, \qquad \xi_i = \delta( X_{\tau_i}, \mu_{\tau_i-}),\end{align}
and \begin{align}\label{eq:switching-times}
\sigma_0 =0, \quad \sigma_j = \inf \{ t > \sigma_{j-1} : X_t \in \Gamma_c (t) \}, \quad j \ge 1,\end{align}
where \[ \Gamma_r (t) = \Gamma^+ _r \mathbf 1_{\{\mu_t = \mu_+\}} + \Gamma^- _r  \mathbf 1_{\{\mu_t = \mu_-\}}, \quad r\in \{c,p\}, \]
for some measurable function $\delta : \mathbb R \to \mathbb R$  and some suitable Borel sets $\Gamma_p ^\pm ,\Gamma_c ^\pm \subset \mathbb R$. Thus, \eqref{eq:imp-strategies}-\eqref{eq:switching-times} imply that players act based solely on the current price $(X_t)$ and demand regime $(\mu_t)$, ruling out history-dependent strategies, and moreover the strategies are characterized through fixed action regions $\Gamma_p^\pm, \Gamma_c^\pm$ and impulse maps $\delta(\cdot)$. We will denote by $\tau'_k$ the aggregated intervention times coming jointly from the two players. The fact that in \eqref{eq:imp-strategies} producer's intervention times $\tau_i$ are defined via $\Gamma_p (t-)$ translates the assumption that in case of simultaneous interventions, the producer plays first and so her thresholds naturally depend on the drift $\mu_{t-}$ just before her and consumer's actions (compare to Remark~\ref{rmk:priority}).

The action regions $\Gamma_p ^\pm ,\Gamma_c ^\pm$  are expected to be as follows. The impulse intervention region of the upstream production $\Gamma^\pm _p =(x^\pm _\ell, x^\pm _h)$ is two-sided: the producer will act whenever $X_t$ reaches $x^\pm _h$ from below or drops to $x^\pm _\ell$ from above. Note that these thresholds $x_\ell^\pm, x_h^\pm$ are $\mu$-dependent. On the consumption side, when $\mu_t = \mu_+$ (expansion regime),  the consumer will switch to $\mu_-$ if $X_t$ gets too high: $\Gamma^+_c = (y_h, \infty)$. Similarly when $\mu_t = \mu_-$ (contraction regime), she will switch to $\mu_+$ if $X_t$ gets too low $\Gamma^-_c = (-\infty, y_\ell)$. Finally, when the producer intervenes, he will bring $X_t$ to her impulse level $x^{\pm\ast}_{r}$ so that the impulse amount is $\xi^{\pm}_{r} = x^{\pm} _{r} -x^{\pm\ast}_{r}$.
The natural ordering we expect is the producer impulses towards $\bar{X}_p$
\begin{align}
x^{\pm}_\ell < x^{\pm\ast}_\ell  \quad \textrm{and} \quad  x^{\pm\ast}_h < x^{\pm} _h ,
\end{align}  and the consumer switches towards $\bar{X}_c$,
\begin{align}
y_\ell < \bar{X}_c < y_h ,\end{align}
so that when acting both players try to move $X$ towards their preferred levels. However, the precise ordering between  the impulse thresholds $x^{\pm}_r$ and the switching thresholds $y$'s is not clear a priori and will emerge as part of the overall equilibrium construction.

\subsection{Illustration of Competitive Dynamics}

To further understand the market evolution under competition of the producer and consumer, we focus on the case where both players are active. The producer's strategy is summarized via a $2\times 4$ matrix  $\mathcal{C}_p$ which lists the thresholds $\x_\ell,\x_h$ and the target levels $\xs_\ell,\xs_h$. Thus, the no-intervention regions are $[\x_\ell, \x_h]$ and impulse amounts are $\x_h-\xs_h, \xs_\ell-\x_\ell$:
\begin{align}
\mathcal{C}_p=\begin{bmatrix}
x^+_\ell, &  x^{+\ast}_\ell, & x^{+\ast}_h, & x^{+}_h\\
x^-_\ell, &  x^{-\ast}_\ell, & x^{-\ast}_h, & x^{-}_h\\
\end{bmatrix}.
\end{align}
The consumer has two switching thresholds $y_\ell, y_h$; in a typical setup we expect them to satisfy the following ordering
\begin{align}\label{eq:OrderInXY}
x^-_\ell<y_\ell<y_h<x_h^+.
\end{align}

Note that in the expansion regime (drift $\mu_+$), we assume that $y_h < x^+_h$. Therefore, coming from below, 
$X^\ast_t$ hits $y_h$ first, causing the consumer to switch into the contraction regime with drift $\mu_-$. As a result, the impulse threshold $x^+_h$ is not effective, i.e.~it will never get triggered along an equilibrium path of $(X_t)$. Similar argument implies that $x^-_\ell$ is \textit{not effective} either if $x^-_\ell<y_\ell$. In the left panel of Fig.~\ref{fig:illuBR} we illustrate such threshold-based vertical competition among the two players.
\begin{figure}[ht]
	\centering
	\includegraphics[height=2.2in,trim=0.4in 0.7in 0.4in 0.4in]{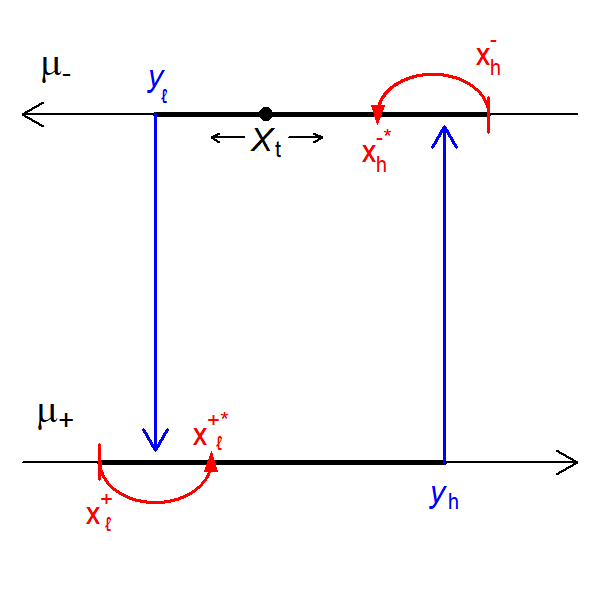}\hspace{10mm}
	\includegraphics[width=0.4\textwidth]{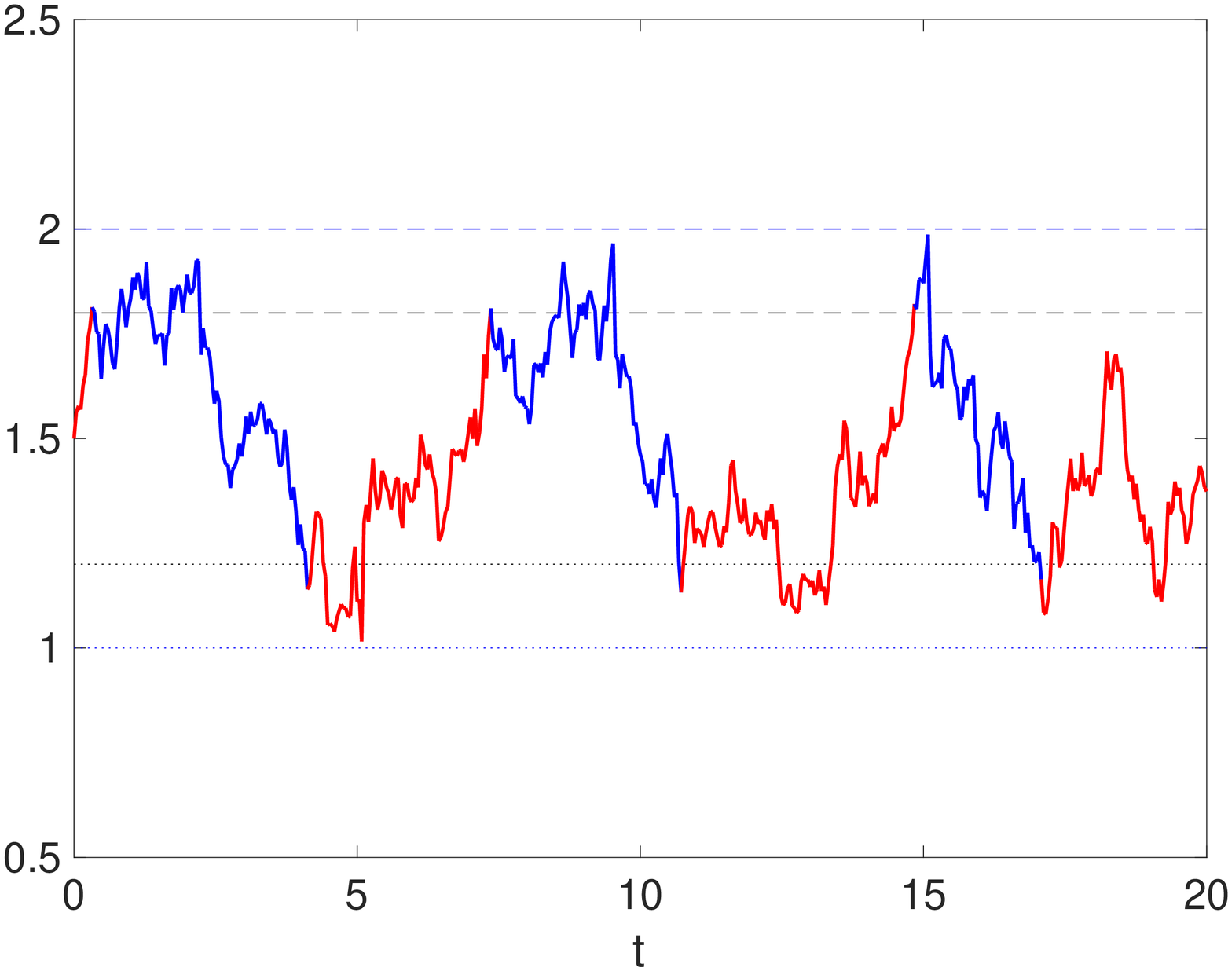}
	\caption{\emph{Left panel:} Dynamic competition between producer and consumer. The blue arrows represent drift-switching controls exercised by the consumer at levels $y_{\ell}$ and $y_h$, while the red curved arrows represent impulse controls exercised by the producer at levels $x_{\ell}^+, x^-_h$ that instantaneously push $X_t$ to $x_\ell^{+\ast}$ and $x^{-\ast}_h$ respectively. \emph{Right}: A sample path of the controlled commodity price $(X^\ast_t)$ under competitive equilibrium. Observe that $X^\ast_t \in [1,2]$ for all $t$.}\label{fig:illuBR}
\end{figure}

To illustrate competitive dynamics, the right panel of Fig.~\ref{fig:illuBR} shows a sample trajectory of $(X^\ast_t)$ (the superscript emphasizing the fact that we are now looking at equilibrium) with producer and consumer strategies
$$\mathcal{C}_p=\begin{bmatrix}
1.0, &1.3,& 1.7,& 2.0\\
1.0,& 1.3, & 1.7, & 2.0
\end{bmatrix}, \quad \quad (y_\ell, y_h)=(1.2, 1.8).
$$

According to the above discussion, the effective thresholds are $(x^+_\ell, y_h)$ when $\mu_t=\mu_+$, or $(y_\ell, x^-_h)$ when $\mu_t=\mu_-$. In other words, in the expansion regime, $(X_t)$ will be between $[1.0, 1.8]$  and in the contraction regime it will be between $[1.2, 2.0]$. In Fig.~\ref{fig:illuBR} (Right), we start in the contraction regime with  $X_0=1.5$ and $\mu_0=\mu_-$. On this trajectory, $(X^\ast_t)$ moves down until it touches the consumer's threshold $y_\ell$, where the consumer switches to a positive drift to draw the price up. Nevertheless, the price keeps decreasing and hits $x^+_\ell=1.0$, whereby the producer intervenes and pushes it to $x^{+\ast}_\ell=1.3$. Prices then continue to rise up to $y_h = 1.8$ at which point the consumer switches again and starts pushing them back down (supposedly she wishes to keep them somewhere around 1.5). This cyclic behavior continues ad infinitum, yielding a stationary distribution for the pair $(X^\ast_t, \mu^\ast_t)$. Note that the consumer uses her switching control to keep $X^\ast_t$ from going too high or too low, essentially cycling between $y_\ell$ and $y_h$. Indeed, starting at $X^\ast_t = y_\ell$, the consumer switches to expansion which causes prices to trend up; once they hit $y_h$ the consumer switches to contraction, causing prices to trend down. As a result, $\mu_t$ alternates between $\mu_+, \mu_-$ generating a mean-reverting behavior. Throughout, the producer acts as a ``back-up'', explicitly forcing prices from becoming extreme (namely from falling in the expansion regime, or rising in the contraction regime). These additional interventions by the producer make the domain of $(X^\ast_t)$ bounded.

It is also possible that, say, $x^+_h < y_h$ so that in the expansion regime the producer will act first both when $(X^\ast_t)$ falls (impulse threshold $x^+_\ell$) and when $(X^\ast_t)$ rises ($x^+_h$), making the consumer \emph{inactive}. In that case it is plain to see that the drift $\mu_t \equiv \mu_+$ will stay positive forever; $(X_t)$ will be forced to a bounded domain but will not have mean-reverting dynamics since the drift is constant. Instead, it will experience repeated impulses downward to counteract the upward trend due to ongoing consumption growth.

\section{Best--response functions} \label{sec:eqm}
To obtain a threshold-type Feedback Nash Equilibrium we view it as a fixed point of the producer and consumer \emph{best--response maps}. Therefore, our overall strategy is to (i) characterize threshold--type switching strategies for the consumer given a pre-specified, threshold--type behavior by the producer; (ii) characterize threshold--type impulse strategies for the producer who faces a pre-specified regime--switching behavior of $(X_t)$; (iii) employ t\^atonnement, i.e.~iteratively apply the best--response controls alternating between the two players to construct an \textit{interior, non-preemptive} equilibrium satisfying the ordering \eqref{eq:OrderInXY}.

To analyze best--response strategies, we utilize stochastic control theory, rephrasing the related dynamic optimization objectives through \emph{variational inequalities} (VI) for the jump--diffusion dynamics \eqref{dyn-X}. The competitor thresholds then act as boundary conditions in the VIs. To establish the desired equilibrium we need to verify that the best response is also of threshold-type and solves the expected systems of equations.
We note that all three pieces above are new and we have not been able to find precise analogues of the needed verification theorems in the extant literature. Nevertheless, they do build upon similar single--agent control formulations, so the overall technique is conceptually clear.

\subsection{Consumer Best--Response}
Fixing impulse thresholds $x^\pm_r$ ($r=h,l$), the consumer faces a two--state switching control problem on the bounded domain $(x^\pm_\ell, x^\pm_h)$. Namely, given a producer's impulse strategy $(\tau_i ,\xi_i)_{i \ge 1}$ with $\tau = \inf \{ t : X_t \notin [x^{\pm}_\ell, x^{\pm}_h ] \}$,
we expect the following stochastic representation for her value functions $w^\pm(x)$ with $x \in [x^{\pm}_\ell, x^{\pm}_h ]$
\begin{align}\label{eq:bellman}
w^\pm(x)  &= \sup_{ \sigma  \in\mathcal{T } }\E_{x,\pm}\Bigg[\int_0^{{\tau}\wedge \sigma}e^{-\beta t} \pi_c (X_t) dt  +e^{-\beta \underline{\tau}}\mathds{1}_{\{\tau < \sigma \}}
\Big(w^\pm(X_{\tau}-\xi)\Big) +e^{-\beta \underline{\tau}}\mathds{1}_{\{\tau > \sigma \}}\Big(w^\mp(X_{\sigma}) - h_\pm \Big)\Bigg],
\end{align}
where  $\E_{x,\pm}$ denotes expectation with respect to $\mu_t \in \{\mu_-, \mu_+\}$ and $h_{\pm}$ are the fixed intervention costs of the consumer.  The above is a system of two coupled equations, which locally resembles an optimal stopping problem with running payoff $\pi_c(\cdot)$, reward $w^\mp(\cdot)$ (last term), and stop--loss payoff (middle term) $w^\mp(\cdot)$ due to the producer impulse at $\tau$.
This is \emph{almost} the formulation as considered in \cite{ALL17} except with two modifications:
\begin{itemize}
  \item The domain is bounded on both sides (previously there was a one--sided stop--loss region).

  \item The boundary condition $w^+(x_\ell) = w^+(x^{+\ast}_\ell)$ is autonomous but nonlocal. 
  Therefore, the two stopping--type VIs for the consumer are coupled only through the free boundaries, not through the stop--loss thresholds as in \cite{ALL17}.
\end{itemize}

Now, given a producer strategy $\mathcal C_p$, if the consumer's response is such that $y_\ell < x^-_\ell$ and $x^+_h < y_h$, the consumer will be stuck forever in the initial regime because the price touches $ x^-_\ell$ before $y_\ell $ in the contraction regime and $x^+_h$ before $y_h$ in the expansion regime. In this case, the price will oscillate between $x^-_\ell$ and $x^-_h$ if the initial market is in the contraction regime, and between $x^+_\ell$ and $x^+_h$  in the expansion regime.

In the case where the consumer's response satisfies $y_\ell < x^-_\ell$ and $y_h < x^+_h$, depending on the initial state, the consumer will switch once to the expansion regime or will be stuck in the initial expansion regime. If the initial regime is $\mu_+$, the price will touch $y_h$, the regime will switch to contraction, the price will never touch $y_\ell$ and will oscillate between $x^- _\ell$ and $x^-_h$.  If the initial state is already $\mu_+$, no switch of regime will ever occur. The same reasoning applies for the symmetric case where $x^-_\ell < y_\ell$ and $x^+_h < y_h$.

Finally, if the consumer's response satisfies  $x^-_\ell < y_\ell$ and $y_h < x^+_h$, then whatever the initial regime, the state $(\mu_t)$ will switch many times between the two regimes.

The best--response of the consumer consists in picking the best response amongst the three possible ones above.
Thus, we distinguish three cases:
\begin{description}
  \item[(a) No-Switch:]  The consumer is completely inactive and simply collects her payoff based on the strategy $(\x_{\ell,h})$.

  \item[(b) Single-Switch:]  The consumer always prefers one regime to the other. Then she is inactive (like in case (a) above) in the preferred regime and faces an optimal stopping (since there is only a single switch to consider) problem in the other regime.

  \item[(c) Multiple-Switch:] The consumer switches back and forth between both regimes: the continuation region is $(y_\ell,y_h)$.
\end{description}

Proposition~\ref{prop:cons-inactive} provides the value function of the consumer in case (a). The system \eqref{eq:single-switch} characterizes the game payoff in case (b), and  Proposition~\ref{prop:cons-best-response} provides the value function of the consumer in case (c).

\subsubsection{No--switch} \label{ssec:cbr-zero}

Regardless of the consumer strategy, in the continuation region, a direct application of the Feynman--Kac formula  on \eqref{eq:bellman} shows that her value function solves  the following ordinary differential equation (ODE)
\begin{equation}\label{inaction-PDE} - \beta w + \mu_\pm w_x  + \frac{1}{2} \sigma^2 w_{xx}  + \pi_c(x) = 0.\end{equation}
Solving this inhomogeneous second-order ODE, we obtain
$w^{\pm}(x)=\Dd^{\pm}(x)+u^{\pm}(x)$, where letting $\theta_2 ^\pm <0< \theta_1^\pm$ be the two real roots of the quadratic equation $-\beta + \mu_\pm z + \frac{1}{2}\sigma^2 z^2 =0$, \begin{itemize}
\item $u^{\pm}(x)=\lambda^{\pm}_{1}e^{\theta^{\pm}_1 x}+\lambda^{\pm}_{2}e^{\theta^{\pm}_2 x}$ solves the homogeneous ODE $-\beta u + \mu_\pm u_x + \frac{1}{2}\sigma^2 u_{xx} =0$ and $\lambda^\pm_{i,0}$, $i=1,2$ are to be determined from appropriate boundary conditions;

\item $\Dd^{\pm}(x)$ is a particular solution to \eqref{inaction-PDE}, given by $$\Dd^\pm (x) = E x^2 + F_\pm x + G_\pm \qquad \text{ where }$$
\begin{align}
E & = \frac{\gamma_2}{\beta}, \quad F_\pm = \frac{1}{\beta}\Big( \gamma_1 +  2 \mu_\pm \frac{\gamma_2}{\beta}\Big), \quad G_\pm = \frac{1}{\beta} \big(\gamma_0 + \sigma^2 \frac{\gamma_2}{\beta} + \mu_\pm F_\pm \big).
\end{align}
\end{itemize}

When the consumer is inactive (denoted by $w^\pm_0$), the continuation region is $[x_\ell^\pm, x_h^\pm]$ with the boundary conditions at the impulse levels
\begin{align}\label{eq:bd_noswitch}
w^{\pm}_0(\x_r)=w^{\pm}_0(\xs_r), \qquad r\in\{\ell, h\}.
\end{align}
From \eqref{eq:bd_noswitch} the respective coefficients $\lambda^{\pm}_{1,0}, \lambda^{\pm}_{2,0}$ are solved from the  following uncoupled linear system:
\begin{align}
\lambda^{\pm}_{1,0}\cdot\big[e^{\theta^{\pm}_1 \x_\ell}-e^{\theta^{\pm}_1 \xs_\ell}\big]+\lambda^{\pm}_{2,0}\cdot\big[e^{\theta^{\pm}_2 \x_\ell}-e^{\theta^{\pm}_2 \xs_\ell}\big]=\Dd^\pm (\xs_\ell)-\Dd^\pm (\x_\ell), \label{inaction-eq1}\\
\lambda^{\pm}_{1,0}\cdot\big[e^{\theta^{\pm}_1 \x_h}-e^{\theta^{\pm}_1 \xs_h}\big]+\lambda^{\pm}_{2,0}\cdot\big[e^{\theta^{\pm}_2 \x_h}-e^{\theta^{\pm}_2 \xs_h}\big]=\Dd^\pm (\xs_h)-\Dd^\pm (\x_h).\label{inaction-eq2}
\end{align}
For $x > x^{\pm}_h$ we take $w^\pm_0(x) = w^\pm_0(x^{\pm*}_h)$ and similarly in the contraction regime,  we take $w^\pm_0(x) = w^\pm_0(x^{\pm*}_\ell)$ for $x<x^{\pm}_\ell$.

\begin{prop}\label{prop:cons-inactive}
Let $(\lambda^\pm _{1,0}, \lambda^\pm _{2,0}) \in \mathbb R^4$ be the solution to the system \eqref{inaction-eq1}-\eqref{inaction-eq2}. Then the functions $w^\pm _0 (x)$, $x \in [x_\ell ^\pm, x_h ^\pm]$, are the value functions for an inactive consumer, i.e.~$w_0 ^\pm (x) = J_c ^\pm (x; N, \mu^\pm)$, where $N$ is the producer impulse strategy associated with the thresholds $(x_\ell ^\pm , x_\ell ^{\pm \ast}; x_h ^\pm , x_h ^{\pm \ast})$ with $x_\ell ^\pm < x_h ^\pm$.
\end{prop}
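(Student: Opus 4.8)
The plan is to run a standard verification argument. Let $w^\pm_0$ be the candidate functions constructed from the solution $(\lambda^\pm_{1,0},\lambda^\pm_{2,0})$ of the linear system \eqref{inaction-eq1}--\eqref{inaction-eq2}, extended to all of $\mathbb R$ by the flat extensions $w^\pm_0(x) = w^\pm_0(x_h^{\pm\ast})$ for $x > x_h^\pm$ and $w^\pm_0(x) = w^\pm_0(x_\ell^{\pm\ast})$ for $x < x_\ell^\pm$. Since here the consumer never switches, the two equations are genuinely uncoupled and there is no supremum over $\sigma$ to take: one only has to show $w^\pm_0(x) = J^\pm_c(x; N, \mu^\pm)$ for the specified (fixed) producer strategy $N$. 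First I would record the regularity of $w^\pm_0$: by construction it solves the ODE \eqref{inaction-PDE} on the open interval $(x_\ell^\pm, x_h^\pm)$, it is $C^2$ there, continuous everywhere, and the flat pieces satisfy $-\beta w^\pm_0 + \pi_c \le 0$ outside $[x_\ell^\pm,x_h^\pm]$ precisely when the impulse thresholds sit outside the consumer's natural habitat $(x^1_c,x^2_c)$ — this monotonicity/sign check is where the hypothesis $x_\ell^\pm < x_h^\pm$ together with the concavity of $\pi_c$ enters, and I would state it as the one place requiring a short computation.

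Next I would apply It\^o's formula (in the form for jump diffusions, or equivalently the change-of-variables formula allowing for the finitely many impulse jumps of $N$ on any bounded time interval, guaranteed by admissibility property~1 of Definition~\ref{def:adm}) to $e^{-\beta t} w^{\mu_t}_0(X_t)$ between $0$ and $T \wedge \tau_n$, where $\tau_n$ is a localizing sequence (say the exit time of a compact set, or simply $\tau_n = n$ since the controlled state stays in $[x_\ell^\pm, x_h^\pm]$ here). Because $\mu_t \equiv \mu^\pm$ is constant in the no-switch scenario, the generator term is exactly $-\beta w^\pm_0 + \mu_\pm (w^\pm_0)_x + \tfrac12\sigma^2(w^\pm_0)_{xx}$, which equals $-\pi_c(X_t)$ on the interior by \eqref{inaction-PDE}, while at each producer intervention time $\tau_i$ the jump contributes $w^\pm_0(X_{\tau_i}) - w^\pm_0(X_{\tau_i-}) = w^\pm_0(x_r^{\pm\ast}) - w^\pm_0(x_r^\pm) = 0$ by the boundary conditions \eqref{eq:bd_noswitch} (with $r$ the side at which the price exited). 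The Brownian stochastic integral is a true martingale after localization because $w^\pm_0$ is bounded on $[x_\ell^\pm,x_h^\pm]$ and $\sigma$ is constant. Rearranging gives
\[
w^\pm_0(x) = \E_{x,\pm}\Big[\int_0^{T\wedge\tau_n} e^{-\beta t}\pi_c(X_t)\,dt\Big] + \E_{x,\pm}\big[e^{-\beta(T\wedge\tau_n)} w^\pm_0(X_{T\wedge\tau_n})\big].
\]

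Finally I would pass to the limit $n\to\infty$ then $T\to\infty$. The boundedness of $w^\pm_0$ on the invariant interval makes the second term vanish like $e^{-\beta T}$, and dominated convergence (using that $\pi_c$ is bounded on $[x_\ell^\pm,x_h^\pm]$, again thanks to the state remaining in this compact interval, and the discount factor $e^{-\beta t}$) handles the integral, yielding $w^\pm_0(x) = \E_{x,\pm}\big[\int_0^\infty e^{-\beta t}\pi_c(X_t)\,dt\big]$. Since the producer incurs costs but the consumer pays nothing and performs no switches, the right-hand side is exactly $J^\pm_c(x; N, \mu^\pm)$, completing the proof. The main obstacle — really the only non-bookkeeping point — is verifying that the flat extensions are consistent with the dynamics, i.e.\ that once $X$ is started in $[x_\ell^\pm, x_h^\pm]$ the producer's threshold strategy keeps it there so that the boundary region is never actually visited with positive drift away from it; this is immediate from the definition of the threshold strategy $N$, but it is worth spelling out because it is what licenses using the compactness of the state space for all the integrability and limiting arguments, and it is also why the precise placement of the impulse thresholds relative to $(x^1_c, x^2_c)$ does not affect this particular statement (it will matter in cases (b) and (c), not here).
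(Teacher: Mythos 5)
Your argument is correct and follows essentially the same route as the paper: a verification via It\^o's formula with localization, the jump terms cancelling by the boundary conditions \eqref{eq:bd_noswitch}, and the terminal term vanishing by dominated convergence since $X$ stays in the compact interval $[x_\ell^\pm, x_h^\pm]$. The only superfluous element is the sign check $-\beta w_0^\pm + \pi_c \le 0$ on the flat extensions, which (as you yourself note at the end) is not needed here because there is no optimization over consumer strategies in this proposition.
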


The role of $w^\pm_0(\cdot)$ is important for judging the other two cases, and moreover for deciding whether the  best--response ought to be of threshold--type.

\subsubsection{Single--switch}\label{ssec:cbr-one}

We next consider the situation where the payoff in the expansion regime is higher than the contraction one for any price $x$, so that the consumer is never incentivized to switch to the contraction regime. We then expect the consumer's corresponding best--response to be either a single--switch strategy (to the preferred regime) or no--switch (if already there).
Economically, this corresponds to $y_h > x_h^{+}$ so that as the price rises, the producer impulses $(X_t)$ down, and the consumer is not intervening to decrease her demand. As a result, the consumer never switches (except perhaps the first time from negative to positive drift) and $\lim_{t\to \infty} \mu_t = \mu_+$. This can be observed when demand switching is very expensive, so that the producer has full market power and is able to keep prices consistently low. The consumer is forced to be in the expansion regime forever and she is not able to influence $(X_t)$.

Suppose that the consumer prefers expansion regime ($\mu_t=\mu_+$) and adopts threshold-type strategies. Given $\mathcal{C}_p$, her strategy is summarized by
\begin{align*}
y_\ell > x^-_\ell, \qquad y_h=+\infty,
\end{align*}
and the resulting contraction--regime value function $w^-$ should be a solution to the variational inequality

\begin{align}\label{eq:qvi-cons-one-sided}
\sup\big\{-\beta w^- +\mu_- w^-_x+\frac{1}{2}\sigma^2 w^-_{xx}+ \pi_c ;\, w^+_0-h_--w^-\big\}=0,
\end{align}
where $w^+_0$ is from Proposition~\ref{prop:cons-inactive} and the continuation region is $[y_\ell, x_h^-]$. This is a standard optimal stopping problem. Note that while the above equation for $w^-$ depends on $w^+_0$, the equation for $w^+_0$ is autonomous---the system of equations becomes decoupled because the two regimes of $(\mu_t)$ no longer communicate.

To solve \eqref{eq:qvi-cons-one-sided} we posit that her best--response is of the form
	\begin{align}
	w^-(x)&=\begin{cases}
	w^+_0(x)-h_-, & x\leq y_\ell,\\
	\Dd^- (x)+\lambda^- _1e^{\theta^- _1x}+\lambda^- _2e^{\theta^- _2x}, &  y_\ell <  x < x^-_h,\\
	w^-(x^{-\ast}_h), & x^-_h \leq x,
	\end{cases}
	\end{align}

with the smooth pasting and boundary conditions:
\begin{align}\label{eq:single-switch}
\begin{cases}
\Dd^-(y_\ell)+\lambda^-_1e^{\theta^-_1y_\ell}+\lambda^-_2e^{\theta^-_2y_\ell}=\Dd^+(y_\ell)+\lambda^+_{1,0}e^{\theta^+_1y_\ell}+\lambda^+_{2,0}e^{\theta^+_2y_\ell}-h_-, & (\mathcal{C}^0\text{ at }y_\ell)\\
\Dd^-(x^-_h)+\lambda^-_1e^{\theta^-_1x^-_h}+\lambda^-_2e^{\theta^-_2x^-_h}=\Dd^-(x^{-\ast}_h)+\lambda^-_1e^{\theta^-_1x^{-\ast}_h}+\lambda^-_2e^{\theta^-_2x^{-\ast}_h}, & (\mathcal{C}^0\text{ at }x^-_h)\\
\Dd^-_x(y_\ell)+\lambda^-_1\theta^-_1e^{\theta^-_1y_\ell}+\lambda^-_2\theta^-_2e^{\theta^-_2y_\ell}=\Dd^+_x(y_\ell)+\lambda^+_{1,0}\theta^+_1e^{\theta^+_1y_\ell}+\lambda^+_{2,0}\theta^+_2e^{\theta^+_2y_\ell}. & (\mathcal{C}^1\text{ at }y_\ell)
\end{cases}
\end{align}

The system \eqref{eq:single-switch} is to be solved for the three unknowns $y_\ell, \lambda^-_{1}, \lambda^-_2$, while $\lambda^+_{1,0},\lambda^+_{2,0}$ are the coefficients of the consumer's payoff associated to the no-switch strategy in the $\mu_+$ regime, see previous subsection. We can re-write it as
first solving for $\lambda^{-}_{1,2}$ from the linear system
\begin{align}
\begin{bmatrix}
e^{\theta^-_1y_\ell} & e^{\theta^-_2y_\ell} \\
e^{\theta^-_1 x^-_h}-e^{\theta^-_1 x^{-\ast}_h} & e^{\theta^-_2 x^-_h}-e^{\theta^-_2 x^{-\ast}_h}
\end{bmatrix}
\cdot
\begin{bmatrix}
\lambda^-_1\\
\lambda^-_2
\end{bmatrix}
=\begin{bmatrix}
w^+_0(y_\ell)-\Dd^-(y_\ell)-h_-\\
\Dd^-(x^{-\ast}_h)-\Dd^-(x^{-}_h)
\end{bmatrix}
\end{align}
and then determining $y_\ell$ from the smooth pasting $\cC^1$-regularity \begin{align}
w^-_x(y_\ell)=w^+_{0,x}(y_\ell).
\end{align}
The case of a single--switch from expansion to contraction regime can be treated analogously in a symmetric way.

\subsubsection{Double--switch}
Finally, we consider the main case where the consumer adopts \textit{threshold--type} switches, i.e.~the ordering in \eqref{eq:OrderInXY}  holds. Given $\mathcal{C}_p$, the $w^\pm$ are then supposed to be a solution to the coupled variational inequalities

\begin{align} \label{eq:QVI-cons1}
\sup\big\{-\beta w^+ + \mu_+w^+_x+\frac{1}{2}\sigma^2 w^+_{xx}+\pi_c;\, \max\{w^--h_+, w^+ \}-w^+\big\}&=0,\\
\sup\big\{-\beta w^- + \mu_-w^-_x+\frac{1}{2}\sigma^2 w^-_{xx}+\pi_c;\, \max\{w^+-h_-, w^-\}-w^-\big\}&=0,\label{eq:QVI-cons2}
\end{align}
where we expect continuation regions of the form $(x_\ell^+, y_h)$ and $(y_\ell, x_h ^-)$.
To set up a verification argument for the consumer's best--response  we make the ansatz
\begin{subequations}\label{eq:Con_DS_payoff}
\begin{align}
w^+(x)&=\begin{cases}
w^+(x^{+\ast}_\ell), & x\leq x^+_\ell,\\
\Dd^+(x)+\lambda^+_1e^{\theta^+_1x}+\lambda^+_2e^{\theta^+_2x}, & x^+ _\ell < x < y_h,\\
w^-(x)-h_+, & x \geq y_h,
\end{cases}\\
w^-(x)&=\begin{cases}
w^+(x)-h_-, & x \leq y_\ell,\\
\Dd^-(x)+\lambda^-_1e^{\theta^-_1x}+\lambda^-_2e^{\theta^-_2x}, & y_\ell < x < x^-_h ,\\
w^-(x^{-\ast}_h), & x\geq x^-_h .
\end{cases}
\end{align}
\end{subequations}
This yields 6 equations:
\begin{align}\label{smooth-pasting}
\begin{cases}
\Dd^+(y_\ell)+\lambda^+_1e^{\theta^+_1y_\ell}+\lambda^+_2e^{\theta^+_2y_\ell}  -h_- =\Dd^-(y_\ell)+\lambda^-_1e^{\theta^-_1y_\ell}+\lambda^-_2e^{\theta^-_2y_\ell}, &(\mathcal{C}^0\text{ at } y_\ell)\\
\Dd^+(x^+_\ell)+\lambda^+_1e^{\theta^+_1x^+_\ell}+\lambda^+_2e^{\theta^+_2x^+_\ell}=\Dd^+(x^{+\ast}_\ell)+\lambda^+_1e^{\theta^+_1x^{+\ast}_\ell}+\lambda^+_2e^{\theta^+_2x^{+\ast}_\ell}, & (\mathcal{C}^0\text{ at }x^+_\ell)\\
\Dd^-(y_h)+\lambda^-_1e^{\theta^-_1y_h}+\lambda^-_2e^{\theta^-_2y_h}  -h_+=\Dd^+(y_h)+\lambda^+_1e^{\theta^+_1y_h}+\lambda^+_2e^{\theta^+_2y_h} , &(\mathcal{C}^0\text{ at } y_h)\\
\Dd^-(x^{-}_h)+\lambda^-_1e^{\theta^-_1x^{-}_h}+\lambda^-_2e^{\theta^-_2x^{-}_h}=\Dd^-(x^{-\ast}_h)+\lambda^-_1e^{\theta^-_1x^{-\ast}_h}+\lambda^-_2e^{\theta^-_2x^{-\ast}_h}, & (\mathcal{C}^0\text{ at }x^-_h)\\
\Dd^+_x(y_\ell)+\lambda^+_1\theta^+_1e^{\theta^+_1y_\ell}+\lambda^+_2\theta^+_2e^{\theta^+_2y_\ell}=\Dd^-_x(y_\ell)+\lambda^-_1\theta^-_1e^{\theta^-_1y_\ell}+\lambda^-_2\theta^-_2e^{\theta^-_2y_\ell}, &(\mathcal{C}^1\text{ at } y_\ell)\\
\Dd^-_x(y_h)+\lambda^-_1\theta^-_1e^{\theta^-_1y_h}+\lambda^-_2\theta^-_2e^{\theta^-_2y_h}=\Dd^+_x(y_h)+\lambda^+_1\theta^+_1e^{\theta^+_1y_h}+\lambda^+_2\theta^+_2e^{\theta^+_2y_h}. &(\mathcal{C}^1\text{ at } y_h)\\
\end{cases}
\end{align}
The six equations can be split into a linear system for the four coefficients $\lambda^\pm_{1,2}$'s
\begin{align}\label{eq:linearSysBR_consumer}
\begin{bmatrix}
e^{\theta^+_1y_\ell} & e^{\theta^+_2y_\ell} &  -e^{\theta^-_1y_\ell} & -e^{\theta^-_2y_\ell} \\
e^{\theta^+_1 x^+_\ell}-e^{\theta^+_1 x^{+\ast}_\ell} & e^{\theta^+_2 x^+_\ell}-e^{\theta^+_2 x^{+\ast}_\ell} & 0 & 0 \\
-e^{\theta^+_1y_h} & -e^{\theta^+_2y_h} & e^{\theta^-_1y_h} & e^{\theta^-_2y_h} \\
0 & 0 & e^{\theta^-_1 x^-_h}-e^{\theta^-_1 x^{-\ast}_h} & e^{\theta^-_2 x^-_h}-e^{\theta^-_2 x^{-\ast}_h}
\end{bmatrix}
\cdot
\begin{bmatrix}
\lambda^+_1\\
\lambda^+_2\\
\lambda^-_1\\
\lambda^-_2
\end{bmatrix}
=\begin{bmatrix}
\Dd^-(y_\ell)-\Dd^+(y_\ell)-h_+\\
\Dd^+(x^{+\ast}_\ell)-\Dd^+(x^{+}_\ell)\\
\Dd^+(y_h)-\Dd^-(y_h)-h_-\\
\Dd^-(x^{-\ast}_h)-\Dd^-(x^{-}_h)
\end{bmatrix}
\end{align}
and the smooth-pasting conditions determining the two switching thresholds $y_{\ell,h}$ (viewed as free boundaries)
\begin{align}\label{eq:smooth-consumer}
w^+_x(y_r)=w^-_x(y_r), \qquad r\in\{\ell, h\}.
\end{align}
\begin{prop}\label{prop:cons-best-response} Let the $6$-tuple $(\lambda^\pm _1, \lambda^\pm _2, y_h, y_\ell)$ be a solution to the system \eqref{eq:linearSysBR_consumer}-\eqref{eq:smooth-consumer} such that the order in \eqref{eq:OrderInXY} is fulfilled. Then, the functions defined in \eqref{eq:Con_DS_payoff} give the best--response payoffs of consumer, and a best--response strategy is given by $(\hat \sigma_i)_{i \ge 1}$, where 
\[ \hat \sigma_{0} =0, \quad \hat \sigma_i = \inf \left\{ t > \hat \sigma_{i-1} : X_t \in \Gamma_c (t) \right\}, \quad i \ge 1,\]
with $\Gamma_c ^+ = [y_\ell ,+\infty)$ and $\Gamma_c ^- = (-\infty, y_h]$.
\end{prop}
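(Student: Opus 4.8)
The plan is to prove Proposition~\ref{prop:cons-best-response} as a verification theorem, adapting the single--agent optimal--switching verification of \cite{ALL17} to the present coupled, bounded--state setting in which the underlying process is itself impulsed by the (fixed) producer. Write $\mathcal{L}^{\pm}:=-\beta+\mu_{\pm}\partial_x+\tfrac12\sigma^2\partial_{xx}$ and let $w^{\pm}$ be the functions of \eqref{eq:Con_DS_payoff} built from a solution $(\lambda^{\pm}_1,\lambda^{\pm}_2,y_h,y_\ell)$ of \eqref{eq:linearSysBR_consumer}--\eqref{eq:smooth-consumer} that fulfils \eqref{eq:OrderInXY}. I would first record the regularity of this candidate: on each branch $w^{\pm}$ is smooth; the $\mathcal{C}^0$/$\mathcal{C}^1$ equations in \eqref{smooth-pasting} glue the branches into $C^1(\mathbb{R})$ functions across the free boundaries $y_\ell,y_h$, while across the \emph{effective} producer thresholds $x^+_\ell$ (for $w^+$) and $x^-_h$ (for $w^-$) only continuity is imposed --- as it should be, since the jump of $w$ there is to be cancelled by the producer's impulse through the pastings $w^+(x^+_\ell)=w^+(x^{+\ast}_\ell)$, $w^-(x^-_h)=w^-(x^{-\ast}_h)$, not by smoothness. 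Being made of exponentials and a quadratic on a bounded interval and constant outside it, $w^{\pm}$ are bounded, which will later supply the transversality condition.

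The heart of the argument is to verify that $w^{\pm}$ solve the coupled variational inequalities \eqref{eq:QVI-cons1}--\eqref{eq:QVI-cons2} on all of $\mathbb{R}$. Inside the continuation regions $(x^+_\ell,y_h)$ and $(y_\ell,x^-_h)$ the equalities $\mathcal{L}^{\pm}w^{\pm}+\pi_c=0$ hold by construction ($\Dd^{\pm}$ is a particular solution of the inhomogeneous ODE and $\lambda^{\pm}_1e^{\theta^{\pm}_1\cdot}+\lambda^{\pm}_2e^{\theta^{\pm}_2\cdot}$ a homogeneous one). It then remains to prove (i) the obstacle inequalities $w^--w^+\le h_+$ and $w^+-w^-\le h_-$ everywhere, with equality on $\{x\ge y_h\}$ and on $\{x\le y_\ell\}$ respectively; and (ii) the subsolution inequalities $\mathcal{L}^{\pm}w^{\pm}+\pi_c\le 0$ outside those continuation regions. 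For (i) I would use that the $\mathcal{C}^1$--pasting at $y_\ell$ and $y_h$ makes $w^+$ and $w^-+\text{const}$ meet tangentially there, and propagate each inequality away from its contact point from the explicit exponential--plus--quadratic form of the branches, the sign $\gamma_2<0$ (concavity of $\pi_c$), and the ordering $y_\ell<\bar{X}_c<y_h$. For (ii), on $\{x\ge y_h\}$ one substitutes $w^+=w^--h_+$ and uses $\mathcal{L}^+w^-=\mathcal{L}^-w^-+(\mu_+-\mu_-)w^-_x=-\pi_c+(\mu_+-\mu_-)w^-_x$ on $[y_h,x^-_h]$, so the claim reduces to the quantitative bound $(\mu_+-\mu_-)w^-_x+\beta h_+\le 0$ there (and symmetrically on $[x^+_\ell,y_\ell]$), while on the flat pieces beyond $x^-_h$ and below $x^+_\ell$, where $\mathcal{L}w=-\beta w+\pi_c$, one invokes the concavity of $\pi_c$ and the boundary value. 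Finally one must check that at the \emph{non--effective} producer thresholds --- $x^+_h$ in the expansion regime and $x^-_\ell$ in the contraction regime, which by \eqref{eq:OrderInXY} lie in the flat pieces of $w^+$ respectively $w^-$ --- a producer impulse induces a non--positive jump of $w$, so that even a deviating consumer who never switches and lets $(X_t)$ reach those thresholds cannot break the supermartingale property below. I expect this verification of the coupled VI --- specifically the bound in (ii) and the bookkeeping for the non--effective thresholds --- to be the main obstacle; the remaining steps are routine.

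Given the VI, I would run the verification argument proper. Fix the producer threshold strategy $N$ associated with $\mathcal{C}_p$ and an arbitrary $(\sigma_j)_{j\ge1}\in\mathcal{A}_c$, yielding the controlled pair $(X_t,\mu_t)$ started at $(x,\mu^{\pm})$; apply It\^o's formula to $t\mapsto e^{-\beta t}w^{\mu_t}(X_t)$ between consecutive event times, add $\int_0^te^{-\beta s}\pi_c(X_s)\,ds$ and subtract the discounted consumer switching costs. The absolutely continuous part has drift $\mathcal{L}^{\mu_s}w^{\mu_s}(X_s)+\pi_c(X_s)\le 0$ by Step~(ii); the jumps at producer interventions are controlled because, by \eqref{eq:OrderInXY} and the priority convention of Remark~\ref{rmk:priority}, the producer impulses either at $x^+_\ell$ in regime $\mu_+$ or at $x^-_h$ in regime $\mu_-$ (zero jump, by the $\mathcal{C}^0$--pastings) or, for a deviating consumer, at the non--effective thresholds $x^+_h,x^-_\ell$ (non--positive jump, by Step~(ii)); the jumps at consumer switches are non--positive by the obstacle inequalities of Step~(i) together with the paid costs $h_\pm$. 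Localising to discard the local--martingale part, letting $t\to\infty$, and using boundedness of $w^{\pm}$ (so $e^{-\beta t}\E[w^{\mu_t}(X_t)]\to 0$) together with the integrability $\sum_j e^{-\beta\sigma_j}\in L^2(\mathbb{P})$ from admissibility, one obtains $J_c^{\pm}(x;N,\mu)\le w^{\pm}(x)$. For the strategy $(\hat\sigma_i)_{i\ge1}$ of the statement all these inequalities become equalities: between switches $(X_t)$ stays in a continuation region so the drift term vanishes identically, the producer jumps still vanish, and each $\hat\sigma_i$ occurs with $X_{\hat\sigma_i}\in\{y_\ell,y_h\}$ where the obstacle is tight --- hence $w^{\pm}(x)=J_c^{\pm}(x;N,\hat\mu)$. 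It remains only to observe that $(\hat\sigma_i)$ is admissible, which follows since, by the strong Markov property, the successive sojourns of $(X_t)$ between the switching levels have epoch--independent laws with strictly positive means, so the $\hat\sigma_i$ do not accumulate and $\sum_i e^{-\beta\hat\sigma_i}\in L^2(\mathbb{P})$. Combining, $w^{\pm}$ are the best--response payoffs and $(\hat\sigma_i)$ is a best--response strategy.
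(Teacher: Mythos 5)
Your proposal follows essentially the same route as the paper's proof: establish that $w^\pm$ solve the coupled variational inequalities \eqref{eq:QVI-cons1}--\eqref{eq:QVI-cons2}, run an It\^o-based verification between consecutive switching times of an arbitrary admissible consumer strategy (with the producer's impulse jumps cancelled by the $\mathcal{C}^0$-pastings, the consumer's switching jumps controlled by the obstacle, and equalities throughout for $(\hat\sigma_i)$), and check admissibility of $(\hat\sigma_i)$ by decomposing the intervention times into independent exit times of drifted Brownian motions so that $\sum_i e^{-\beta\hat\sigma_i}\in L^2(\mathbb{P})$ via a geometric-series bound. The one point where you go beyond the paper is in flagging that the obstacle and subsolution inequalities of the VI --- e.g.\ your reduction of $\mathcal{L}^+w^+ +\pi_c\le 0$ on $[y_h,x^-_h]$ to the quantitative bound $(\mu_+-\mu_-)w^-_x+\beta h_+\le 0$, and the sign of the jump of $w^+$ at the non-effective threshold $x^+_h$ reachable under a deviating no-switch strategy --- do not follow automatically from the pasting system; the paper asserts the VI holds ``by construction'' and takes those jumps to vanish, so your sketch correctly isolates work that the published proof leaves implicit, although you, too, leave it unfinished.
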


Figure~\ref{fig:consBR012A} illustrates the shapes of the consumer's value function in the different case of best--response. For the strategy given, we have a dominant function in the contraction regime ($w^-_0$) and a dominant function in the expansion regime ($w_0^+$).

\begin{figure}[ht]
	\centering
	\subfigure[][ $\mathcal{C}_p= \begin{bmatrix} 1.2, & 2.5, & 2.6, & 4.2  \\1.5, & 3.0,& 2.8, & 4.5  \end{bmatrix}$]{
  \includegraphics[width=0.35\textwidth]{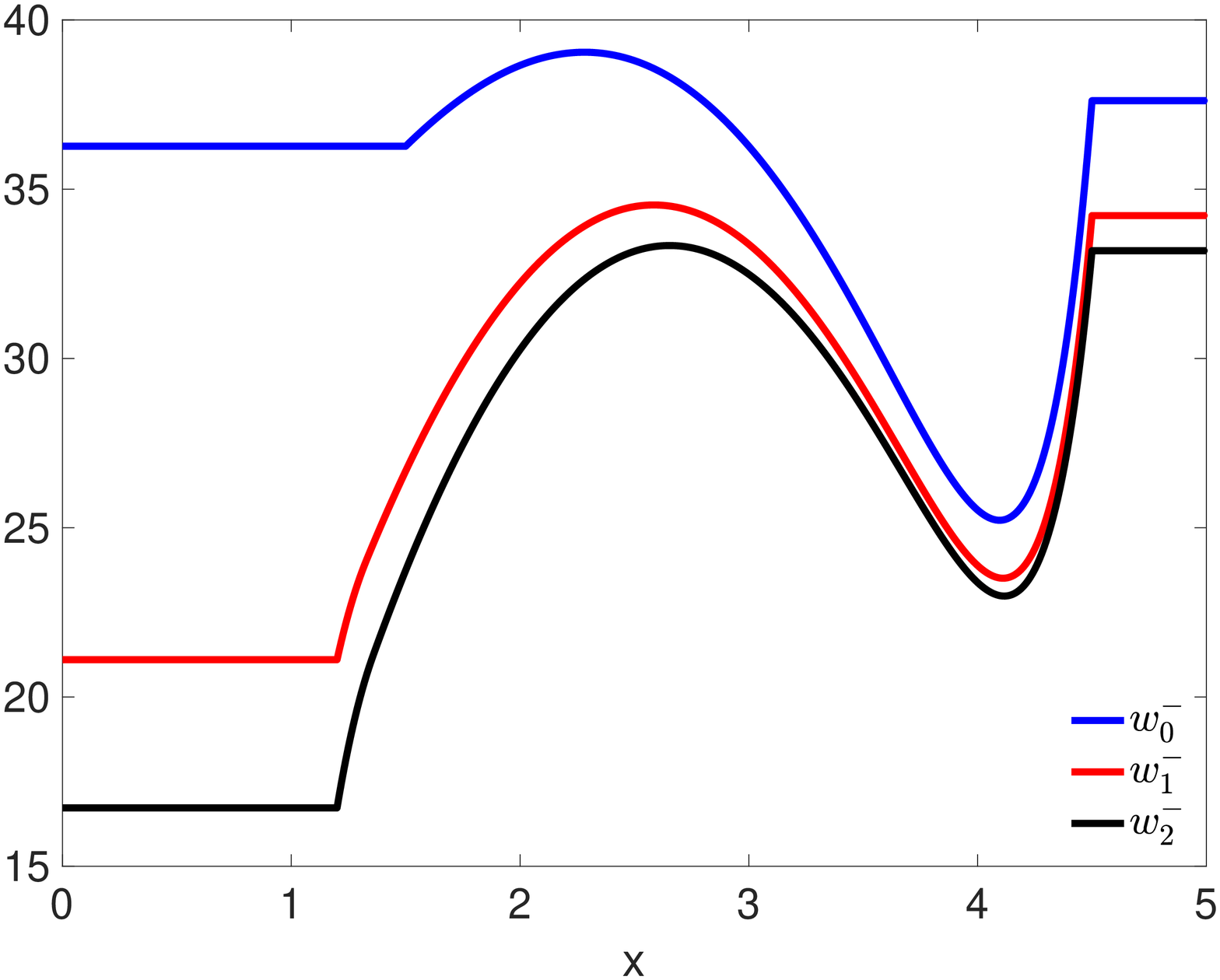}  \includegraphics[width=0.35\textwidth]{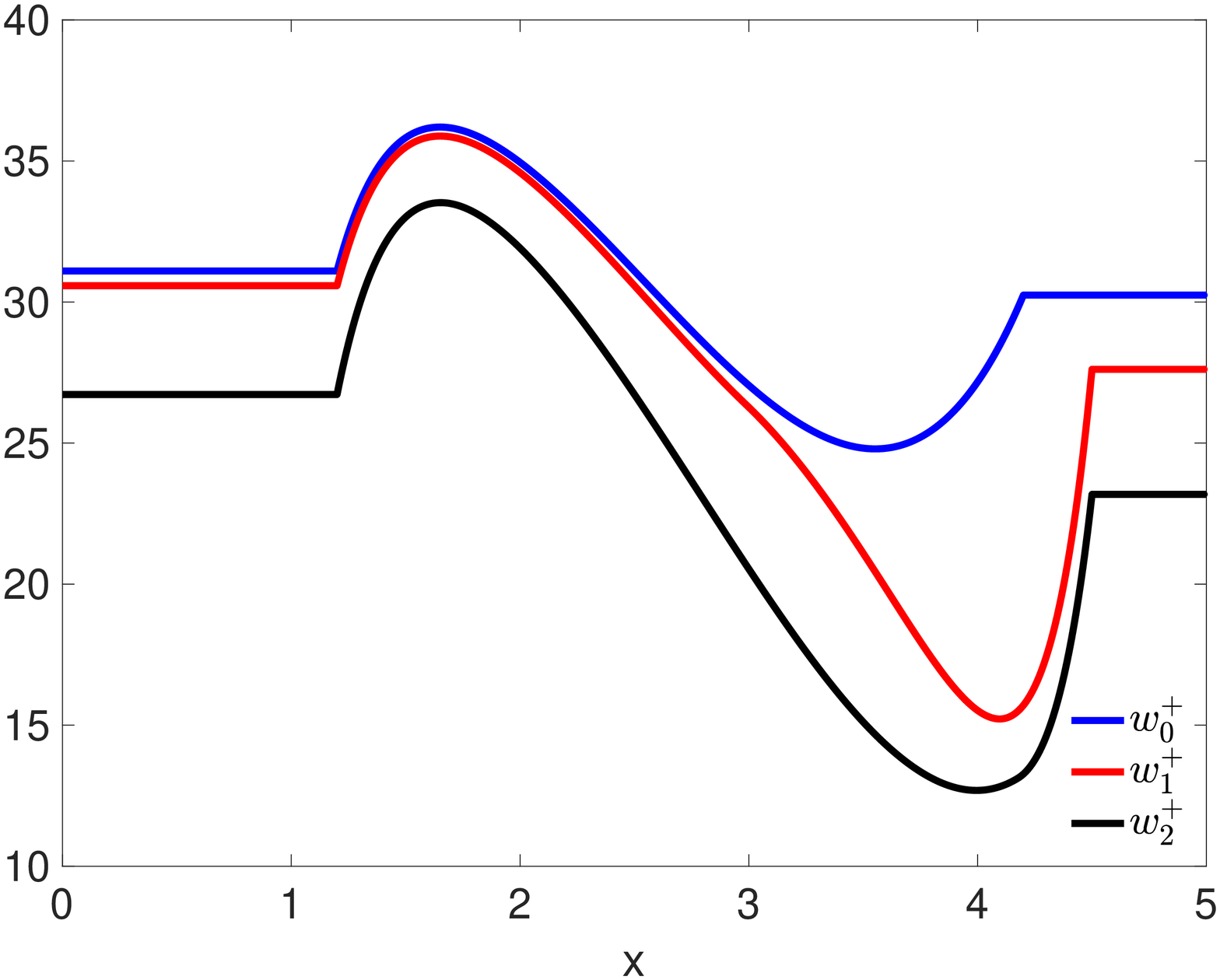}}
	\caption{Value functions $w^\pm_0$, $w^\pm_1$ and $w^\pm_2$ of the consumer given the producer's strategy (a). }
	\label{fig:consBR012A}
\end{figure}

\noindent {\bf Remark}: For comparison purposes, it is also useful to know the continuation region of the consumer when she \emph{alone} controls the market price $(X_t)$. As usual, this region is $(-\infty,y_h)$ in the expansion regime and $(y_\ell, +\infty)$ in the contraction regime, with the natural ordering $y_\ell < y_h$. The value functions $w^\pm$ satisfy:
\begin{align} \label{eq:QVI-cons-alone}
\sup\big\{ -\beta w^+ + \mu_+ w^+_x +\frac{1}{2}\sigma^2 w^+_{xx} + \pi_c;\, w^- - h_+ \big\}&=0,\\
\sup\big\{ -\beta w^- + \mu_- w^-_x    + \frac{1}{2}\sigma^2 w^-_{xx} + \pi_c; \,  w^+ - h_-  \big\}&=0. \label{eq:QVI-cons-alone-2}
\end{align}

To set up a verification argument for the consumer's best--response  we make the ansatz
\begin{subequations}\label{eq:w-cons-alone}
\begin{align}
w^+(x) & = \begin{cases}	
		w^-(y_h) - h_+, 	& x \geq y_h, \\
		 \Dd^+(x) + \lambda_{1,0}^+ e^{\theta_1^+ x} + \lambda_{2,0}^+ e^{\theta_2^+ x}, 	& x < y_h,
		\end{cases}	\\
w^-(x) & = \begin{cases}
		 \Dd^-(x) + \lambda_{1,0}^- e^{\theta_1^- x} + \lambda_{2,0}^- e^{\theta_2^- x},	 		& x > y_\ell,\\
		w^+(y_\ell) -h_-,		 	& x \leq  y_\ell .
		\end{cases}
\end{align}
\end{subequations}
Furthermore, in the expansion regime, to keep $w^+(x)$ bounded as $x \to -\infty$ we must have $\lambda_{2,0}^+ =0$ because $\theta_2^+<0$. In the contraction regime, a similar argument gives $\lambda_{1,0}^- = 0$. 
We are left with the four unknowns $y_\ell, y_h$ and $\lambda_{1,0}^p$ and $\lambda_{2,0}^-$ determined from the following smooth pasting conditions:
\begin{align} \label{eq:cons-alone-paste}
\begin{cases}
\Dd^-(y_\ell) + \lambda^-_{2,0} e^{\theta^-_2 y_\ell}  =
 \Dd^+(y_\ell)   + \lambda_{1,0}^+ e^{\theta_1^+ y_\ell} - h_-,
&(\mathcal{C}^0\text{ at } y_\ell)\\
 \Dd^+(y_h)  + \lambda^+_{1,0} e^{\theta^+_1 y_h}
=
 \Dd^-(y_h) + \lambda_{2,0}^- e^{\theta_2^- y_h}   - h_+,
&(\mathcal{C}^0\text{ at } y_h)	\\
\Dd^-_{x}(y_\ell) + \lambda^-_{2,0} \theta^-_2 e^{\theta^-_2 y_\ell}
=
\Dd^+_{x}(y_\ell)   + \lambda^+_{1,0} \theta^+_1 e^{\theta^+_1 y_\ell},
&(\mathcal{C}^1\text{ at } y_\ell)\\
\Dd^+_{x}(y_h)   + \lambda^+_{1,0} \theta^+_1 e^{\theta^+_1 y_h}
=
\Dd^-_{x}(y_h) + \lambda^-_{2,0} \theta^-_2 e^{\theta^-_2 y_h}.
&(\mathcal{C}^1\text{ at } y_h)\\
\end{cases}
\end{align}$\hfill\Box$

\subsection{Producer Best Response}
We now consider the best--response of the producer, given the consumer's switching strategy denoted by $\mathcal{C}_c:=[y_\ell, y_h]$. Once again, we face three Cases:
\begin{enumerate}
  \item The producer is a monopolist, i.e.~the consumer is completely inactive;
  \item The consumer adopts a single--switch strategy;
  \item The consumer adopts a double--switch strategy.
\end{enumerate}

\subsubsection{Producer as Sole Optimizer}
To begin with, we determine the monopoly-like strategy of the producer assuming the consumer adopts a no--switch strategy. In that case $\mu_t$ is constant throughout and the functions $v^{\pm}$ of the producer satisfy the variational inequality (VI):
\begin{align}
\sup \Big\{ - \beta v^\pm + \mu_{\pm} v^\pm_x + \frac{1}{2} \sigma^2 v^\pm_{xx}  + \pi_p  \; ,
                                            \sup_{\xi} \big\{ v^\pm(\cdot + \xi) - v^\pm(\cdot) - K_p (\xi) \big\} \Big\} & = 0.
\end{align}
Note that the two VIs for $v^+$ and $v^-$ are autonomous, hence uncoupled from each other.
In the continuation region,  the general solution of the ODE
$$ - \beta v + \mu_{\pm} v_x + \frac{1}{2} \sigma^2 v_{xx} +  \pi_p(x) = 0$$
is of the form $v^\pm (x)$ $=$ $\widehat v^\pm (x) + u^\pm (x)$, where $u^\pm = \nu^\pm_1e^{\theta^\pm_1x}+\nu^\pm_2e^{\theta^\pm_2x}$, with $\theta^\pm _1, \theta_2 ^\pm$ as before, satisfies the homogenous ODE $- \beta u +\mu_\pm u_x + \frac{1}{2} \sigma^2 u_{xx} = 0$, and $\widehat v^\pm (x)$ is a particular solution given by
\begin{equation} \label{prod-val} \widehat v^\pm (x) = A x^2 + B_\pm x + C_\pm , \end{equation}
where the coefficients $A, B_\pm , C_\pm$ are identified as:
\begin{align*}
A  & = -\frac{d_1}{\beta}, \quad B_\pm = \frac{1}{\beta}\Big( d_0 - \frac{2 \, \mu_\pm \, d_1}{\beta} + c_p \, d_1\Big), \quad C_\pm = \frac{1}{\beta} \big(\mu_\pm B_\pm + A \sigma^2 - c_p d_0 \big).
\end{align*}

Assuming the producer adopts threshold--type impulse strategies defined by $\xi^\ast(x)$ in the intervention region, her expected payoff is of the form:
\begin{align}\label{eq:values-monopoly-prod}
v^\pm(x)&=\begin{cases}
v^\pm(\xs_h)-K_p(\xi^\ast(x)) & x\geq \x_h,\\ 
\Dq^\pm(x)+\nu^\pm_1e^{\theta^\pm_1x}+\nu^\pm_2e^{\theta^\pm_2x}, & \x_\ell<x<\x_h,\\
v^\pm(\xs_\ell)-K_p( \xi^\ast(x))  & x\leq \x_\ell. 
\end{cases}
\end{align}

When applying the optimal impulse $\xi^{\pm \ast}(x)$ at the threshold $x_{r}^\pm, r = \ell,h$, the producer brings $X_t$ back to the price level $x^{\pm\ast}_{r}$ $:=$ $x^\pm _{r} - \xi^{\pm \ast}(x^\pm _{r})$.
For optimality, the  respective impulse amounts satisfy the first order conditions
\begin{align}\label{prod-foc}
v_x^\pm(\xs_h) &= -\partial_\xi K_p(\xi^\ast(x^\pm_h)), \qquad 
v_x^\pm(\xs_\ell) =-\partial_\xi K_p(\xi^\ast(x^\pm_\ell)). 
\end{align}
We reinterpret the above as the equation to be satisfied by $\xi^\ast(x_r^{\pm})$ which are treated temporarily as unknowns and plugged into further equations.
To ensure that the value function is continuous at $x^\pm _{r}$ we further need
	\begin{align}
	v^\pm (x^\pm _{r}) & = v^\pm (x^{\pm \ast}_{r}) - K_p (\xi^{\pm \ast} _r). \label{prod-cont}
	\end{align}
Finally,  making the hypothesis that the value function is differentiable at the borders of the intervention region, we have:
\begin{subequations}\label{prod-diff}
	\begin{align}
	v^\pm _x(x^\pm _{\ell}) & = v^\pm _x(x^{\pm \ast}_{\ell})   -  \partial_\xi K_p(\xi^\ast(x^\pm_\ell)) , \label{prod-diff1} \\
	v^\pm _x(x^\pm _{h})   & =  v^\pm _x(x^{\pm \ast}_{h})  - \partial_\xi K_p(\xi^\ast(x^\pm_h)).\label{prod-diff2}
	\end{align}
\end{subequations}

We consider two cases of impulse costs:  (i) constant $K_p(\xi) =\kappa_0$ and (ii) linear $K_p(\xi) = \kappa_0 + \kappa_1 |\xi|$. In case (i), because the impulse cost is independent of the intervention amount there will be an optimal impulse level $\xs_r$ so that for any $x$ in the intervention region the strategy is to impulse back to $\xs_r$ which is the same at the two thresholds. In case (ii), $\partial_\xi K_p = \pm\kappa_1$ and
all the smooth pasting and boundary conditions can be gathered in the following system:
\begin{align}\label{eq:prodMonoSys}
\begin{cases}
\Dq^\pm(\x_h)+\nu^\pm_1e^{\theta^\pm_1\x_h}+\nu^\pm_2e^{\theta^\pm_2\x_h}=\Dq^\pm(\xs_h)+\nu^\pm_1e^{\theta^\pm_1\xs_h}+\nu^\pm_2e^{\theta^\pm_2\xs_h}-\kappa_0-\kappa_1(\x_h-\xs_h), & (\mathcal{C}^0\text{ at }\x_h)\\
\Dq^\pm(\x_\ell)+\nu^\pm_1e^{\theta^\pm_1\x_\ell}+\nu^\pm_2e^{\theta^\pm_2\x_\ell}=\Dq^\pm(\xs_\ell)+\nu^\pm_1e^{\theta^\pm_1\xs_\ell}+\nu^\pm_2e^{\theta^\pm_2\xs_\ell}-\kappa_0-\kappa_1(\xs_\ell-\x_\ell), & (\mathcal{C}^0\text{ at }\x_\ell)\\
\Dq^\pm_x(\xs_h)+\nu^\pm_1\theta^\pm_1e^{\theta^\pm_1\xs_h}+\nu^\pm_2\theta^\pm_2e^{\theta^\pm_2\xs_h}=-\kappa_1& (\mathcal{C}^1\text{ at }\xs_h)\\
\Dq^\pm_x(\xs_\ell)+\nu^\pm_1\theta^\pm_1e^{\theta^\pm_1\xs_\ell}+\nu^\pm_2\theta^\pm_2e^{\theta^\pm_2\xs_\ell}=\kappa_1, & (\mathcal{C}^1\text{ at }\xs_\ell)\\
\Dq^\pm_x(\x_h)+\nu^\pm_1\theta^\pm_1e^{\theta^\pm_1\x_h}+\nu^\pm_2\theta^\pm_2e^{\theta^\pm_2\x_h}=\Dq^\pm_x(\xs_h)+\nu^\pm_1\theta^\pm_1e^{\theta^\pm_1\xs_h}+\nu^\pm_2\theta^\pm_2e^{\theta^\pm_2\xs_h}-\kappa_1, & (\mathcal{C}^1\text{ at }\x_h)\\
\Dq^\pm_x(\x_\ell)+\nu^\pm_1\theta^\pm_1e^{\theta^\pm_1\x_\ell}+\nu^\pm_2\theta^\pm_2e^{\theta^\pm_2\x_\ell}=\Dq^\pm_x(\xs_\ell)+\nu^\pm_1\theta^\pm_1e^{\theta^\pm_1\xs_\ell}+\nu^\pm_2\theta^\pm_2e^{\theta^\pm_2\xs_\ell}+\kappa_1. & (\mathcal{C}^1\text{ at }\x_\ell)
\end{cases}
\end{align}
Note that there are two uncoupled linear systems for $v^+$ and $v^-$. The $\cC^0$ conditions are from \eqref{prod-cont}, the first two $\cC^1$ conditions are from \eqref{prod-foc} which determines the optimal impulse destination, and the last two $\cC^1$ conditions are from \eqref{prod-diff}. 

By a standard verification argument, one can show that if both systems above admit solutions $\nu^\pm _{1,2}$ and $x^\pm _{\ell,h}$, where the latter satisfy the order condition $x_\ell ^\pm < x_h ^\pm$, then the functions $v^\pm (x)$ as in \eqref{eq:values-monopoly-prod} are the value functions of the producer and his optimal strategies are given by the thresholds $x_{\ell,h}^\pm$ and impulse amounts $\xi^{\ast}(x_{\ell,h}^{\pm \ast})$. This can be done by following exactly the arguments in, e.g., \cite{CLP10} (see also their Remark 2.1), which are very standard in the literature of impulse control problems. Therefore, details are omitted.

\subsubsection{Non-preemptive Response}
Suppose the following ordering, which is similar to \eqref{eq:OrderInXY}, holds:
\begin{equation}\label{order-best-resp-prod} x_\ell ^\pm < y_l < y_h < x_h ^\pm .\end{equation}
We then expect  $v^\pm$ to solve the VIs
\begin{align}\label{eq:QVI-prod-nonpree}
\begin{cases}
\sup\big\{-\beta v^+ +\mu_+ v^+_x+\frac{1}{2}\sigma^2v^+_{xx} + \pi_p\; ;\; \sup_\xi (v^+(\cdot-\xi)-v^+-K_p(\xi))\big\}=0,\\
\sup\big\{-\beta v^- + \mu_- v^-_x+\frac{1}{2}\sigma^2v^-_{xx} + \pi_p\; ;\; \sup_\xi (v^-(\cdot-\xi)-v^--K_p(\xi))\big\}=0.\\
\end{cases}
\end{align}
To obtain the producer best--response it suffices to identify the two active impulse thresholds  $x^+_\ell,x^-_h$ and the respective target levels $x^{+\ast}_\ell, x^{-\ast}_h$. The other two boundary conditions take place at the consumer thresholds $y_\ell, y_h$, so that the strategy (see~\eqref{eq:DS_OneSide} below) is
$\mathcal{C}_p=\begin{bmatrix}
x^+_\ell,& x^{+\ast}_\ell, & -,&+\infty\\
-\infty, & - ,& x^{-\ast}_h,& x^{-}_h
\end{bmatrix}.$
The game coupling shows up in the additional \emph{boundary} condition that when the consumer switches, the producer's value is unaffected:
\begin{equation}\label{cons-on-prod} v^+ ( y) = v^- ( y), \qquad y \in (-\infty, y_\ell] \cup [y_h, +\infty).\end{equation}
Accordingly, our ansatz is
\begin{subequations}\label{eq:DS_OneSide}
\begin{align}
v^-(x)&=\begin{cases}
v^-(x^{-\ast}_h)-K_p( \xi^\ast( x)), & x\geq x^-_h,\\
\Dq^-(x)+\nu^-_1e^{\theta^-_1x}+\nu^-_2e^{\theta^-_2x}, & y_\ell<x<x^-_h,\\
v^+(x), & x\leq y_\ell,
\end{cases}\\
v^+(x)&=\begin{cases}
v^-(x), & x\geq y_h,\\
\Dq^+(x)+\nu^+_1e^{\theta^+_1x}+\nu^+_2e^{\theta^+_2x}, & x^+_\ell<x<y_h,\\
v^+(x^{+\ast}_\ell)-K_p( \xi^\ast(x)), & x\leq x^+_\ell.
\end{cases}
\end{align}
\end{subequations}
To simplify the presentation, let us concentrate on the proportional impulse costs $K_p(\xi) = \kappa_0 + \kappa_1 |\xi|$. We have the smooth pasting $\mathcal{C}^1$ and boundary conditions: 
\begin{align}\label{eq:prod-sys}
\begin{cases}
\Dq^+(y_\ell)+\nu^+_1e^{\theta^+_1y_\ell}+\nu^+_2e^{\theta^+_2y_\ell}=\Dq^-(y_\ell)+\nu^-_1e^{\theta^-_1y_\ell}+\nu^-_2e^{\theta^-_2y_\ell}, & (\mathcal{C}^0\text{ at }y_\ell)\\
\Dq^-(y_h)+\nu^-_1e^{\theta^-_1y_h}+\nu^-_2e^{\theta^-_2y_h}=\Dq^+(y_h)+\nu^+_1e^{\theta^+_1y_h}+\nu^+_2e^{\theta^+_2y_h}, & (\mathcal{C}^0\text{ at }y_h)\\
\Dq^+(x^+_\ell)+\nu^+_1e^{\theta^+_1x^+_\ell}+\nu^+_2e^{\theta^+_2x^+_\ell}=\Dq^+(x^{+\ast}_\ell)+\nu^+_1e^{\theta^+_1x^{+\ast}_\ell}+\nu^+_2e^{\theta^+_2x^{+\ast}_\ell}-K_p( \xi^\ast( x^+_\ell)), & (\mathcal{C}^0\text{ at }x^+_\ell)\\
\Dq^-(x^{-}_h)+\nu^-_1e^{\theta^-_1x^{-}_h}+\nu^-_2e^{\theta^-_2x^{-}_h}=\Dq^-(x^{-\ast}_h)+\nu^-_1e^{\theta^-_1x^{-\ast}_h}+\nu^-_2e^{\theta^-_2x^{-\ast}_h}-K_p(\xi^\ast( x^-_h)), &(\mathcal{C}^0\text{ at }x^-_h)\\
\Dq^+_x(x^+_\ell)+\nu^+_1\theta^+_1e^{\theta^+_1x^+_\ell}+\nu^+_2\theta^+_2e^{\theta^+_2x^+_\ell}=\Dq^+_x(x^{+\ast}_\ell)+\nu^+_1\theta^+_1e^{\theta^+_1x^{+\ast}_\ell}+\nu^+_2\theta^+_2e^{\theta^+_2x^{+\ast}_\ell} {-\kappa_1}, & (\mathcal{C}^1\text{ at }x^+_\ell)\\
\Dq^-_x(x^{-}_h)+\nu^-_1\theta^-_1e^{\theta^-_1x^{-}_h}+\nu^-_2\theta^-_2e^{\theta^-_2x^{-}_h}=\Dq^-_x(x^{-\ast}_h)+\nu^-_1\theta^-_1e^{\theta^-_1x^{-\ast}_h}+\nu^-_2\theta^-_2e^{\theta^-_2x^{-\ast}_h} {+\kappa_1}. &(\mathcal{C}^1\text{ at }x^-_h)\\
\Dq^+_x(x_\ell^{+\ast})+\nu^+_1\theta^+_1e^{\theta^+_1x_\ell^{+\ast}}+\nu^+_2\theta^+_2e^{\theta^+_2 x_\ell^{+\ast}} = - \kappa_1& {(\mathcal{C}^1\text{ at } x_\ell^{+\ast}) }\\
\Dq^-_x(x_h^{-\ast})+\nu^-_1\theta^-_1e^{\theta^-_1 x_h^{-\ast}}+\nu^-_2\theta^-_2e^{\theta^-_2 x_h^{-\ast}} = \kappa_1, & {(\mathcal{C}^1\text{ at } x_h^{-\ast})}
\end{cases}
\end{align}
Unlike the single--agent setting \eqref{eq:prodMonoSys}, the equations~\eqref{eq:prod-sys} are coupled. The coefficients $\nu^{\pm}_{1,2}$ are the solution to the linear system
\begin{align}
\begin{bmatrix}
e^{\theta^+_1y_\ell} & e^{\theta^+_2y_\ell} &  -e^{\theta^-_1y_\ell} & -e^{\theta^-_2y_\ell} \\
e^{\theta^+_1 x^+_\ell}-e^{\theta^+_1 x^{+\ast}_\ell} & e^{\theta^+_2 x^+_\ell}-e^{\theta^+_2 x^{+\ast}_\ell} & 0 & 0 \\
-e^{\theta^+_1y_h} & -e^{\theta^+_2y_h} & e^{\theta^-_1y_h} & e^{\theta^-_2y_h} \\
0 & 0 & e^{\theta^-_1 x^-_h}-e^{\theta^-_1 x^{-\ast}_h} & e^{\theta^-_2 x^-_h}-e^{\theta^-_2 x^{-\ast}_h}
\end{bmatrix}
\cdot
\begin{bmatrix}
\nu^+_1\\
\nu^+_2\\
\nu^-_1\\
\nu^-_2
\end{bmatrix}
=\begin{bmatrix}
\Dq^-(y_\ell)-\Dq^+(y_\ell)\\
\Dq^+(x^{+\ast}_\ell)-\Dq^+(x^{+}_\ell)-K_p\\
\Dq^+(y_h)-\Dq^-(y_h)\\
\Dq^-(x^{-\ast}_h)-\Dq^-(x^{-}_h)-K_p
\end{bmatrix}
\end{align}
and the thresholds $x_h^+, x_\ell^-$ are determined by the $\cC^1$ smooth--pasting (recall that $x^{-\ast}_h = x^{-}_h -\xi^\ast(x^-_h)$, $x^{+\ast}_\ell = x^{+}_\ell -\xi^\ast(x^+_\ell)$):
\begin{align}
\begin{cases}
v^-_x(x^-_h)=v^-_x(x^{-\ast}_h),\\
v^+_x(x^+_\ell)=v^+_x(x^{+\ast}_\ell),
\end{cases}
\end{align}
and the first order conditions (FOCs) giving the optimal impulses:
\begin{align}
v_x^-(x_h^{-\ast}) = -\partial_\xi K_p(\xi^\ast(x^-_h)) \qquad v_x^+(x_\ell^{+\ast}) &= - \partial_\xi K_p(\xi^\ast(x^+_\ell)).
\end{align}

\begin{prop}\label{prop:prod-best-response1} Let the $8$-tuple $(\nu^\pm _1, \nu^\pm _2, x^+_h, x^- _\ell, x^{+*}_h, x^{-*} _\ell)$ be a solution to the system \eqref{eq:prod-sys}, such that the order in \eqref{order-best-resp-prod} is fulfilled and $x^+ _\ell < x_\ell ^{+*}, x_h ^{-*} < x_h ^-$. Let $v^\pm$ be defined in \eqref{eq:DS_OneSide} and assume
\begin{equation}\label{eq:SOC}
v_{xx} ^+(x_\ell ^{+*})  < 0, 
 \qquad v_{xx} ^- (x_h ^{-*}) < 0. 
\end{equation}
Then the functions $v^\pm$ are the best--response payoffs of the producer, and a best--response strategy is given by
\begin{align}
\tau^* _0 = 0, & \quad \tau^* _i = \inf \left\{ t > \tau^* _{i-1} : X^\ast _t \in \Gamma_p (t-) \right\}, \\
 \xi^*_i (x_\ell ^+) & = x_\ell ^{+*} - x_\ell ^+ , \qquad \xi^*_i (x_h ^-) = x_h ^{-} - x_h ^{-*} , \qquad i \ge 1,\end{align}
with $\Gamma_p (t) = \Gamma^+ _p \mathbf 1_{\{\mu_t = \mu_+\}} + \Gamma^- _p  \mathbf 1_{\{\mu_t = \mu_-\}}$, where $\Gamma_p ^+ = (-\infty, x^+_\ell ]$ and $\Gamma_p ^- = [x^- _h , +\infty)$, while $(X^*_t)$ follows the dynamics corresponding to the consumer's strategy $(\sigma_i)_{i \ge 1}$ and the producer's impulse strategy $(\tau^* _i , \xi_i ^*)_{i \ge 1}$.  \end{prop}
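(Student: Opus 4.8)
The plan is to run a verification argument of the type that is standard for impulse-control problems (in the spirit of \cite{CLP10}), adapted to the fact that, from the producer's standpoint, the drift of $(X_t)$ toggles exogenously at the consumer thresholds $y_\ell,y_h$. Throughout, write $\mathcal L^\pm := -\beta + \mu_\pm\partial_x + \frac{1}{2}\sigma^2\partial_{xx}$ and $\mathcal M v(x):=\sup_{\xi}\{v(x-\xi)-K_p(\xi)\}$. First I would record the regularity of the candidate pair $v^\pm$ of \eqref{eq:DS_OneSide}: by construction $v^\pm$ solves $\mathcal L^\pm v^\pm+\pi_p=0$ on its continuation interval, it is $C^1$ across the active impulse threshold $x^+_\ell$ (resp.\ $x^-_h$) by the smooth-pasting equations and first-order conditions in \eqref{eq:prod-sys}, and it is continuous across the switching boundaries since $v^+(y)=v^-(y)$ for $y\in(-\infty,y_\ell]\cup[y_h,\infty)$ --- this is precisely the game-coupling condition \eqref{cons-on-prod}. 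I would stress that \emph{no} $C^1$-matching at $y_\ell,y_h$ is needed: in each regime the controlled process never stays strictly beyond the corresponding switching boundary (it is switched out there by the consumer, possibly immediately after a producer impulse, using the priority rule of Remark~\ref{rmk:priority}), so along any admissible trajectory the function $v^{\mu_t}(\cdot)$ is evaluated only where it is $C^1$ and piecewise $C^2$, which suffices for It\^o's formula.

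Second, I would verify that $(v^+,v^-)$ is a supersolution of the quasi-variational inequalities \eqref{eq:QVI-prod-nonpree} on the region actually visited in each regime, i.e.\ $\mathcal L^\pm v^\pm + \pi_p\le 0$ together with $v^\pm\ge\mathcal M v^\pm$. Equality $\mathcal L^\pm v^\pm+\pi_p=0$ is built into the continuation intervals, and in the impulse regions, where $v^\pm(x)=v^\pm(x^{\pm\ast}_r)-K_p(\xi^\ast(x))$, the inequality $\mathcal L^\pm v^\pm+\pi_p\le0$ follows from the explicit form of the proportional cost $K_p$ and of the particular solution $\widehat v^\pm$. The delicate part --- and the main obstacle --- is the global obstacle inequality $v^\pm\ge\mathcal M v^\pm$: the first-order conditions identify the critical impulse $\xi^\ast$, the assumed second-order conditions \eqref{eq:SOC} make it a strict local maximiser, the fixed component $\kappa_0>0$ of the cost excludes small impulses, and the monotonicity/concavity of the explicit $v^\pm$ is used to rule out profitable impulses (in either direction) away from the thresholds; at the thresholds the $\mathcal C^0$-equations of \eqref{eq:prod-sys} give equality. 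This step cannot be done by a soft argument; it rests on the closed-form structure of the ansatz and on the order conditions \eqref{order-best-resp-prod}, $x^+_\ell<x^{+\ast}_\ell$, $x^{-\ast}_h<x^-_h$ that guarantee nonempty continuation regions and impulses pointing in the right direction.

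Third, with the supersolution property in hand, I would fix an arbitrary admissible producer strategy $(\tau_i,\xi_i)_{i\ge1}$ (the consumer playing her fixed threshold strategy $\mathcal C_c=[y_\ell,y_h]$) and apply It\^o/Dynkin to $e^{-\beta t}v^{\mu_t}(X^{(\mu,N)}_t)$ over successive aggregated intervention epochs $\tau'_k$. Between epochs the drift term is $\le -e^{-\beta t}\pi_p(X_t)$ by the VI; at a producer epoch the jump of $v$ is $\le -e^{-\beta\tau_i}K_p(\xi_i)$ by $v^\pm\ge\mathcal M v^\pm$; at a consumer switch $v$ is continuous by \eqref{cons-on-prod}, with no loss (consistent with the producer's payoff not charging switching costs). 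Summing over $k$, taking expectations --- the stochastic integral being a true martingale thanks to the quadratic growth of $v^\pm$ and the $L^2$-integrability requirements of Definition~\ref{def:adm} --- and letting the horizon tend to infinity via the transversality estimate $e^{-\beta T}\,\E[\,|v^{\mu_T}(X_T)|\,]\to0$ (again from quadratic growth and moment bounds on $(X_t)$ following from admissibility), yields $v^\pm(x)\ge J_p^\pm(x;N,\mu)$.

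Fourth, I would run the same computation with the proposed strategy $(\tau^\ast_i,\xi^\ast_i)_{i\ge1}$ and the associated path $(X^\ast_t)$. Every inequality now becomes an equality: the ODE holds exactly in the continuation region, the continuity-with-cost conditions \eqref{prod-cont} give equality at the producer's impulses, and the FOC together with \eqref{eq:SOC} ensure $\xi^\ast$ realises the supremum in $\mathcal M v^\pm$. Hence $v^\pm(x)=J_p^\pm(x;N^\ast,\mu)$, which simultaneously shows that $v^\pm$ are the best-response payoffs and that the stated threshold/impulse strategy is a best response. Since the mechanics of the It\^o estimates are routine once the supersolution property is established, the only genuinely new work is the QVI verification in the second step, i.e.\ checking the obstacle inequality globally in the presence of the exogenous regime switches.
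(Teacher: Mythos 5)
Your outline reproduces the paper's verification argument for the inequality $v^\pm \ge J_p^\pm$ and the equality for the candidate strategy, but it omits an entire half of what the proposition requires: you never show that the proposed strategy $(\tau_i^*,\xi_i^*)_{i\ge1}$ is \emph{admissible} in the sense of Definition~\ref{def:adm}. Without membership in $\mathcal A_p$ the equality $v^\pm=J_p^\pm(\cdot\,;N^*,\mu)$ does not certify a best response (the candidate is not even a legal competitor), and in fact the equality computation itself already uses admissibility (non-accumulation of intervention times, convergence of the sums of discounted costs). The non-trivial point is condition~3 of Definition~\ref{def:adm}, $\sum_{i\ge1}e^{-\beta\tau_i^*}\xi_i^*\in L^2(\mathbb P)$: the paper proves it by representing the aggregated intervention times $\tau'_k$ of both players as sums of $k$ independent random variables, each distributed as an exit time of a drifted Brownian motion from one of the intervals $(-\infty,y_h)$, $(y_\ell,+\infty)$, $(-\infty,x_h^-)$, $(x_\ell^+,+\infty)$ started from the corresponding post-intervention point, whence $\mathbb E[\sum_k e^{-2\beta\tau'_k}]$ is dominated by a convergent geometric series. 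The strict separations $x_\ell^+<x_\ell^{+*}$, $x_h^{-*}<x_h^-$ and \eqref{order-best-resp-prod} are exactly what make each of these exit times strictly positive, so this step is where those hypotheses earn their keep; it cannot be waved away.

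On the parts you do address, your route is essentially the paper's, with two local deviations. For the lack of $\C^1$-regularity of $v^+$ at $y_h$ (and of $v^-$ at $y_\ell$) the paper mollifies, invoking the approximation procedure of \cite[Theorem 3.3]{ABCCV20} to reduce to the $\C^2$ case, whereas you argue that the diffusive part of the path never crosses those points in the relevant regime; that can be made rigorous (only continuity of $v^\pm$ is needed at the jump and switch epochs) but needs to be stated carefully for an \emph{arbitrary} admissible producer strategy, under which impulses may land beyond $y_h$ or $y_\ell$ before the consumer's switch takes effect. Conversely, you are more explicit than the paper about verifying the global obstacle inequality $v^\pm\ge\mathcal M v^\pm$ and the inequality $\mathcal L^\pm v^\pm+\pi_p\le0$ in the action regions, which the paper subsumes under ``by construction''; flagging that as the substantive analytic check is reasonable, but it does not substitute for the missing admissibility argument above.
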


\subsubsection{Preemptive Response} \label{ssec:pbr-pre}
It is possible that  the static discounted future profit of the producer satisfies, say, $v^+(x) \geq v^-(x)$ for any $x$, so that he always prefers expansion regime to contraction regime.

In that case, the consumer switching at $y_h$ from expansion to contraction hurts the producer and
one possible strategy for him is to \textit{preempt} in order to prevent the consumer from switching the drift to $\mu_-$. This situation could be viewed as  looking for best $x^+_h < y_h$, given $y_h$. In the latter case the constrained solution could be $x^+_h = y_h-$, whereby the system  \eqref{eq:prodMonoSys} does not hold and the best--response is to impulse $(X_t)$ right before it hits $y_h$, $x^+_h = y_h-$. This strategy is not well-defined (i.e.~the supremum is not achieved on the open interval $(x_\ell^+, y_h)$), but the resulting preemptive best--response value in the $\mu_+$ regime can be obtained by using the ansatz (where we slightly abuse the notation to write $x_h^{+\ast} = y_h - \xi^\ast(y_h)$ for the target impulse level at $y_h$)
	\begin{align}\label{eq:prod-payoff-2}
	v^+(x)&=\begin{cases}
	v^+(x^{+\ast}_h)-K_p(\xi^\ast(x)), & x\geq y_h,\\
	\Dq^+(x)+\nu^+_1e^{\theta^+_1x}+\nu^+_2e^{\theta^+_2x}, & x^+_\ell < x < y_h,\\
	v^+(x^{+\ast}_\ell)-K_p(\xi^\ast(x)), & x\leq x^+_\ell,
	\end{cases}
	\end{align}
and the boundary conditions for determining the target impulse levels
\begin{align}
v_x^+(x^{+\ast}_h) &= - \kappa_1, \quad  v_x^+(x_\ell^{+\ast}) = +\kappa_1.
\end{align}
Note that we now have 5 unknowns, $\nu^+_{1,2}, x^+_\ell, x^{+\ast}_{\ell}, x^{+\ast}_h$ rather than six as we ``fixed'' $x^+_h = y_h$.  This yields the following system
\begin{align}\label{eq:prod-sys-2}
\begin{cases}
\Dq^+(y_h)+\nu^+_1e^{\theta^+_1y_h}+\nu^+_2e^{\theta^+_2y_h}=\Dq^+(x^{+\ast}_h)+\nu^+_1e^{\theta^+_1(x^{+\ast}_h)}+\nu^+_2e^{\theta^+_2(x^{+\ast}_h)}-K_p(\xi^*(y_h)) & (\mathcal{C}^0\text{at }y_h)\\
\Dq^+(x^{+}_\ell)+\nu^+_1e^{\theta^+_1x^{+}_\ell}+ \nu^+_2 e^{\theta^+_2x^{+}_\ell}=\Dq^+(x^{+\ast}_\ell)+\nu^+_1e^{\theta^+_1x^{+\ast}_\ell}+\nu^+_2e^{\theta^+_2x^{+\ast}_\ell}-K_p(\xi^*(x_\ell ^+)) &(\mathcal{C}^0\text{ at }x^+_\ell)\\
\Dq^+_x(x^{+}_\ell)+\nu^+_1\theta^+_1e^{\theta^+_1x^{+}_\ell}+\nu^+_2\theta^+_2e^{\theta^+_2x^{+}_\ell}=\Dq^+_x(x^{+\ast}_\ell)+\nu^+_1\theta^+_1e^{\theta^+_1x^{+\ast}_\ell}+\nu^+_2\theta^+_2e^{\theta^+_2x^{+\ast}_\ell} +\kappa_1 &(\mathcal{C}^1\text{ at }x^+_\ell) \\
\widehat v_x^+(x^{+\ast}_h) + \nu^+_1\theta^+_1e^{\theta^+_1 x_h^{+\ast} }+\nu^+_2\theta^+_2e^{\theta^+_2  x_h^{+\ast} } = - \kappa_1 & (\mathcal{C}^1\text{ at }x^{+\ast}_h) \\
\Dq^+_x(x^{+\ast}_\ell)+\nu^+_1\theta^+_1e^{\theta^+_1x^{+\ast}_\ell}+\nu^+_2\theta^+_2e^{\theta^+_2x^{+\ast}_\ell} = \kappa_1. & (\mathcal{C}^1\text{ at }x^{+\ast}_\ell)\\
\end{cases}
\end{align}

Preemption in the contraction regime writes in a symmetric way.

In general, we need to manually verify whether $x_h^+ > y_h$ (the ``normal'' case) or $x^+ _h = y_h$ (the preemptive case) whenever we consider the producer best--response. The two situations lead to different boundary conditions at the upper threshold, and hence cannot be directly compared. 
Considering the optimization problem for $x_h^+$, we expect his value function to increase in $x_h^+$ on $(x_\ell^+, y_h)$ and experience a positive jump at $y_h$, i.e.~conditional on someone acting,  the producer prefers the consumer's switch to applying his impulse. However, if this is not the case, the consumer action hurts the producer and assuming the impulse costs are low, the best-response is $x_h^+ = y_h$. This corner solution arises due to the underlying discontinuity: on $(x_\ell^+, y_h)$ the producer compares the value of waiting to the value of doing an optimal impulse, but at $y_h$ he compares the value of switching to that of doing an optimal impulse. So it could be that ``waiting'' $>$ impulsing $>$ switching at $y_h$, leading to pre-emptive impulse to prevent the worst (for the producer) outcome.

\begin{prop} \label{prop:prod-best-response2} Assume $\mu_{0-}=\mu_+$. Let the $5$-tuple $(\nu^+_1, \nu^+ _2, x^+_\ell , x^{+*}_\ell, x^{+*} _h)$ be a solution to the system \eqref{eq:prod-sys-2}, such that the order in \eqref{order-best-resp-prod} is fulfilled and $x^+ _\ell < x_\ell ^{+*}, x_h ^{+*} < y_h$. Let $v^+$ be defined as in \eqref{eq:prod-payoff-2} and assume \begin{equation}\label{eq:SOC-2}
v_{xx} ^+(x_\ell ^{+*}) <0,
\qquad v_{xx} ^+ (x_h ^{+*}) <0. 
\end{equation}
Then, the function $v^+$ is the best--response payoff of the producer in the expansion regime, and a best--response strategy is given by
\begin{align}
\tau^* _0 = 0, & \quad \tau^* _i = \inf \{ t > \tau^*_{i-1} : X_t \in \Gamma_p (t-) \}, \\
 \xi^*_i (x_\ell ^+) & = x_\ell ^{+*} - x_\ell ^+ , \quad \xi^*_i (y_h) = y_h - x_h ^{+*} , \quad i \ge 1,\end{align}
with $\Gamma_p (t) = \Gamma^+ _p (t) = (-\infty, x^+_\ell ] \cup [y_h , +\infty)$, while $X^\ast$ follows the dynamics corresponding to the producer's impulse strategy $(\tau^* _i , \xi_i ^*)_{i \ge 1}$.
\end{prop}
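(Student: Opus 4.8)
This is a verification theorem for the producer's best--response control problem, so the plan is to exhibit the candidate $v^+$ of \eqref{eq:prod-payoff-2} as an appropriate super--solution of the associated variational inequality (VI) and then to check that the preemptive strategy attains it. Write $\mathcal{M}v(x):=\sup_{\xi}\{v(x-\xi)-K_p(\xi)\}$ for the producer's impulse operator, and let $v^-$ denote the producer's best--response value in the contraction regime; in the preemptive--expansion situation considered here $v^-$ is obtained by the one--sided impulse construction paralleling Proposition~\ref{prop:prod-best-response1} (continuation ODE on $(y_\ell,x^-_h)$, downward impulse at $x^-_h$, and $v^-=v^+$ on $(-\infty,y_\ell]$, since the producer welcomes rather than preempts the consumer's switch back to expansion at $y_\ell$), so in particular $v^-(y_\ell)=v^+(y_\ell)$. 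Being in the \emph{preemptive} regime means precisely that $v^+(y_h)\ge v^-(y_h)$ --- at $y_h$ the producer prefers impulsing to letting the consumer switch --- which I treat as part of the hypothesis, alongside the solvability of \eqref{eq:prod-sys-2} under the ordering \eqref{order-best-resp-prod} and the second--order conditions \eqref{eq:SOC-2}.

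The first step is to verify the VI satisfied by $v^+$. On $(x^+_\ell,y_h)$ one has $-\beta v^++\mu_+v^+_x+\tfrac12\sigma^2v^+_{xx}+\pi_p=0$ by construction ($\widehat v^+$ plus the homogeneous part with coefficients $\nu^+_{1,2}$); on $(-\infty,x^+_\ell]$ and $[y_h,\infty)$, $v^+$ is affine--plus--constant; the $\mathcal{C}^0/\mathcal{C}^1$ matching in \eqref{eq:prod-sys-2} makes $v^+$ globally $C^1$ across $x^+_\ell$ (and $C^2$ off that point), while across $y_h$ only continuity is imposed, leaving a one--sided convex corner. I then need (i) $\mathcal{M}v^+\le v^+$ on $(x^+_\ell,y_h)$ with equality at the endpoints, and, for $x$ in the impulse regions, $\mathcal{M}v^+(x)$ attained exactly at the targets $x^{+\ast}_\ell,x^{+\ast}_h$; and (ii) $-\beta v^++\mu_+v^+_x+\tfrac12\sigma^2v^+_{xx}+\pi_p\le0$ on $(-\infty,x^+_\ell]\cup[y_h,\infty)$. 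For (i), the first--order conditions in \eqref{eq:prod-sys-2} locate the critical point of $\hat x\mapsto v^+(\hat x)-\kappa_1|\hat x-x|$ at the precomputed target and \eqref{eq:SOC-2} makes it a maximum; upgrading this to the \emph{global} inequality over all of $(x^+_\ell,y_h)$ is the real work and requires the qualitative shape of $v^+$ --- sign control on $\nu^+_{1,2}$ and monotonicity/concavity of the quadratic $\widehat v^+$. Inequality (ii) is an elementary substitution.

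With the VI established, the verification is run phase by phase in $\mu_t$. For an arbitrary $N'\in\mathcal{A}_p$, on each expansion interval apply Dynkin's formula to $e^{-\beta t}v^+(X^{(\mu,N')}_t)$ --- legitimate because $v^+\in C^1$ on $[x^+_\ell,y_h]$ and $X$ remains in $[x^+_\ell,y_h]$ until the interval ends (the consumer switches exactly when $X$ reaches $y_h$), so the corner at $y_h$ is never crossed --- using the VI inequality off the impulse region and $v^+(X_{\tau_i})-v^+(X_{\tau_i-})\le K_p(\xi_i)$ at each producer impulse; at the switching endpoint replace $v^+(y_h)$ by $v^-(y_h)$, a nonpositive step since $v^+(y_h)\ge v^-(y_h)$. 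On each contraction interval proceed identically with $v^-$, replacing $v^-(y_\ell)$ by $v^+(y_\ell)$ at zero cost. Concatenating, taking expectations (stochastic integrals vanish by the $L^2$ bounds in Definition~\ref{def:adm}) and sending the horizon to infinity, with transversality $e^{-\beta T}\E[v^{\mu_T}(X_T)]\to0$ (from the at--most--quadratic growth of $v^\pm$ and moment control of $X$), yields $v^+(x)\ge J^+_p(x;N',\mu)$. For the proposed $N^*$ every inequality is an equality: $\mu_t\equiv\mu_+$, so there is never a consumer switch and $X^*$ lives in the bounded set $[x^+_\ell,y_h]$; the VI is an equality on the continuation region; and the impulses at $x^+_\ell$ and at $y_h$ are the exact maximizers of $\mathcal{M}v^+$. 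Hence $J^+_p(x;N^*,\mu)=v^+(x)$, which proves both the value identity and optimality.

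The main obstacle is step (i): promoting $\mathcal{M}v^+\le v^+$ from the boundary points, where it is built into \eqref{eq:prod-sys-2}, to the whole continuation interval. One must show that for every $x$ the map $\hat x\mapsto v^+(\hat x)-\kappa_1|\hat x-x|$ is maximized in the interior exactly at the precomputed target when $x$ lies in an impulse region, and never exceeds $v^+(x)$ when $x$ lies in the continuation region; this is not automatic and typically forces one to use \eqref{eq:SOC-2} globally, together with explicit sign estimates on $\nu^+_{1,2}$, rather than just the pointwise concavity at the targets. A secondary point --- that the convex corner of $v^+$ at $y_h$ might create an adverse local--time term in Dynkin's formula --- is dispatched by the phase decomposition above: since $y_h$ ends every expansion phase, $v^+$ is only ever differentiated on $[x^+_\ell,y_h]$, and the corner is absorbed into the comparison $v^+(y_h)\ge v^-(y_h)$ and the ensuing contraction--phase evolution, which is dominated by $v^-$.
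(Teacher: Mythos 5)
Your verification scheme is workable, but it takes a genuinely different---and considerably heavier---route than the paper's, and along the way it imports hypotheses that the proposition does not contain. The paper's entire proof is the observation that $x_h^+=y_h$ together with the producer-priority convention of Remark~\ref{rmk:priority} means the consumer never actually switches: whenever $X$ reaches $y_h$ the producer's impulse fires first, so the drift is identically $\mu_+$ and one simply reruns the proof of Proposition~\ref{prop:prod-best-response1} with the contraction intervals deleted (the second sum on the right-hand side of \eqref{eq:opt-prop3} is empty). No contraction-regime value $v^-$, no comparison $v^+(y_h)\ge v^-(y_h)$, and no phase-by-phase bookkeeping of $\mu_-$ intervals are needed. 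By contrast, you construct a $v^-$, postulate $v^+(y_h)\ge v^-(y_h)$ ``as part of the hypothesis,'' and absorb the corner of $v^+$ at $y_h$ into a transition to a contraction phase---none of which appears in the statement, so strictly speaking you prove the proposition under strictly stronger assumptions. That said, your detour is not pointless: it is the natural way to dominate producer deviations that fail to preempt at $y_h$, under which the consumer \emph{does} switch and the drift \emph{does} become $\mu_-$; the paper's one-line reduction passes over this case in silence, and if ``best response'' is read as ranging over all admissible impulse strategies against the consumer's fixed threshold $y_h$, some control of the post-switch continuation value along the lines you sketch is genuinely required. Finally, note that both you and the paper leave the same analytic step open---the global nonlocal inequality $\mathcal{M}v^+\le v^+$ on the continuation region, which you rightly flag as ``the real work'' but do not close, and which the paper buries in the assertion that the solution of \eqref{eq:prod-sys-2} satisfies the variational inequality ``by construction.''
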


\begin{figure}[th!]
	\centering
	\includegraphics[width=0.35\textwidth]{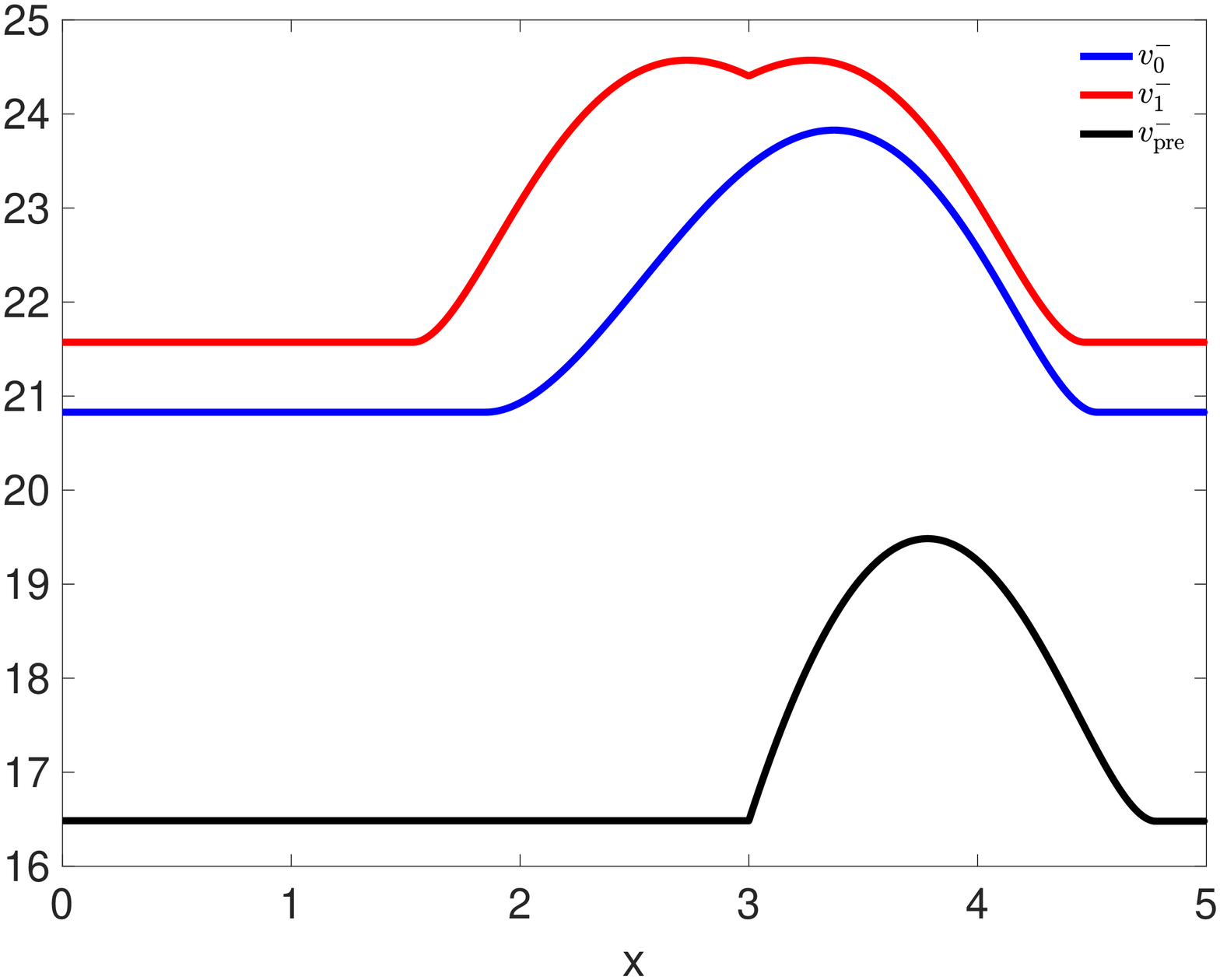}
	\includegraphics[width=0.35\textwidth]{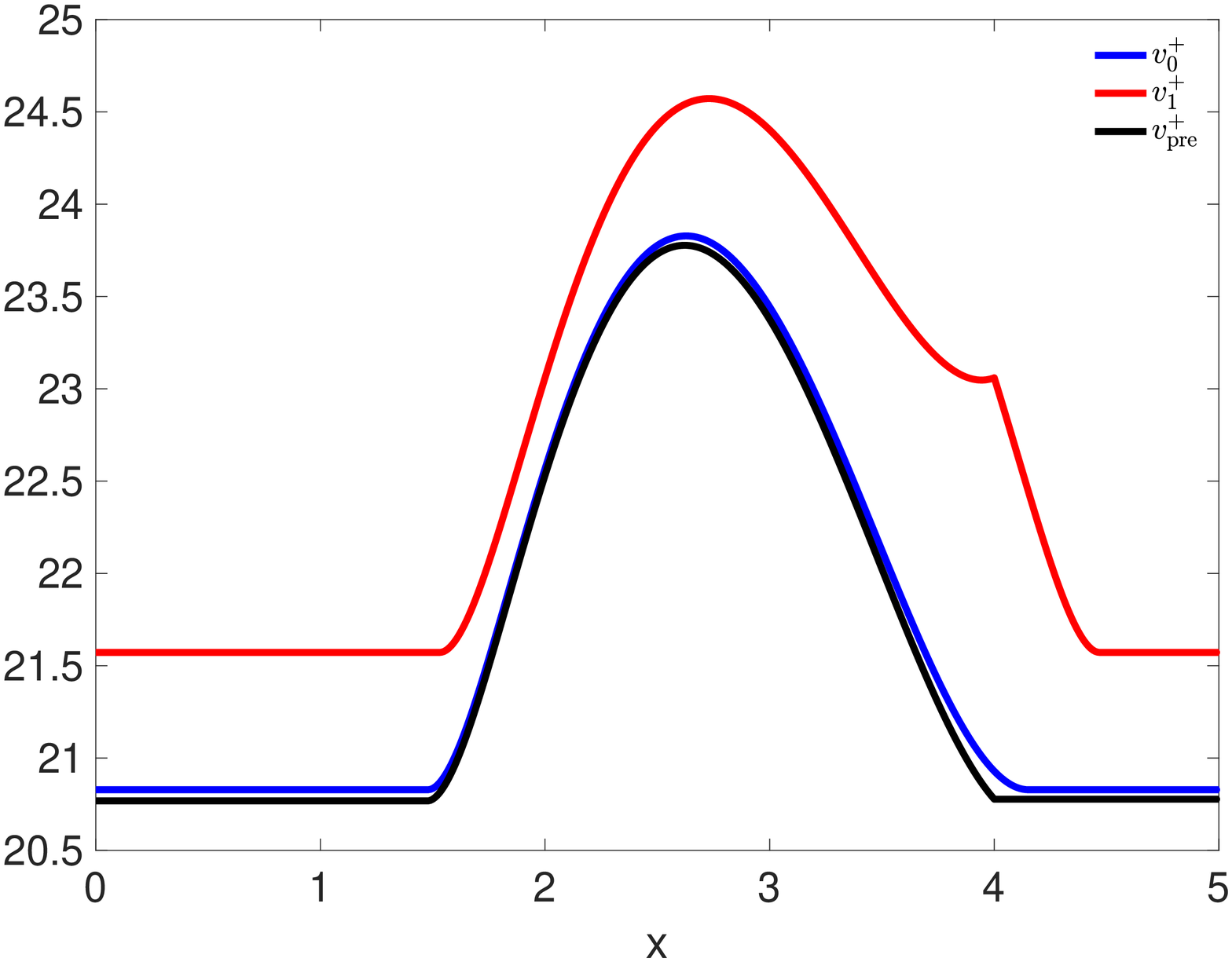}
	\caption{Value functions $v^\pm_0$, $v^\pm_1$, and $v^\pm_{\rm pre}$ of the producer given the consumer's strategy $\mathcal{C}_c=  (3,  4).$}
	\label{fig:prodBR01pre}
\end{figure}

Figure~\ref{fig:prodBR01pre} illustrates the shapes of the producer's value function in the different cases of best--response. For the given consumer strategy, we have a dominant function in the contraction regime ($v^-_1$) and a dominant function in the expansion regime ($v_1^+$).

\section{Equilibria}\label{sec:eqm-types}

The best--response functions defined in Section~\ref{sec:eqm} lead to three types of potential market equilibria, depending on the equilibrium behaviour of the consumer and characterized by the relative positions of the consumerand producer thresholds:
\begin{itemize}
\item Type~I -- generic:  $ y_\ell \leq x^-_h $ and $ x^+_\ell \leq y_h$.
\item Type~II --  transitory: $ -\infty = y_\ell \le x^-_\ell  $ and $ x^+_\ell \leq y_h$; or $y_\ell \leq x^-_h $ and $y_h = +\infty$.
\item Type~III -- preemptive: $ y_\ell \leq x^-_h $ and $ x^+_\ell = y_h$; or $ y_\ell = x^-_h $ and $ x^+_\ell \leq y_h$.
\end{itemize}

In equilibrium Type~I, the consumer switches back and forth forever between the two expansion and contraction regimes. The optimal policy of the consumer is given by the threshold $y_\ell$ in the contraction regime and $y_h$ in the expansion regime, while the optimal policy of the producer is formed by the pair $(x_\ell^+,x_\ell^{+,\ast})$ in the expansion regime and symmetrically by the pair $(x_h^{+,\ast,}x_h^{-})$ in the contraction regime. We anticipate this to be the most common equilibrium type; it was precisely described and illustrated in Section~\ref{ssec:eqt1}.

In equilibrium Type~II, the consumer and the producer both prefer a given regime and thus, the consumer switches at most once when the market is initialized in the opposite regime. Afterwards, only the producer acts to maintain the price between $(x_\ell,x_h)$. Consider the case of a single switch  from expansion to contraction; the consumer's optimal policy consists then in only one threshold, $y_h$. The optimal policy of the producer is more complicated: in the expansion regime, it consists of the pair $(x_\ell^+,x_\ell^{+,\ast})$ and in the contraction regime, it consists of a quadruplet $(x^-_\ell,x^{-,\ast}_\ell,x^{-,\ast}_h,x^-_h)$. The same reasoning applies in the other single switch case. This equilibrium is described in Section~\ref{ssec:eqt2}.

The last type of equilibrium, named Type~III, resembles the preceding one in the sense that at most one switch can be observed. But it differs because here the consumer is stuck forever in a state she wishes to leave. In that case described in Section~\ref{ssec:eqt3}, only the producer acts. Starting in the expansion regime, for instance, the consumer would like to switch to the contraction regime when the price reaches a threshold $y_h$. But the producer, who prefers perpetual expansion, {\em preempts} the switch by acting at the threshold $y_h^-$, just before the action of the consumer.

Threshold--type equilibria offer analytical tractability to describe the long--run market behavior. The latter can be summarized by the stationary distribution of the commodity price $(X^\ast_t)$ and the consumer regimes $(\mu^\ast_t)$ as induced by the equilibrium strategies $(N^\ast_t, \mu^\ast_t)$. To quantify these effects, we define an auxiliary discrete--time jump chain $(M^\ast_n)_{n=0}^\infty$ which takes values in the state space
\begin{align}
E:= \left\{S_{+}, S_{-}, I^-_\ell, I^-_h, I^+_\ell, I^+_h\right\}.
\end{align}
The chain $M^\ast$ keeps track of the sequential actions of the players, where $S_\pm$ represents the switches of the consumer (``$S_{+}$'' stands for the switch $\mu_-\rightarrow\mu_+$ and ``$S_{-}$'' for  $\mu_+\rightarrow\mu_-$) and $I^\pm_r$ the impulses (up/down at the two impulse boundaries) of the producer. Thus, $M^\ast$ summarizes the sequence of market interventions stored within $\tau_i, \sigma_i$ stopping times.  Note that states $M^\ast_n \in \{S_+, I^+_{\ell h}\}$ imply a positive drift $\mu_+$ of $X^\ast$, while the rest imply a negative drift $\mu_-$. Moreover,
if the consumer adopts a double--switch strategy and the producer adopts a non-preemptive strategy as discussed in Section \ref{ssec:eqt1}, then the thresholds $x^+_h$ and $x^-_\ell$ will be hit at most once by $X^\ast$ and therefore the corresponding states $I^+_h$ and $I^-_\ell$ of $M^\ast_n$ are transient.

Because the dynamics of $X^\ast$ between interventions are always Brownian motion (BM) with drift, the transition probabilities of $M^\ast$ can be described in terms of hitting probabilities of a BM. This offers closed-form expressions for the the transition probability matrix $\mathbf{P}$ of $M^\ast$, and its invariant distribution denoted by $\vec{\Pi}$. Moreover, the sojourn times of $M^\ast$ correspond to $(X^\ast_t)$ hitting the various thresholds (in terms of the original continuous-time ``$t$'') and are similarly linked to BM first passage times. Combining the above ideas, we can then derive a complete description of $(\mu^\ast_t)$, namely the long-run proportion of time that the commodity demand is in expansion/contraction regimes and the respective expected switching time, see \eqref{eq:time-to-switch}.

In the following section otherwise stated, we use the parameter values in Table \ref{tab:params-2}, such that $\pi_i(x) = a_i (x-x^1_i)(x^2_i-x)$, $i\in\{c,p\}$. This yields consumer and producer preferred price levels of $\bar{X}_c= 3, \bar{X}_p=4$. The same set of parameters yields an equilibrium of each type, showing the non-uniqueness of equilibria in this model.

\begin{figure}[h]
\centering
\begin{minipage}{0.5\textwidth}
	\centering
	\includegraphics[width=0.9\textwidth]{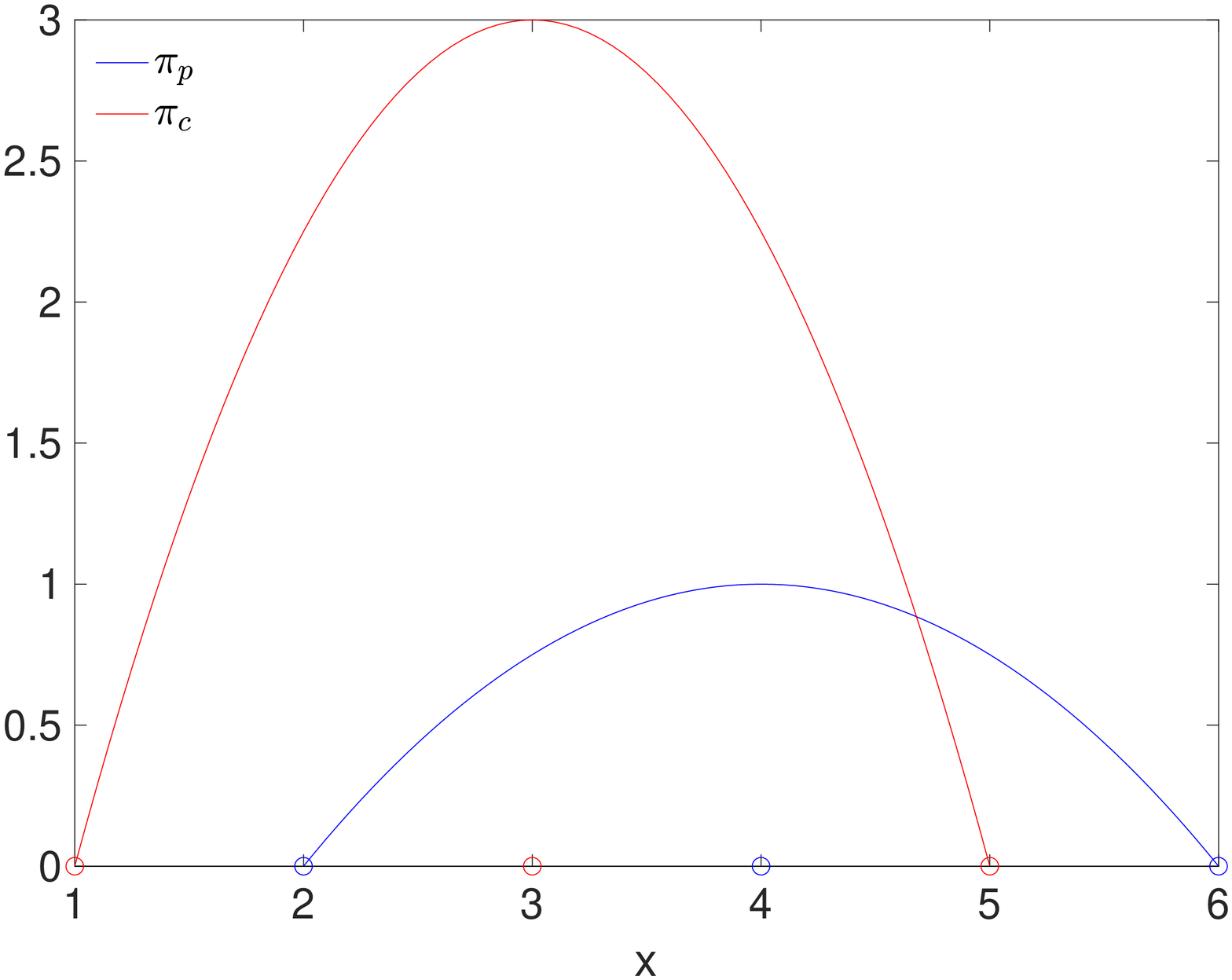}
	\caption{{\small Producer's and consumer's profit rates as a function of $x$.}} \label{fig:profit}
\end{minipage}\hspace{10mm}
\begin{minipage}{0.4\textwidth}
\centering\vspace{19mm}
\captionsetup{type=table}
        \begin{tabular}{l r | l r | l r}
        \hline\hline
        \multicolumn{2}{c}{Market} & \multicolumn{2}{c}{ Consumer} & \multicolumn{2}{c}{Producer}  \\ \hline
        $\beta$ & $0.1$  & $x^1_c$ & 1 & $x^1_p$ & 2 \\
         $\sigma$    & $0.25$ & $x^2_c$ & 5 & $x^2_p$ & 6 \\
        $\mu_+$   & $0.1$  & $a_c$ & 0.75 & $a_p$ & 0.25 \\
         $\mu_-$   & $-0.1$ & $h_\pm$   & $10$ &$\kappa_0$ & 3 \\
        &  & & & $\kappa_1$ & 0
         \\ \hline\hline
        \end{tabular}
	\caption{ {\small Model parameters for Section \ref{ssec:eqt1}.}}
        \label{tab:params-2}
\end{minipage}
\end{figure}

\subsection{Type I -- Generic}\label{ssec:eqt1}

We look for an \textit{interior, non-preemptive} equilibrium satisfying the ordering \eqref{eq:OrderInXY}, i.e.~a pair of consumer and producer strategies of the form $(y_\ell^\ast,y_h^\ast)$ and $(x^+_\ell,x^{+\ast}_\ell,x^{-\ast}_h,x^-_h)$. To construct this equilibrium, we employ t\^atonnement, i.e.~iteratively apply the best--response controls alternating between the two players. This corresponds to the interpretation of Nash equilibrium as a \emph{fixed point of best--response maps} $BR$. The equilibrium is obtained using two different fixed--point algorithms. Given strategies $\mathcal C_p^0$ and $\mathcal C_c^0$, we have either an asynchronous or synchronous algorithm, namely
\begin{align*}
&\mathcal C_p^{k+1}  = BR(\mathcal C_c^k), \quad && \mathcal C_p^{k+1}  = BR(\mathcal C_c^k), \\
&\mathcal C_c^{k+1}  = BR(\mathcal C_p^{k+1}), \quad && \mathcal C_c^{k+1}  = BR(\mathcal C_p^{k}), \\
&\text{asynchronous}  && \text{synchronous}.
\end{align*}

The resulting equilibrium found using both algorithms is the same and is
\begin{align}\label{eq:cq-ex1}
\mathcal{C}^{\rm{I},\ast}_p=\begin{bmatrix}
2.0, &  3.6, & -, & +\infty\\
-\infty, &  -, & 4.5, & 6.1 \\
\end{bmatrix}, \qquad
\mathcal{C}^{\rm{I},\ast}_c=[2.2, 4.4].
\end{align}

\begin{figure}[t!]
	\centering
	\includegraphics[width=0.3\textwidth]{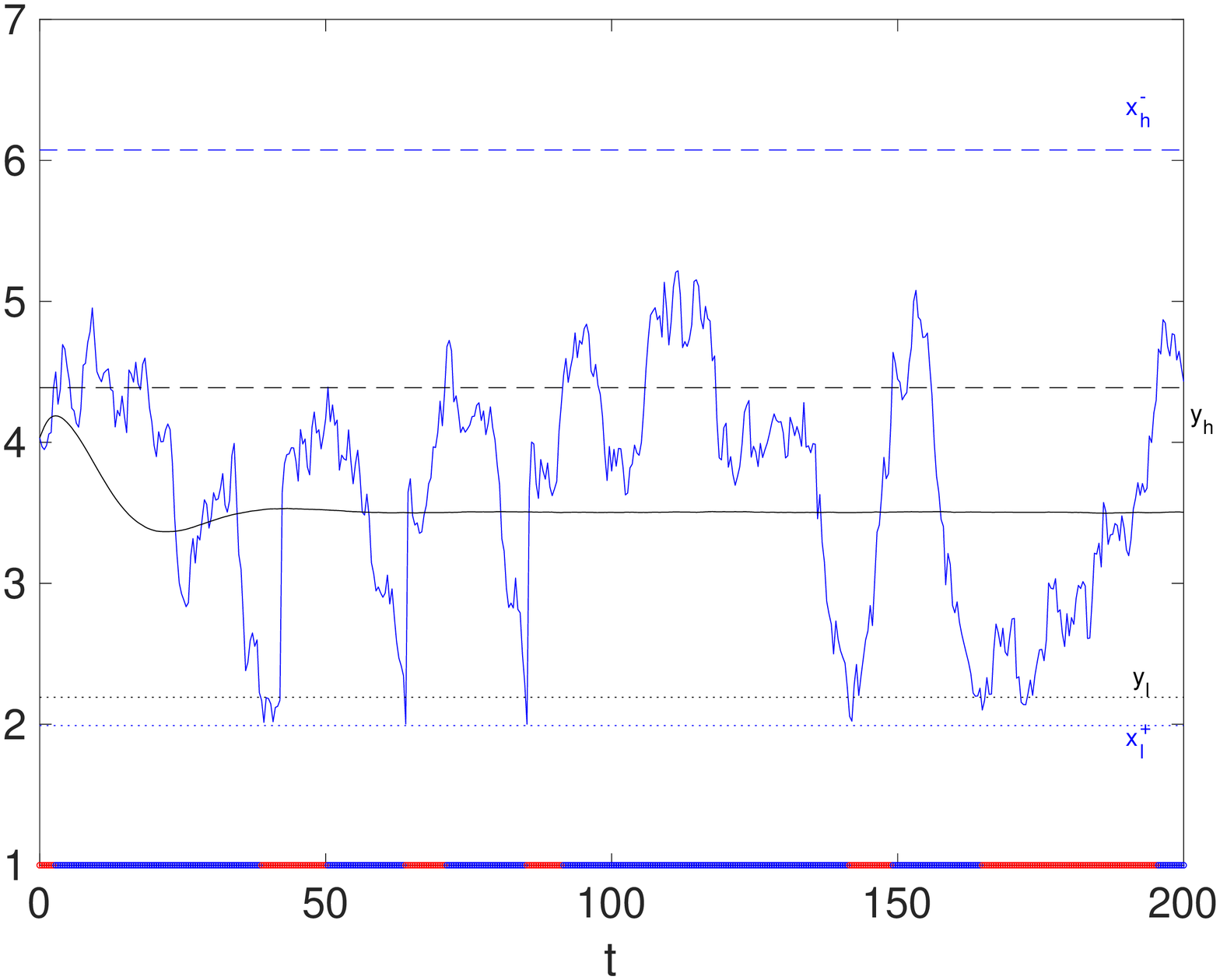}
	\includegraphics[width=0.3\textwidth]{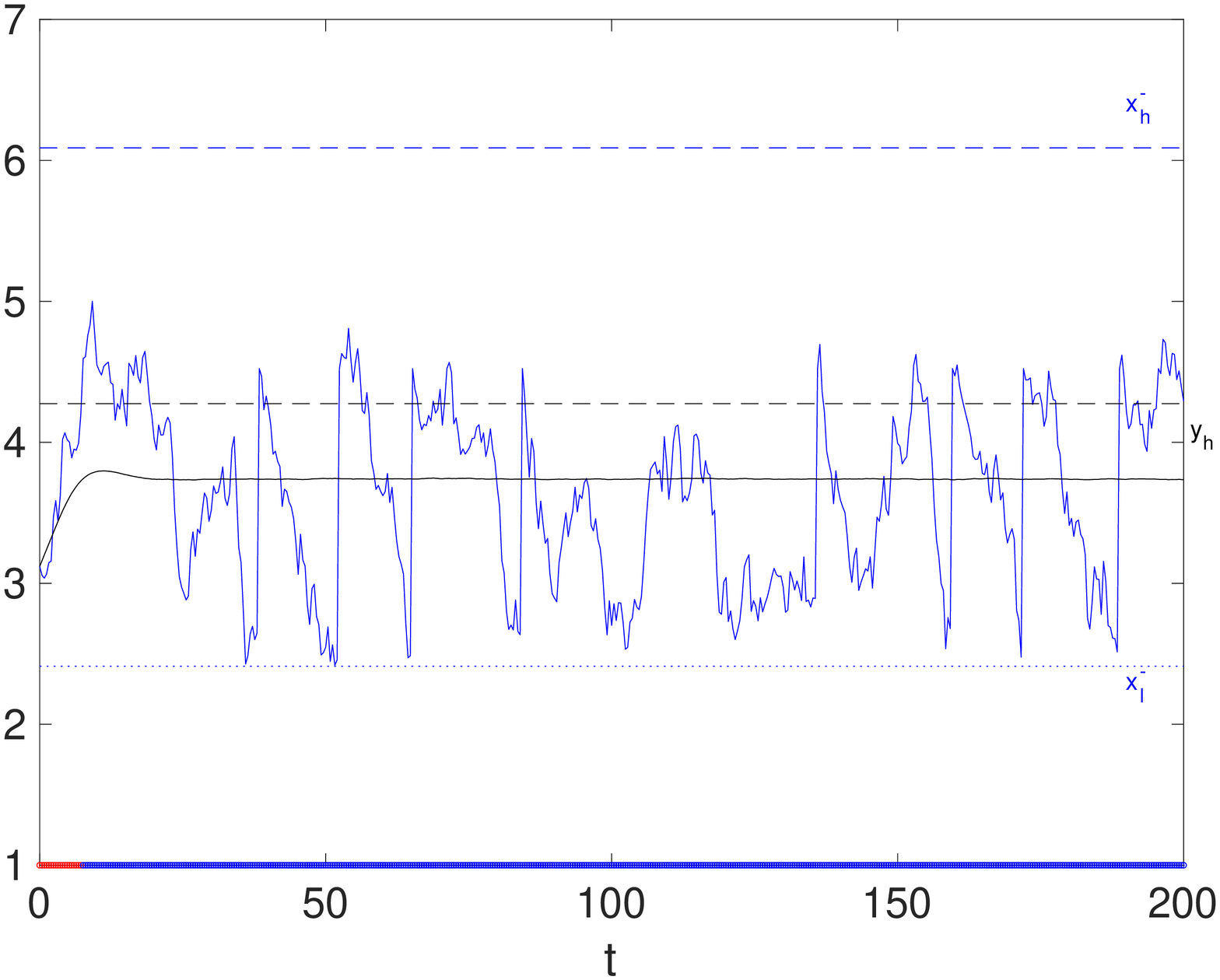}
	\includegraphics[width=0.3\textwidth]{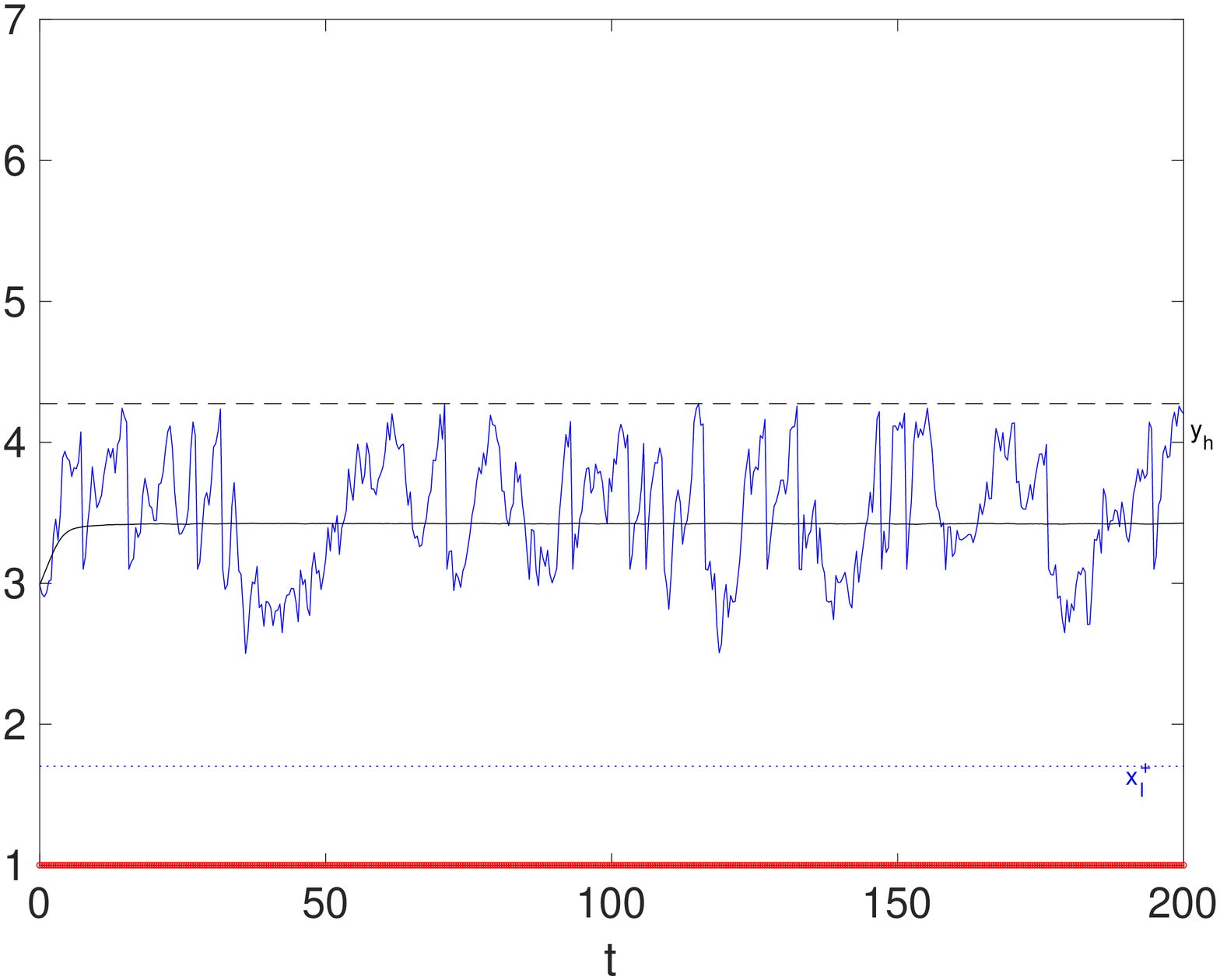}
	\caption{A sample path of the controlled market price $(X^\ast_t)$ under a Type I equilibrium (Left), Type~II (Center) and Type~III (Right) together with $\E\big[ X^\ast_t \big]$ (black solid curve). The colors along the $x$-axis indicate the corresponding $\mu^\ast_t$. We start with $\mu^\ast_0 = \mu_+$.}
	\label{fig:SamplePathT}
\end{figure}

The dynamic equilibrium of the commodity price  $X^\ast$ is illustrated in Figure~\ref{fig:SamplePathT} (Left). The market starts in the expansion regime, $\mu^\ast_0 = \mu_+$. We observe that $x^{-\ast}_h$ is close to $y_h^{\ast}$, implying that once the price has reached the switching level $y_h^\ast$, it is likely to touch soon thereafter the threshold $x^{-\ast}_h$, making the price drop to $x^{-\ast}_h$. The producer ``backs up'' this mean-reversion by impulsing down if prices rise too much and impulsing down if they drop too much. Otherwise, she lets the consumer be in charge via switching control that benefits him as well.

At equilibrium, the price $X^\ast$ fluctuates in a range of values where neither the producer nor the consumer have negative profit rate. If alone in the market, the optimal monopolistic strategies of the producer and the consumer are
\begin{align}\label{eq:cq-alone}
\mathcal{C}^m_p:=\begin{bmatrix}
1.9, &  3.5, & 3.5, & 5.6 \\
2.4, &  4.5, & 4.5, & 6.1 \\
\end{bmatrix}, \qquad
\mathcal{C}^{m}_c:=[1.7, 4.3].
\end{align}
We see that the equilibrium strategy of the producer $\mathcal C^{\rm{I},\ast}_p$ is quite close to what he would have done if alone in the market. On his side, the consumer-induced equilibrium price range is wider than he would prefer ($2.6$ against $2.2$ if alone). In equilibrium, it is as if the producer lets the consumer do the job of bringing back the price to his preferred level $\bar{X}_p$. The producer intervenes only if $X^\ast_t$ drops too low or gets too high, after the regime switching has occurred. But, in the long--run, the average price $\lim_{t\to\infty}\E[X^\ast_t]$ is close to $3.5$, which is the mid--value between $\bar X_p$ and $\bar X_c$.

The players' equilibrium strategy profile yields a stationary distribution for the pair $(X^\ast_t, \mu^\ast_t)$. The macro market $\mu^\ast$ switches between the expansion and the contraction regimes back and forth, while the jointly controlled price $(X^\ast_t)$ is bounded in the range $[x^+_\ell, x^-_h]$ and fluctuates in a \textit{mean-reverting} pattern due to alternating signs of its drift. These stylized features can be broadly traced in the world commodity markets which undergo cyclical Expansion/Contraction patterns.

\noindent {\bf Dynamics of $(X^\ast_t)$ in the equilibrium}: The dynamics of the commodity price $(X^\ast_t)$ are less tractable due to the impulses applied by the producer. Let $\phi^\ast(\cdot)$ denote the long-run (i.e.~stationary) distribution of $(X^\ast_t)$. In Figure~\ref{fig:EstDynamicX}, we show $\phi^\ast$  obtained from an empirical density based on a long trajectory of $(X^\ast_t)$, relying on Monte Carlo simulations and the ergodicity of the recurrent, bounded process $(X^\ast_t)$. For additional interpretability, we also plot the invariant distributions $\phi^\ast_\pm$ conditional on $\mu_t = \mu_\pm$.
\begin{figure}[h!]
	\centering
	\subfigure[]{\label{fig:EstDynamic_X} 
		\includegraphics[width=0.35\textwidth]{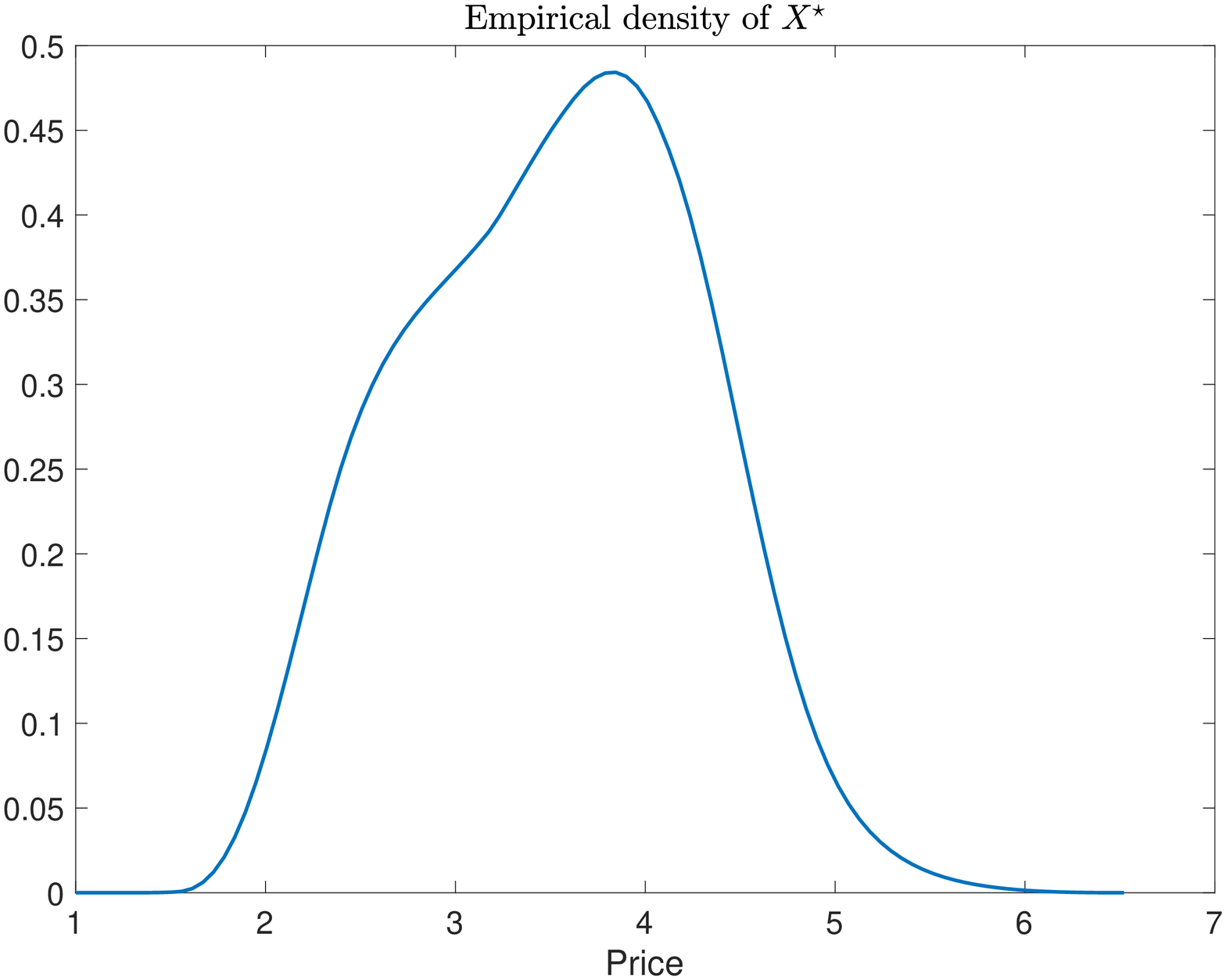}}
	\subfigure[]{
		\label{fig:EstDynamicMu_X}
		\includegraphics[width=0.35\textwidth]{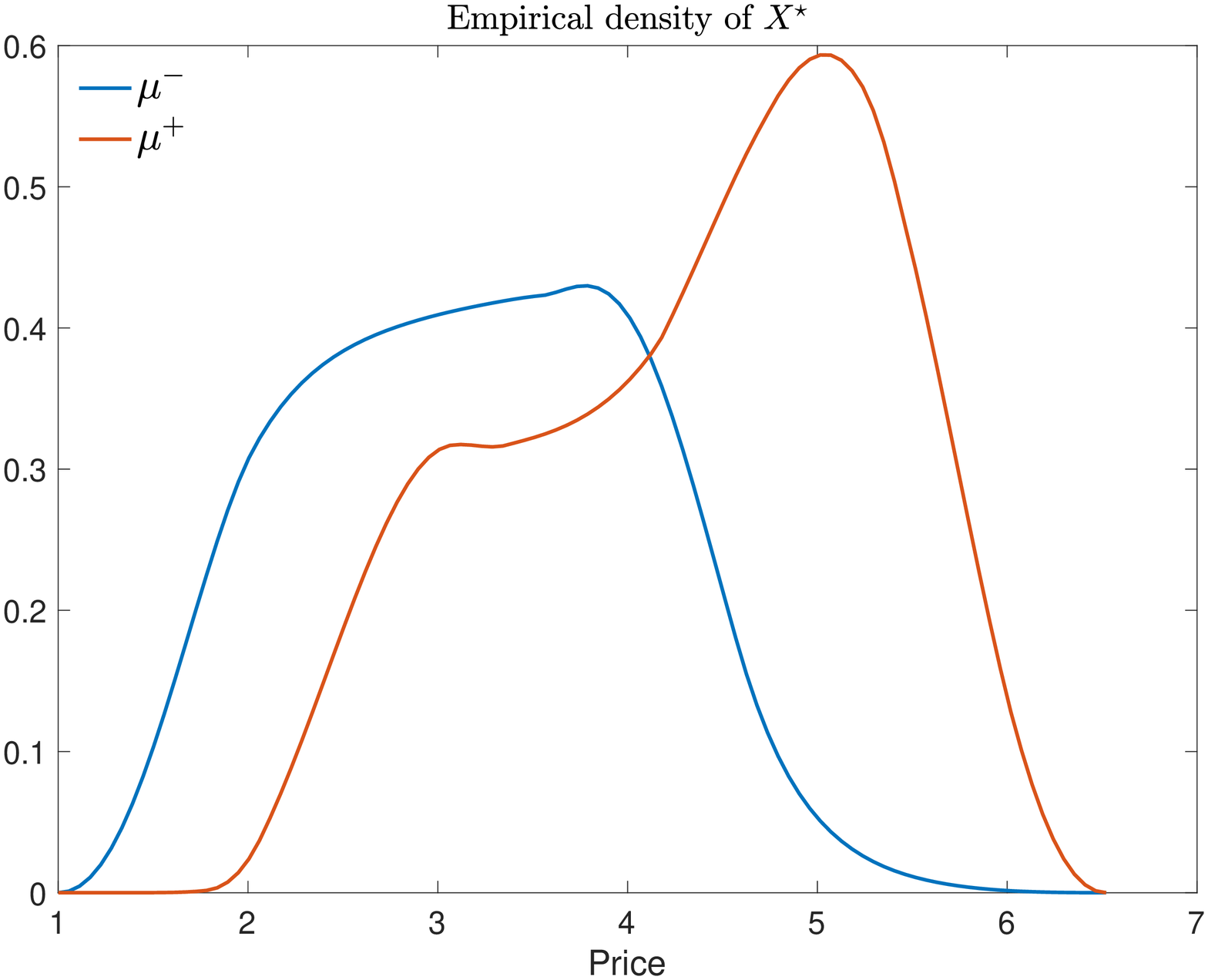}}
	\caption{Estimated long-run densities $\phi^\ast$ of $X^\ast$ in a double--switch and one--sided impulse equilibrium. Panel (a): overall kernel smoothed $\phi^\ast(\cdot)$. (b): long--run distributions $\phi^\ast_\pm$ of $X^\ast$ conditional on $\mu(t) =\mu_+$ (resp.~$\mu(t) = \mu_-$). Recall that the domain of $X^\ast_t$ depends on the current regime, so the support of the two densities differs.}
	\label{fig:EstDynamicX} 
\end{figure}

\subsection{Type II --  Transitory}
\label{ssec:eqt2}

In another type of equilibrium the consumer switches only in one regime, with the other being absorbing. For this reason, we name it {\em transitory}. To fix ideas, suppose that the consumer only switches from expansion regime to contraction regime. In that case, the producer effectively acts like a profit maximizing monopoly in the contraction regime with two--sided impulses; in the expansion regime she will apply a one--sided impulse as in the equilibrium type I.

To solve for such equilibrium, we first compute the producer strategy in the contraction regime which is a decoupled VI as in  \eqref{eq:values-monopoly-prod}  leading to the 6 equations in \eqref{eq:prodMonoSys} but only for $v^-$, $x_r^-, x_r^{-*}$, $r \in \{ \ell, h \}$. This solution induces the corresponding no--switch solution $\w_0^-$ of the consumer as in \eqref{inaction-eq1}-\eqref{inaction-eq2}. Both $v^-, \w_0^-$ are then fixed and act as source terms to solve for the equilibrium in the expansion regime. For the latter, we need to compute $v^+(\cdot)$ and the associated thresholds $x_\ell^+, x_\ell^{+*}$ (only one threshold), as well as $\w^+(x)$ and the switching threshold $y_h$ (note that there is no $y_\ell$). The boundary conditions are $v^+(y_h) = v^-(y_h), \w^+(y_h) = \w^-(y_h) - h_0$.

This reasoning leads to the following algorithm. If $x^+_\ell$ and $x^{+,\ast}_\ell$ are fixed, we compute the best--response of the consumer by solving the variational problem for the consumer value function $\w^+$ such that:
\begin{align*}
w^+(x) & = \begin{cases}
	 	 w_0^-(x) - h_0, & y_h \leq x,\\
		\Dd^+(x) + \lambda^+_1 e^{\theta^+_1x} + \lambda^+_2 e^{\theta^+_2x}, & x^+_\ell  < x < y_h,\\
	 	w^+(x^{+,\ast}_\ell), & x \leq x^+_\ell.
	\end{cases}
\end{align*}
This is exactly the best--response in the single--switch case with the solution given by the system~\eqref{eq:single-switch} and which provides the consumer's threshold $y_h$. Now, if we consider that $y_h$ is fixed, we can compute the best--response of the producer by solving a VI for the value function $v^+$ that satisfies
\begin{align*}
v^+(x) & = \begin{cases}
	 	 v^-(x), & y_h \leq x,\\
		\Dq^+(x) + \nu^+_1e^{\theta^+_1x} + \nu^+_2e^{\theta^+_2x}, & x^+_\ell  < x < y_h,\\
	 	v^+(x^{+\ast}_\ell) - \kappa_0 - \kappa_1 (x^{+,\ast}_\ell - x), & x \leq x^+_\ell.
	\end{cases}
\end{align*}
The boundary conditions giving the four unknowns $(x^+_\ell,x^{+,\ast}_\ell, \nu^+_1,\nu^+_2)$ are:
\begin{align}
\begin{cases}
v^-(y_h)=\Dq^+(y_h)+\nu^+_1e^{\theta^+_1y_h}+\nu^+_2e^{\theta^+_2y_h}, & (\mathcal{C}^0\text{ at }y_h)\\
\Dq^+(x^+_\ell)+\nu^+_1e^{\theta^+_1x^+_\ell}+\nu^+_2e^{\theta^+_2x^+_\ell}=\Dq^+(x^{+\ast}_\ell)+\nu^+_1e^{\theta^+_1x^{+\ast}_\ell}+\nu^+_2e^{\theta^+_2x^{+\ast}_\ell} \!\!- \kappa_0 - \kappa_1 ( x^{+,\ast}_\ell - x^+_\ell), & (\mathcal{C}^0\text{ at }x^+_\ell)\\
\Dq^+_x(x^+_\ell)+\nu^+_1\theta^+_1e^{\theta^+_1x^+_\ell}+\nu^+_2\theta^+_2e^{\theta^+_2x^+_\ell}=\Dq^+_x(x^{+\ast}_\ell)+\nu^+_1\theta^+_1e^{\theta^+_1x^{+\ast}_\ell}+\nu^+_2\theta^+_2e^{\theta^+_2x^{+\ast}_\ell} +\kappa_1, & (\mathcal{C}^1\text{ at }x^+_\ell)\\
\Dq^+_x(x_\ell^{+\ast})+\nu^+_1\theta^+_1e^{\theta^+_1x_\ell^{+\ast}}+\nu^+_2\theta^+_2e^{\theta^+_2 x_\ell^{+\ast}} =  \kappa_1. & (\mathcal{C}^1\text{ at } x_\ell^{+\ast})
\end{cases}
\end{align}
Now, we can perform the iterations $y_h^{0} \to (x^{+,(0)}_\ell, x^{+,\ast(0)}_\ell) \to y_h^{1} \to (x^{+,(1)}_\ell, x^{+,\ast (1)}_\ell) \ldots$.

We find the following fixed--point of the best--response functions of the producer and the consumer:
\begin{align}\label{eq:eqm-ii}
\mathcal{C}^{\rm{II},\ast}_p=\begin{bmatrix}
1.9, &  3.6, & -, & +\infty \\
2.4, &  4.5, & 4.5, & 6.1 \\
\end{bmatrix}, \qquad
\mathcal{C}^{\rm{II},\ast}_c=[-\infty, 4.3],
\end{align}

The system starts in the expansion regime, and once the price reaches level $X^\ast_t =4.3$, the consumer switches to contraction and the systems remains in that state for ever. After that, she relies on the producer to impulse $(X^\ast_t)$ up/down when prices get too low/too high but never reverts to the Expansion regime. Thus, in the long-run $(X^\ast_t)$ is simply a Brownian motion with negative drift $\mu_-$ that has two impulse boundaries $x^-_\ell = 2.4, x^-_h =6.1$.

Compared to the double--switch equilibrium of the previous section, the above market equilibrium in \eqref{eq:eqm-ii} has two important differences. First, as $t \to \infty$ we have that $\mu^\ast_t \to \mu_-$ so that in the long--run the market will be in the contraction regime and the consumer becomes inactive. Second, because the producer eventually ``takes over'', she will intervene much more frequently (see center panel of Figure~\ref{fig:SamplePathT}),  benefiting himself and reducing consumer value.

\subsection{Type III -- Preemptive}
\label{ssec:eqt3}

The producer may have an interest to preempt the switch, say, from the expansion regime to the contraction regime to avoid decline in the consumption of the commodity he produces. In this case, the equilibrium is a fixed point of the best--response function of the producer described in Section~\ref{ssec:pbr-pre} and the best--response function of the consumer described in Section~\ref{ssec:cbr-one}. We look for an equilibrium where the consumer would like to switch at $y_h$ in the expansion regime, but where the producer makes $y_h$ his own intervention threshold to impulse the price down.

Using the same protocol as in Type~I and Type~II equilibrium research, we find the following threshold strategy for the producer and the consumer:
\begin{align}\label{eq:typeIII}
\mathcal{C}^{\rm{III},\ast}_p:=\begin{bmatrix}
1.7, &  3.1, & 3.1, & 4.3 \\
-, &  -, & -, & - \\
\end{bmatrix}, \qquad
\mathcal{C}^{\rm{III},\ast}_c:=[-, 4.3].
\end{align}

In the preemptive equilibrium, the price fluctuates in a narrower range  than the other two equilibria. Here, $(X^\ast_t)$ oscillates between $x^+_\ell = 1.7$ and $x^+_h = y_h =4.3$.

\subsection{Equilibrium Non-Uniqueness}

There are at least three potential equilibria. A natural question is thus whether one of them is preferable to the others. Figure~\ref{fig:eq123} shows the value functions of the producer and of the consumer in the two market regimes (expansion and contraction) and for the different equilibria from type~I to type~III. We observe that the producer would prefer in both regimes to live in a type~I equilibrium. The function $v_1^{\pm}$ dominates all the other ones (note that there is no $v_3^-$ because in equilibrium type~III contraction never happens).  However, the consumer would rather be in the preemptive equilibrium type~III: her value function $w^+_3$ dominates the other two. Intuitively, we may think that the switching costs she  saves by letting the producer do all the work of maintaining the price around its long term average value compensate for the inconvenience of having prices that are higher than preferred.

\begin{figure}[t!]
	\centering
		\includegraphics[width=0.85\textwidth]{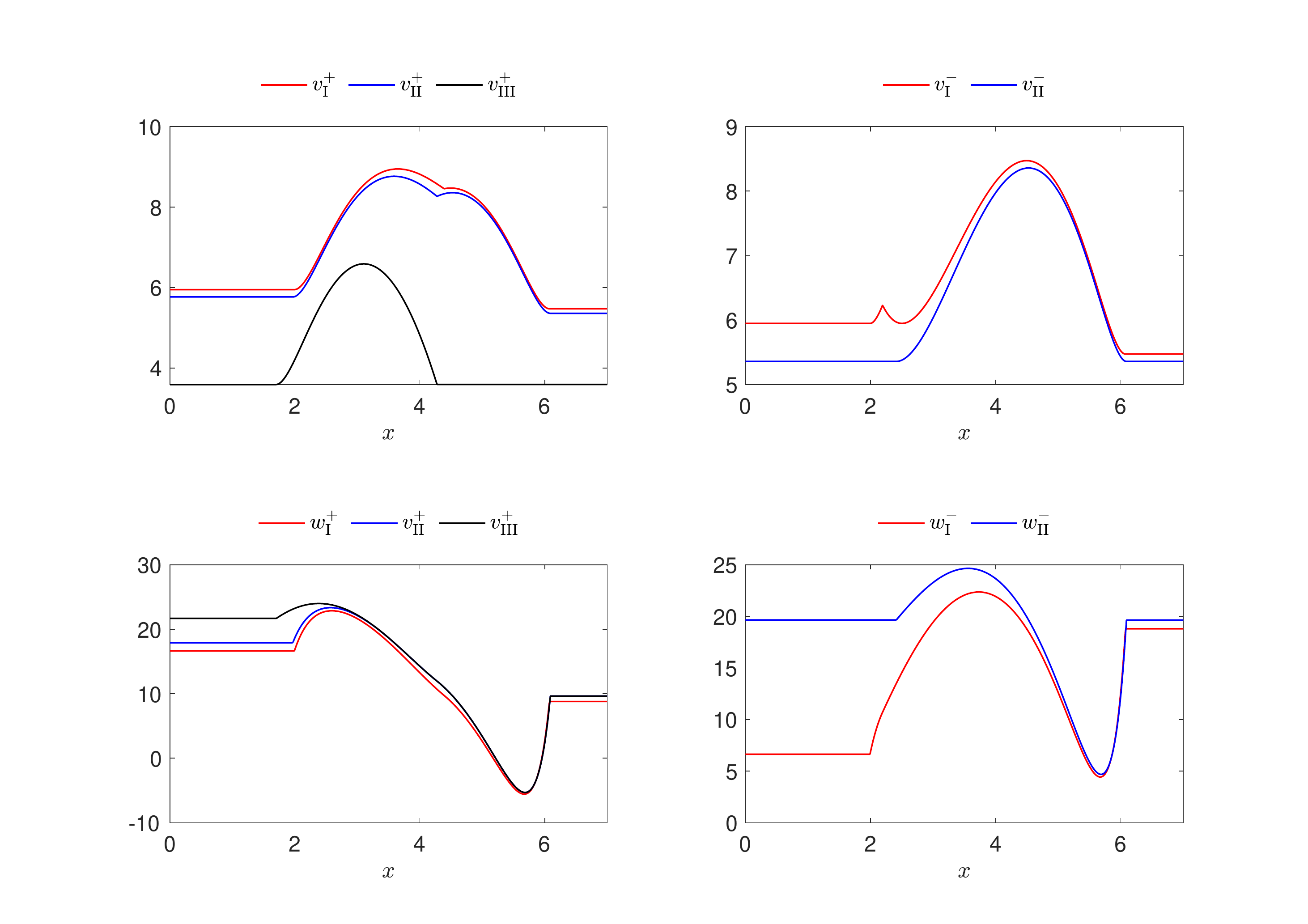}
	\caption{Producer and consumer value functions in the different equilibria.}
	\label{fig:eq123} 
\end{figure}

\subsection{Impact of Market Volatility}
As one example of comparative statistics that are possible in our model, we investigate the impact of volatility parameter $\sigma$ of $X$ on the equilibrium profits and behavior of $X^\ast$. In Table \ref{tab:long-runX} we list statistics of $\phi^\ast$ for a range of market volatilities $\sigma$. We also quantify the profitability of the two players through their average percentage of optimality (APOO)
\begin{align*}
\mathrm{APOO} := \frac{\int_{\mathcal{D}} \pi_r (x)\phi^\ast( dx )}{\pi_r (\bar{X}_r)},
\end{align*}
which is the ratio between average profit rate $\pi_r(X^\ast_t)$ in equilibrium and the maximum profit that could be hypothetically obtained at the first--best level $\pi_r(\bar{X}_r)$, $r \in \{c, p\}$.

In all types of equilibria, both players are worse off in terms of expected profit rate as $\sigma$ increase. This occurs even though in type I and in type II equilibria the average price $\E[ X^*]$ increases. However, that gain is dominated by the losses due to higher $\mathrm{Var}(X^\ast)$ which implies that prices tend to be further from their preferred levels $\bar{X}_r$ decreasing $\mathbb{E}_{\phi^\ast}\left[ \pi_r \right]$.

\begin{table}[thb]
	\centering
	\begin{tabular}{l | l | rrrrc} \hline\hline
&$\sigma$ & $\mathbb{E}_{\phi^\ast}[X^\ast] $ & $\mathrm{Var}_{\phi^\ast}(X^\ast)$ & $\mathbb{E}_{\phi^\ast}\big[\pi_p\big]$  & $\mathbb{E}_{\phi^\ast}\big[\pi_c\big]$ & Switch (per yr)\\
 \hline Type~I  &0.25 & 3.52 &  0.73 &  0.81 (80\%) &  2.4 (80\%) & 0.021\\ 
 		      &0.3 & 3.62 &  0.80 &  0.80 (79\%) &  2.2 (73\%) & 0.021\\ 
 		      &0.4 & 3.77 &  0.94 &  0.76 (75\%) &  1.90 (62\%) & 0.020  \\
 \hline Type~II  &0.25 & 3.73 &  0.68 &  0.87 (85\%) &  2.3 (74\%) & 0.020\\ 
 		       &0.3 & 3.76 &  0.74 &  0.85 (84\%) &  2.2 (71\%) & 0.020\\ 
 		       &0.4 & 3.81 &  0.85 &  0.81 (80\%) &  1.95 (64\%) & 0.020  \\
 \hline Type~III  &0.25 & 3.41 &  0.45 &  0.86 (85\%) &  2.7 (90\%) & 0.0\\ 
 		       &0.3 & 3.35 &  0.51 &  0.83 (82\%) &  2.7 (90\%) & 0.0\\ 
 		       &0.4 & 3.28 &  0.61 &  0.78 (77\%) &  2.7 (88\%) & 0.0  \\
\hline\hline
	\end{tabular}
\caption{Long--run mean and variance of $X^\ast$, long--run profit rates, and frequency of regime switches as market volatility $\sigma$ changes (APOO in parentheses).}
\label{tab:long-runX}
\end{table}

 \subsection{Effect of consumer's switching cost}

A key parameter that controls which equilibrium type we face is the consumer's switching cost $h_0$. Starting from the double--switch situation,
as $h_0$ increases ($>0.6$), the consumer is less incentivised to switch from $\mu^+$ to $\mu^-$ and we enter the single--switch scenario of Section~\ref{ssec:cbr-one}. Consequently, she receives the No--Switch payoff $\w_0^+(x)$ when $\mu_t=\mu^+$ and solving for her best--response boils down to solve for $y_h$ only. Once $h_0$ gets very large, her best--response is simply the No--Switch response $\w_0^{\pm}$. Conversely, as $h_0 \downarrow 0$ her actions become free. In that situation, we can reduce the producer problem to a single, piecewise VI with a free boundary $\tilde{X}_c$:
\begin{align}
\sup \Big\{ - \beta v(x) + \mu_{-} v_x + \frac{1}{2} \sigma^2 v_{xx}  + \pi_p (x) \; ; \;
                                            \sup_{\xi} \big\{ v(x+\xi) - v(x) - K_p (\xi) \big\} \Big\} & = 0 \qquad x > \tilde{X}_c, \\
\sup \Big\{ - \beta v(x) + \mu_{+} v_x + \frac{1}{2} \sigma^2 v_{xx}  + \pi_p (x)\; ; \;
                                            \sup_{\xi} \big\{ v(x+\xi) - v(x) - K_p (\xi) \big\} \Big\} & = 0 \qquad x < \tilde{X}_c,
\end{align}
with the ${\cal C}^0$ regularity at $\tilde{X}_c$: $\lim_{x \uparrow \tilde{X}_c}  v(x) = \lim_{x \downarrow \tilde{X}_c} v(x)$.

\medskip

Fig.~\ref{fig:h0} shows that for low $h_0$, $x^{+\ast}_h < y_\ell$ and $x^+_h$ is greater but close to $y_h$. Thus, when consumption switches from expansion to contraction, it is very likely that the price touches $x_h^+$ soon thereafter and is impulsed back to $x ^{+\ast}_h$ and thus, the regime rapidly switches back to expansion again. When the switching cost increases, this solution disappears.

\begin{figure}[bht!]
	\centering
	\includegraphics[width=0.35\textwidth]{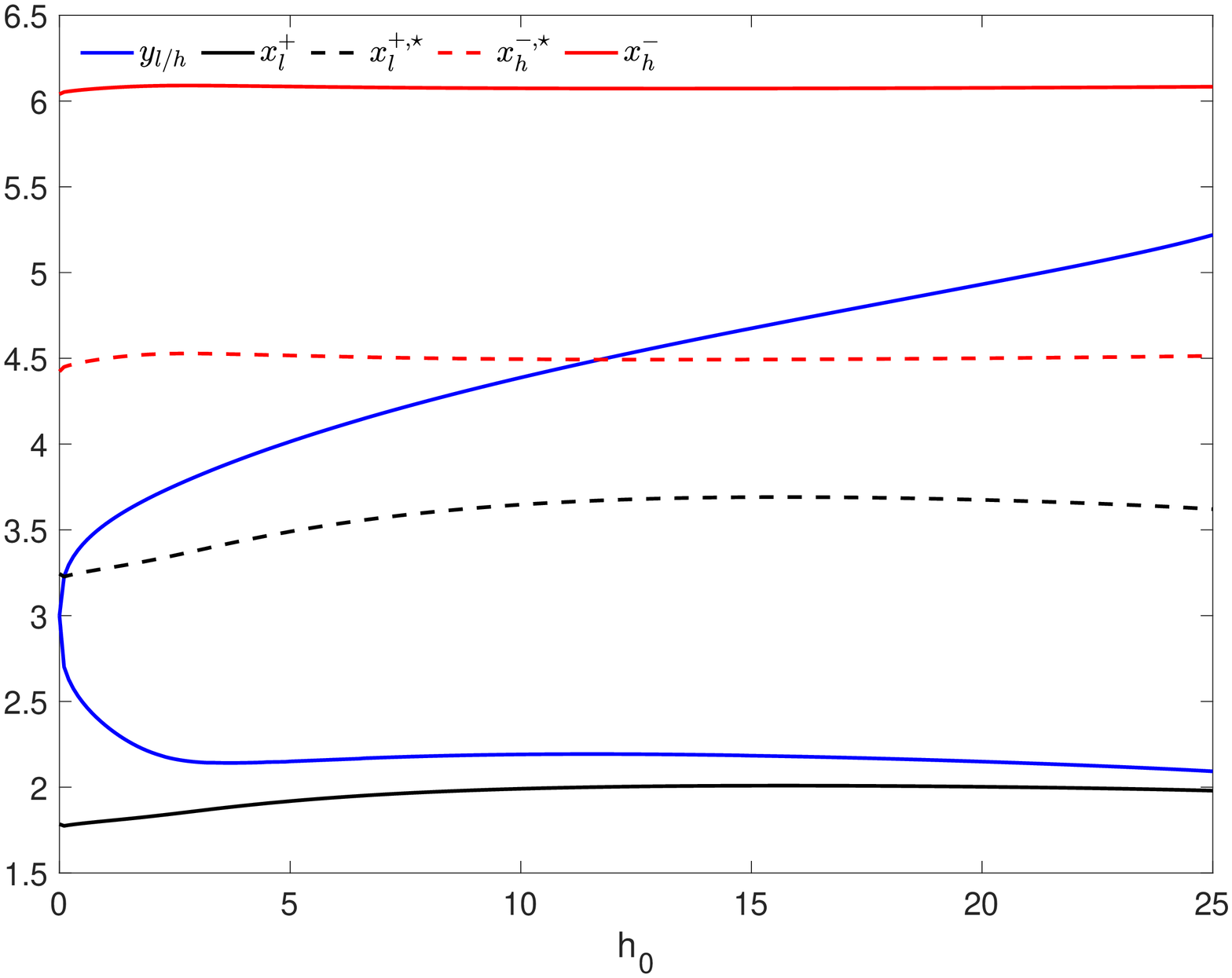}
	\caption{Equilibrium thresholds as a function of consumer switching cost $h_0$ given Table~\ref{tab:params-2} parameter values. We show the consumer thresholds $y_{\ell}, y_h$, the producer thresholds $x_\ell^+, x_h^-$ and respective impulse target levels $x_\ell^{\ast,+}, x_h^{\ast,-}$.}
	\label{fig:h0}
\end{figure}

\begin{remark}
{\rm It is possible for the impulse amounts to be so large as to lead to a double simultaneous control: producer's impulse instantaneously followed by the consumer switching. In this setting, the producer effectively forces the consumer to switch the regime by impulsing $X^\ast$ hard enough. This situation corresponds to $x^{-\ast}_h < y_\ell$, so that the impulse in the contraction regime moves $X^\ast$ into the respective switching region $(-\infty, y_\ell)$, and as a result the consumer immediately switches to the expansion regime. This situation occurs if, for instance, the drifts are  $\mu^-  = 0.01,\mu^+ = 0.1$, so that the consumer is not able to ever efficiently lower prices. Consequently, the producer is forced to fully control price reduction. We observe in the above situation the equilibrium thresholds of $x^{-\ast}_h = 3.04 < y_\ell = 3.69$. $\Box$}
\end{remark}

\section{Case study: diversification effect of vertical integration}\label{sec:oil-study}

The industrial organization of upstream and downstream segments is linked to anti-trust regulations. From a regulatory perspective, vertical integration could be used to increase market power and foreclose competitors. For example, see De Fontenay and Gans (2005) \cite{DG05} who develop a game theoretic model for the foreclosure effect of vertical concentration and Hasting and Gilbert (2005)~\cite{HG05} for the related empirical facts in the context of the US retail gasoline market. At the same time, consumers can benefit  from vertical integration of commodity producers; we refer to related analysis of electricity markets from a market equilibrium point of view (A\"id et al. (2011) \cite{ACPT11}) and an empirical point of view (Mansur (2007) \cite{M07}). Another virtue of vertical integration is its potential to reduce the long-term exposure of the firm to commodity price fluctuations. See Helfat and Teece (1987) \cite{HT87} for an empirical estimation of the hedge procured by vertical integration in the oil business.

In this section, we accordingly study whether or not downstream or upstream firms have an interest in being vertically integrated. To this end we consider a small firm that has no market power regarding the commodity price $X$ and focus on the case of the market equilibrium type~I (generic case). We then investigate whether the firm can benefit from a diversification effect by having activity both in the downstream consumer side and the upstream conversion side.

To make the case study concrete, we consider a simplified version of the crude oil and gasoline markets, the latter a shorthand for refined products, calibrated to the ballpark of the 2019 state of the world. 
Currently, world oil consumption is about 100~Mb/d (millions of barrels per day), normalized to 1 "barrel" per day. We take as a nominal initial price $X_0$ $=$ $50$~USD/b and a nominal volatility of crude $\sigma$ $=$ 10~USD/b. To calibrate our model, we consider that crude oil producers have a preferred range of prices that goes from $x^1_p = 30$~USD/b to $x^2_p =  100$~USD/b and that the average cost of oil extraction is $c_p = 30$~USD/b. This leads to a demand function $D_p(x) = 1 - 0.01 x$, which captures the low sensitivity of the demand of crude to prices. The crude is transformed into gasoline with a small amount of losses 5\%, so that the conversion factor is $\alpha =0.95$.

We set the transfer function of crude oil price to average price of gasoline to  $P(x) = 10 + 1.1x$, where $P(x)$ is also expressed in USD/b. There is evidence that the (pre-tax) price of gasoline is a linear function of the crude. For instance,  using monthly data of the Energy Information Agency of the US Department of Energy on refined products prices from January 1983 to November 2019~\footnote{Data available at \url{http://www.eia.gov/dnav/pet/pet_pri_refoth_dcu_nus_m.htm}.}, we regressed the US Total gasoline Retail sales by refineries $\hat P$ to the monthly crude oil price $\hat X$ and found a linear relation $$\hat P_m = 1.2 \hat X_m + 14 + \epsilon_m$$ with a regression $R^2 = 95\%$. Considering that the basket of refined products includes not just gasoline (even if it accounts for the largest share), we simplified the relation. Note that the condition  $p_1 \ge \alpha$ for having a downstream convex profit function holds.  Furthermore, refinery costs $c_c$ are highly variable between 4 to 10 USD/b. We take the higher value of $c_c=10$. Finally, we consider that the demand function for refined products, $D_c (P)=d'_0 - d'_1 P$ is such that $d'_0=5$~b/d of crude equivalent refined products and $d'_1=0.05$. With these parameters, the preferred range of crude prices for the consumer is between $x^1_c = 11$ and $x^2_c = 82$~USD/b. We consider fixed action costs both for the production firm and the downstream firm. We consider that the producer and the downstream firm lose two years of profit at optimal price to change state making $\kappa_0 = 2 \pi_p(\bar X_p)$  and $h_0 = 2 \pi_c(\bar X_c)$. Finally, we take $\mu_\pm = \pm 0.15$  per year, which implies that it takes 10 years for the crude price to increase by 1.5~USD.

\begin{table}[thb!]
        \centering
        \begin{tabular}{l r l l}
        \hline
        		& Value & Interpretation & Units\\ \hline\hline
        $\beta$ & $0.1$   	& Discount rate & \%/year \\
        $X_0$ & $50$   	& Initial oil price & USD/b \\
        $d_0$ & $1$ 		& Demand function for oil:  intercept & Mb/d \\
        $d_1$ & $0.01$     	& Demand function for oil: slope & Mb/d/(USD/b) \\
        $d'_0$ & $5$ 		& Demand function for gasoline: intercept & Mb/d \\
        $d'_1$ & $0.05$    	& Demand function for gasoline: slope & Mb/d/(USD/b) \\
        $\alpha$ & $0.95$	& Transformation rate & dimensionless  \\
        $p_0$   & $10$       	& Crude -- gasoline price function: intercept & USD/b  \\
        $p_1$   & $1.1$       	& Crude -- gasoline transfer price function: slope & USD/b/(USD/b) \\
        $c_p$   & $30$       	& Oil production cost & USD/b  \\
        $c_c$   & $5$       	& Refining cost & USD/b  \\
        $\mu_\pm$   & $\pm \, 0.15$       & Annualized crude drift parameters & USD/b \\
        $\sigma$   & $10$       & Annualized crude volatility & USD/b \\
        $h_0$   & $2 \pi_c(\bar X_c) = 29 $       & Consumption switching cost & USD  \\
        $\kappa_0$   & $2 \pi_p(\bar X_p) = 24.5$       & Production switching cost: fixed & USD  \\
        $\kappa_1$   & $0$       & Production switching cost: proportional & USD/b  \\
	 \hline\hline
        \end{tabular}
        \caption{Nominal values for model parameters for the crude oil case study.}
        \label{tab:params}
\end{table}

The resulting equilibrium type~I producer impulse strategy $\mathcal{C}^{\rm I,\ast}_p$ and consumer's switching strategy $\mathcal{C} ^{\rm I,\ast}_c$  associated to the calibration summarized in Table~\ref{tab:params} are given by
\begin{align}
\mathcal{C}^{\rm I,\ast}_p=\begin{bmatrix}
26, &  62, & -, & +\infty \\
-\infty, &  -, & 69, & 104 \\
\end{bmatrix}, \qquad
\mathcal{C}^{\rm I,\ast}_c=\begin{bmatrix}
22.5, &  87 \\
\end{bmatrix}.
\end{align}
Thus, in equilibrium, the crude price $X^\ast$ fluctuates between $22.5$ and $87$ USD/b, with potential excursions up to $104$ or down to $26$~USD/b at which point producers intervene.

\begin{figure}[thb!!]
	\centering
	\includegraphics[width=0.3\textwidth]{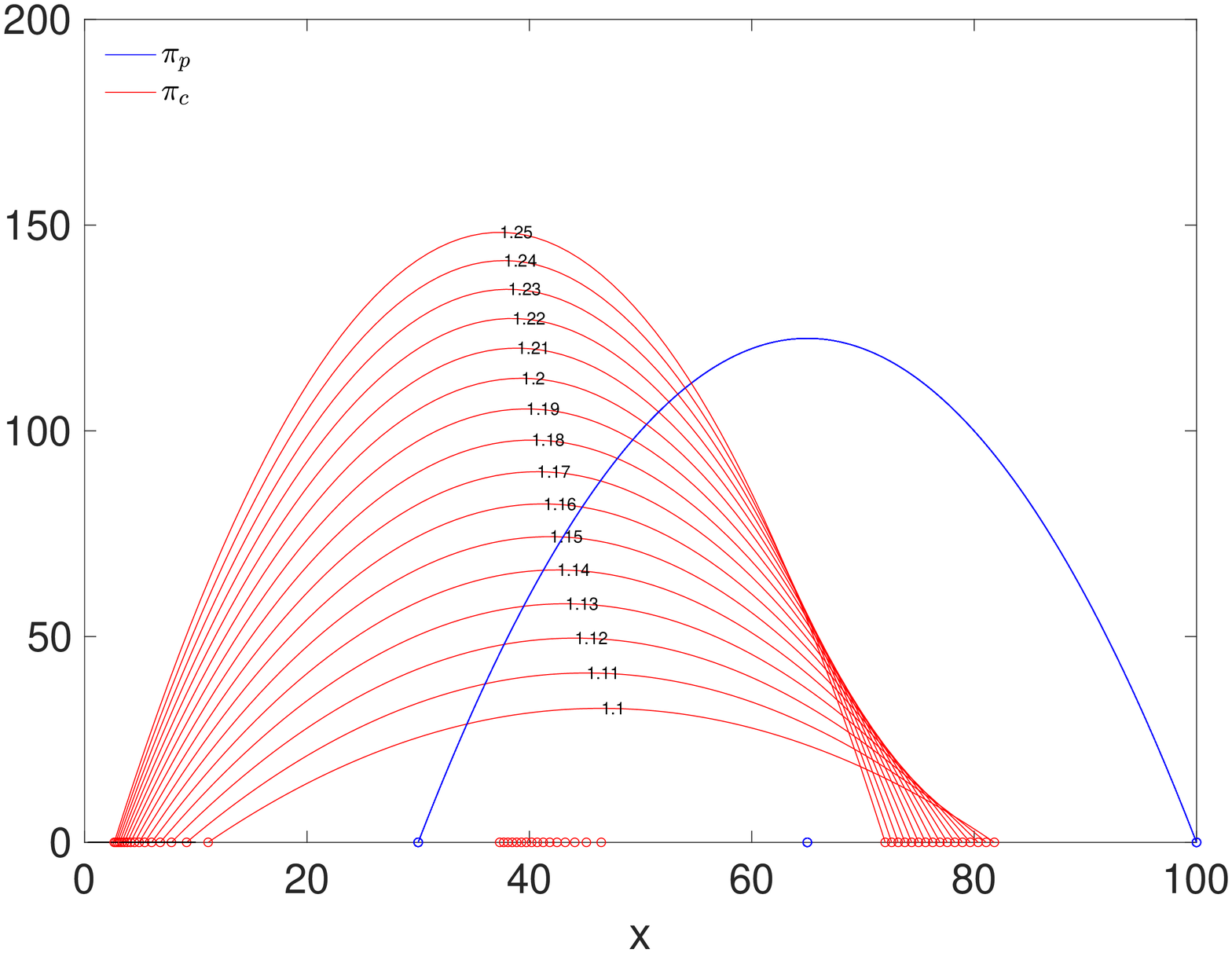}
	\includegraphics[width=0.3\textwidth]{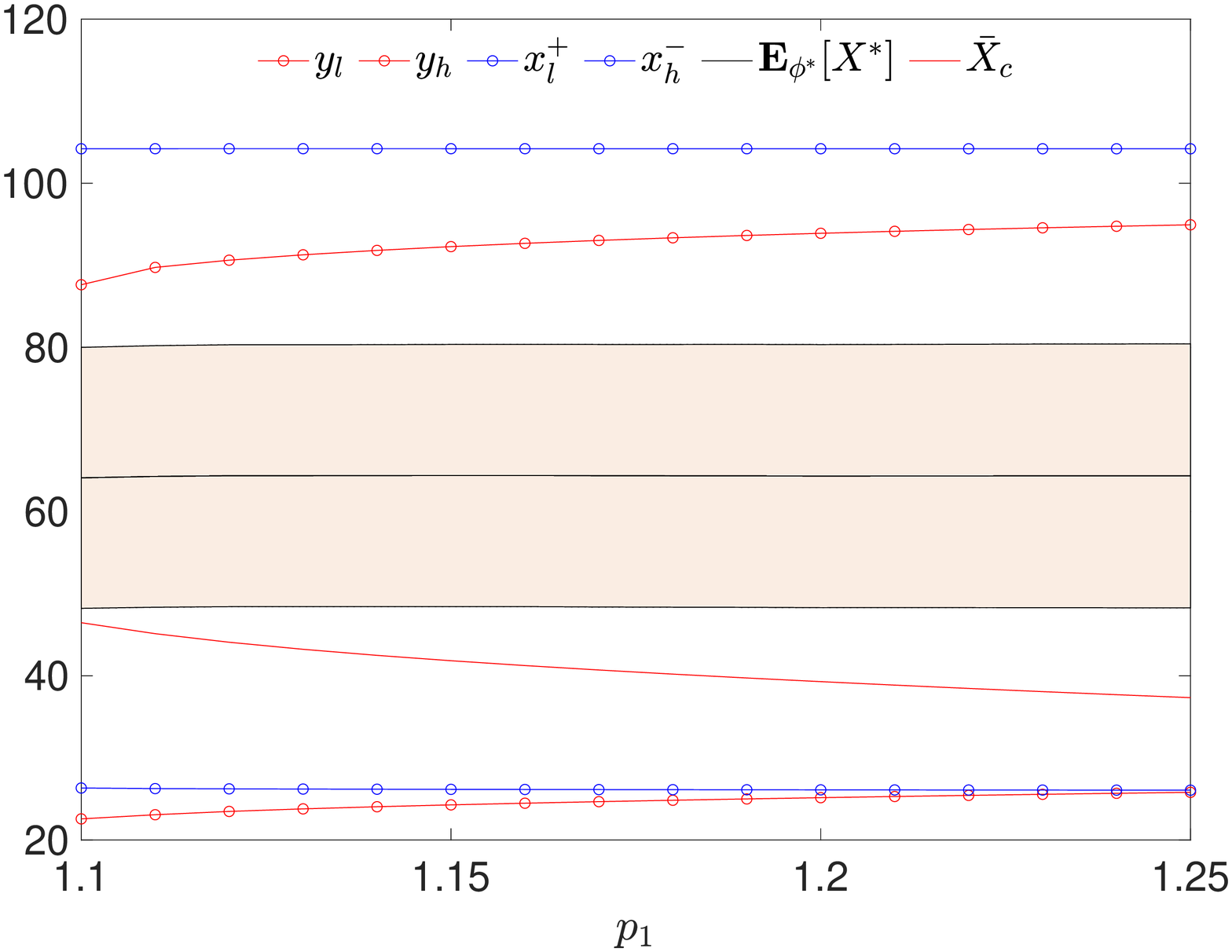}
	\includegraphics[width=0.3\textwidth]{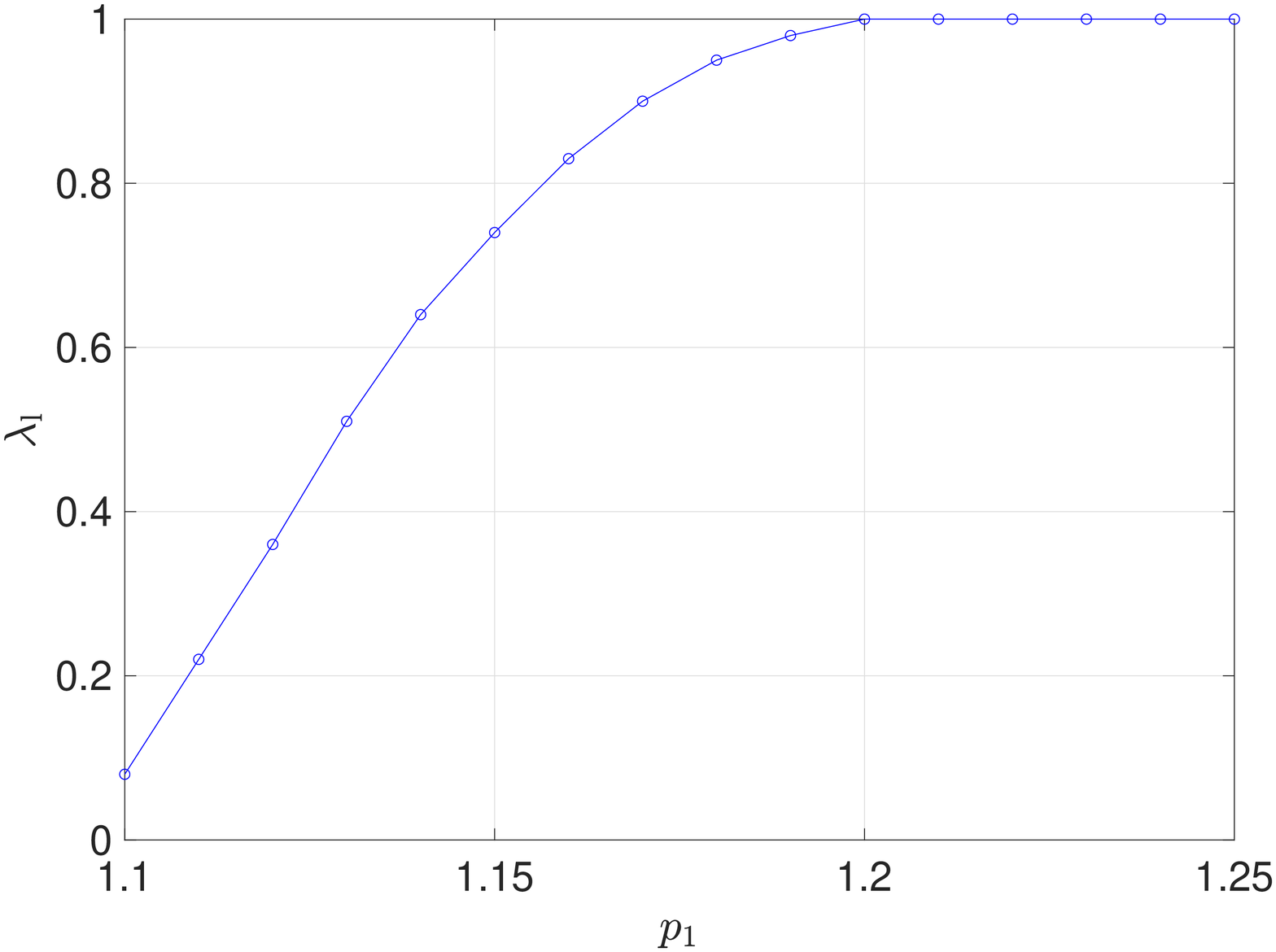}
	\caption{Left: Profit rate function of the consumer $\pi_c(x)$ (red) as the pass-through parameter $p_1$ is varied, as well as the fixed producer profit rate $\pi_p(x)$ (blue). Middle: Respective equilibrium  strategy thresholds $y_\ell, y_h$  (red) and $x_\ell^+, x_h^-$ (blue) as a function of $p_1$. We also plot $\bar{X}_c$ and $\E_{\phi^\ast}[ X^\ast]$, shading the typical commodity price range $[\E_{\phi^\ast}[ X^\ast] \pm \sigma_{\phi^\ast}( X^\ast)]$.  Right: Risk-minimizing integration level $\lambda^*$  as a function of $p_1$. }
	\label{fig:vieqt1pi}
\end{figure}
\begin{figure}[thb!]
	\centering
	\includegraphics[width=0.85\textwidth]{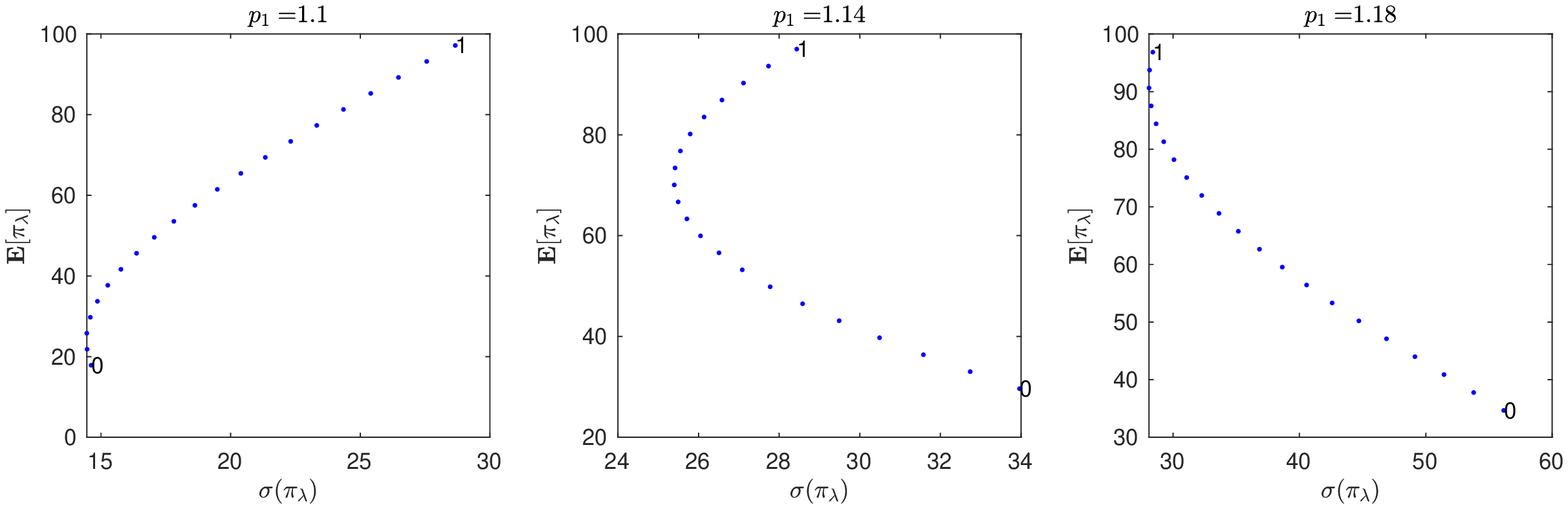}
	\caption{The curve $\lambda \mapsto (\sigma(\pi_\lambda),\E\big[ \pi_\lambda\big])$ as a function of the pass-through parameter $p_1$. }
	\label{fig:vieqt1p1}
\end{figure}

Now let us consider a small firm engaging in a fraction $\lambda \in (0,1)$ of activity in the downstream sector and $1-\lambda$ in the upstream sector. Her profit rate is thus $\pi_\lambda := \lambda \pi_c + \big( 1-\lambda \big) \pi_p$. The firm is vertically integrated when $0 < \lambda <1$.  Denote by $\sigma(\pi_\lambda)$ the standard deviation of her profit rate $\pi_\lambda(\cdot)$ integrated against the stationary distribution $\phi^\ast$ of $X^\ast$, and by $\E\big[\pi_\lambda\big] = \int \pi_\lambda(x) \phi^\ast(dx)$ the respective expected profit rate. To fix ideas and because the analysis is symmetric, we are interested in situations where a pure downstream firm ($\lambda=0$) would be better off having part of her activity in the upstream sector. This will take place when the upstream activity provides a higher expected profit rate and/or a lower risk as measured by $\sigma(\pi_\lambda)$. Figure~\ref{fig:vieqt1p1} presents the risk--return curves $\lambda \mapsto (\sigma(\pi_\lambda),\E\big[\pi_\lambda\big])$ as the pass-through parameter $p_1$ increases from the nominal value of $1.1$ to $1.18$. We observe that for low values of $p_1$ diversification gains are limited: expected profit rate goes up but risk also increases. For moderate $p_1$ a pure downstream firm unambiguously benefits from some upstream activity: she can achieve the same level of risk with a higher expected profit. For high $p_1$ the upstream sector dominates completely with lower risk and higher average profit. Figure~\ref{fig:vieqt1pi} (Right) shows the critical integration level $\lambda^*$ that minimizes the risk $\sigma(\pi_\lambda)$ and captures the ``variance--minimal'' business model.

We observe that for high enough pass-through values, being a producer $(\lambda=1)$ dominates any other combination of activity. This phenomenon happens even though the maximum profit rate of the downstream firm $\pi_c(\bar X_c)$ increases and gets higher than the producer's maximum profit rate function $\pi_p(\bar X_p)$ as shown in the left panel of Figure~\ref{fig:vieqt1pi}. As shown by the evolution of equilibrium price range in Figure~\ref{fig:vieqt1pi} (Middle), as $p_1$ increases, the equilibrium is getting more and more detrimental to the downstream firm. The shaded salmon area represents the interval $[ \E_{\phi^\ast}[X^\ast] - \sigma_{\phi^\ast}(X^\ast), \E_{\phi^\ast}[X^\ast] + \sigma_{\phi^\ast}(X^\ast)]$ where commodity prices tend to reside. The average commodity price remains stable around $65$~USD/b, and its the standard deviation is not affected much by $p_1$ either, while $\bar X_c$ is steadily decreasing. Thus, since the expected profit rate of the integrated firm is a function of the expected price and its standard deviation, it does not change much. But its variance grows as a function of $p_1$ and thus increases significantly. To conclude, in our model we do observe a diversification effect obtained by mixing upstream and downstream activities, however the integration gains  depend closely on the pass-through parameter $p_1$ which serves as a  transmission channel of the volatility of the commodity price to the retail price.

\section{Conclusion}\label{sec:conclude}

We showed how a simple model of competition between upstream and downstream representative firms having different pace of intervention can lead to a rich variety of equilibria, potentially non--unique. The fact that the upstream firm can impact the price more rapidly than the downstream firm gives the producer a significant advantage, enabling him to lock the consumer in the producer's preferred range of prices. Further, in the case of the crude oil market and its refinery products, we stressed how the pass-through parameter $p_1$ plays a key role for the diversification effect induced by vertical integration. Vertical integration is beneficial for low values of $p_1$ while for higher values, production dominates downstream activity both in terms of expected profit rate and profit standard deviation.

\section{Proofs}\label{sec:proofs}

\subsection{Proof of Proposition~\ref{prop:cons-inactive}}
\begin{proof}
The proof is standard, nonetheless we give some detail for the reader's convenience. To ease the notation, let us consider only the case $\mu = \mu_+$, the other case being identical. Let $w^+ _0 (x) = \Dd^{+}(x)+u^{+}(x)$, where the parameters $ (\lambda^+ _{1,0}, \lambda^+ _{2,0}) \in \mathbb R$ solve the system \eqref{inaction-eq1}-\eqref{inaction-eq2}. By construction the function $w_0 ^+$ is of class $\mathcal C^2$ everywhere. Hence we can apply It\^o's formula to $e^{-\beta s} w_0 ^+ (X_s)$ on the time interval $[0, t \wedge \zeta_n)$, yielding
\begin{align*} e^{-\beta (t\wedge \zeta_n) } w_0 ^+ (X_{t\wedge \zeta_n}) & = w_0 ^+(x) + \int_{0+} ^{t \wedge \zeta_n} e^{-\beta s} \left[ w_0 ^{+ \prime}(X_{s-}) (\mu_+ ds + \sigma dW_s -dN_s) -\beta w_0 ^+ (X_{s-})ds \right] \\
& \quad + \frac{\sigma^2}{2} \int_{0+} ^{t \wedge \zeta_n} e^{-\beta s} w_0 ^{+ \prime \prime} (X_{s-}) ds + \sum_{0<s\le t \wedge \zeta_n} e^{-\beta s} \left[ \Delta w_0 ^+ (X_s) + w_{0} ^{+ \prime} (X_{s-})\Delta N_s \right], \end{align*}
where $(\zeta_n)_{n \ge 1}$ is a localizing sequence of stopping times along which the local martingale part above is in fact a true martingale. We use the notation $^\prime$ and $^{\prime \prime}$ for, respectively, the first and second derivative in $x$.
Taking expectations on both sides, using the fact that $w_0 ^+$ solves the ordinary differential equation \eqref{inaction-PDE} and letting $n \to \infty$, we obtain
\[ \mathbb E\left [ e^{-\beta t } w_0 ^+ (X_t) \right] = w_0 ^+ (x) - \mathbb E \left[\int_{0+} ^t e^{-\beta s} \pi_c (X_s) ds + \sum_{0<s\le t} \Delta w_0 ^+ (X_s)\right] . \]
Now, notice that on the jumps of $X$ we have $\Delta w_0 ^+ (X_s) = w_0 ^+ (x^{+ \ast} _r) - w_0 ^+ (x^{+} _r)$, which is zero by the boundary conditions \eqref{eq:bd_noswitch}, hence the jump part in the equation above vanishes. Moreover, being $X_t \in [x_\ell ^+ , x_h ^+]$ for all $ t \ge 0$, we have by dominated convergence that $\mathbb E\left [ e^{-\beta t } w_0 ^+ (X_t) \right] \to 0$ as $ t \to \infty$. Therefore, letting $t \to \infty$ we can conclude that $w_0 ^+ (x) = J_c ^+ (x; N, \mu^+)$ for all $x \in [x_\ell ^+ , x_h ^+]$.
\end{proof}

Fig.~\ref{fig:BR_NoSwiI} illustrates the fact that a threshold switching strategy might not be optimal in all potential situations by considering the shape of $w^\pm_0(x)$. In the right panel, we have comonotonicity between $w^+$ and $w^-$: the consumer is incentivised to switch to $\mu^+$ when $X_t$ is low and to $\mu^-$ when $X_t$ is high. In that situation, we expect that a threshold--type strategy is a best response. In contrast, on the left panel two other cases are illustrated. First, we see that it is possible that $w^+(\cdot) \ll w^-(\cdot)$, in other words the consumer has a strong preference to one regime over the other. In that case, the expansion regime could be absorbing, i.e.~it is \emph{optimal} to never switch to $\mu_-$. In the plot this would happen if $h_0$ is low (dashed line), whereby $w^-(x) > w^+(x) - h_0$ and it is optimal to switch to $\mu_-$ at any $x$ (therefore $\mu_+$ would never be observed in the resulting game evolution). At the same time, we see that if $h_0$ is moderate (the solid line), then the region where $w^-_0(x) > w^+_0(x) - h_0$ is \emph{disconnected}, so it is likely that a two--threshold switching strategy is an optimal response.

\begin{figure}[ht]
	\centering
	\subfigure[][$\mathcal{C}_p=\begin{bmatrix}
	1.5 &3.0 & 2.8 & 4.0\\
	1.2 & 2.5  & 2.6  & 4.2
	\end{bmatrix}$ ]{
		\label{fig:BR_NoSwiI} 
		\includegraphics[width=0.35\textwidth]{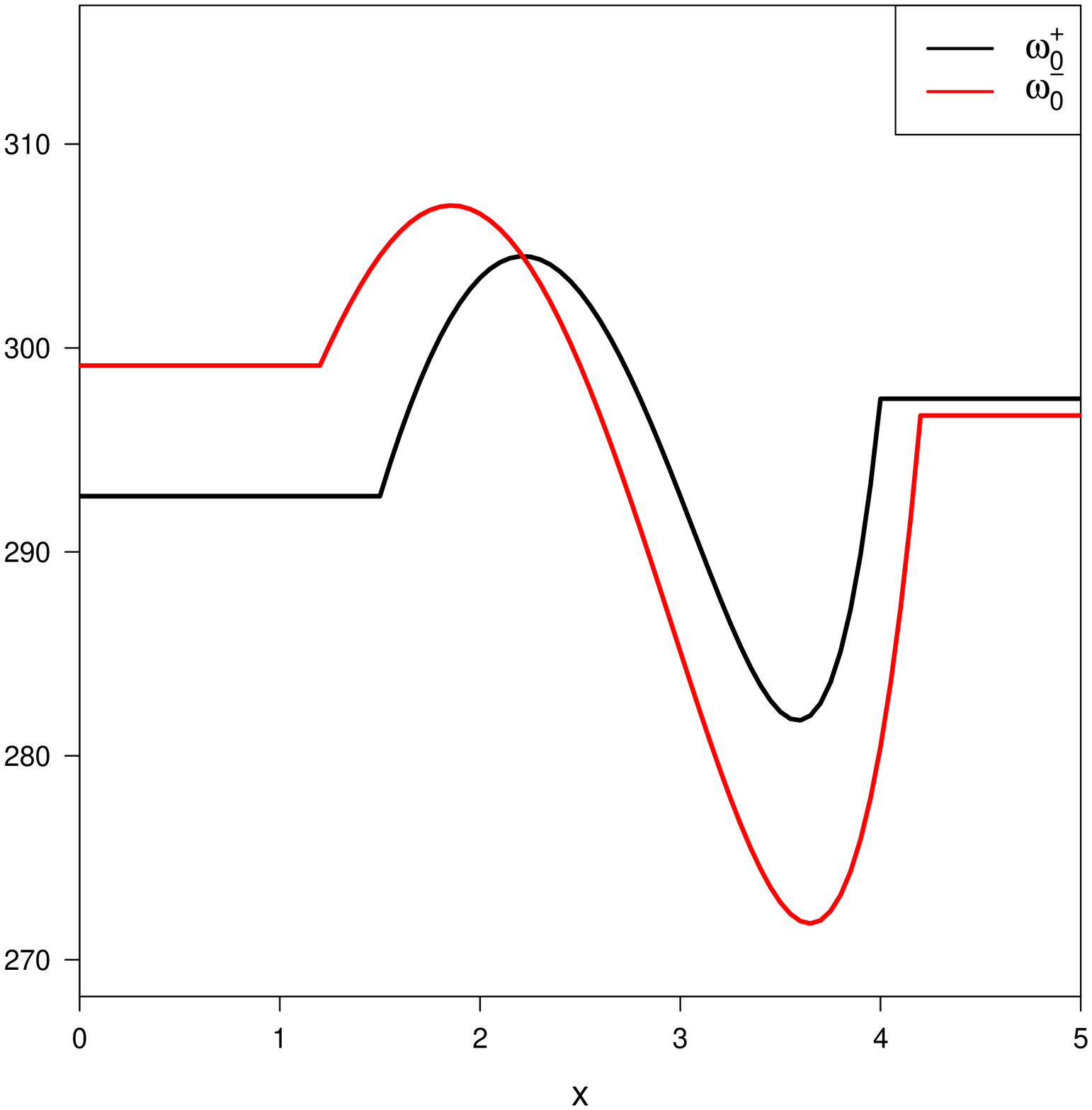}}
	\subfigure[][$\mathcal{C}_p=\begin{bmatrix}
	1.2 &3.0 & 2.8 &  3.5\\
	1.0 & 2.8 &2.4 & 3.2
	\end{bmatrix}$ ]{
		\label{fig:BR_NoSwiII}
		\includegraphics[width=0.35\textwidth]{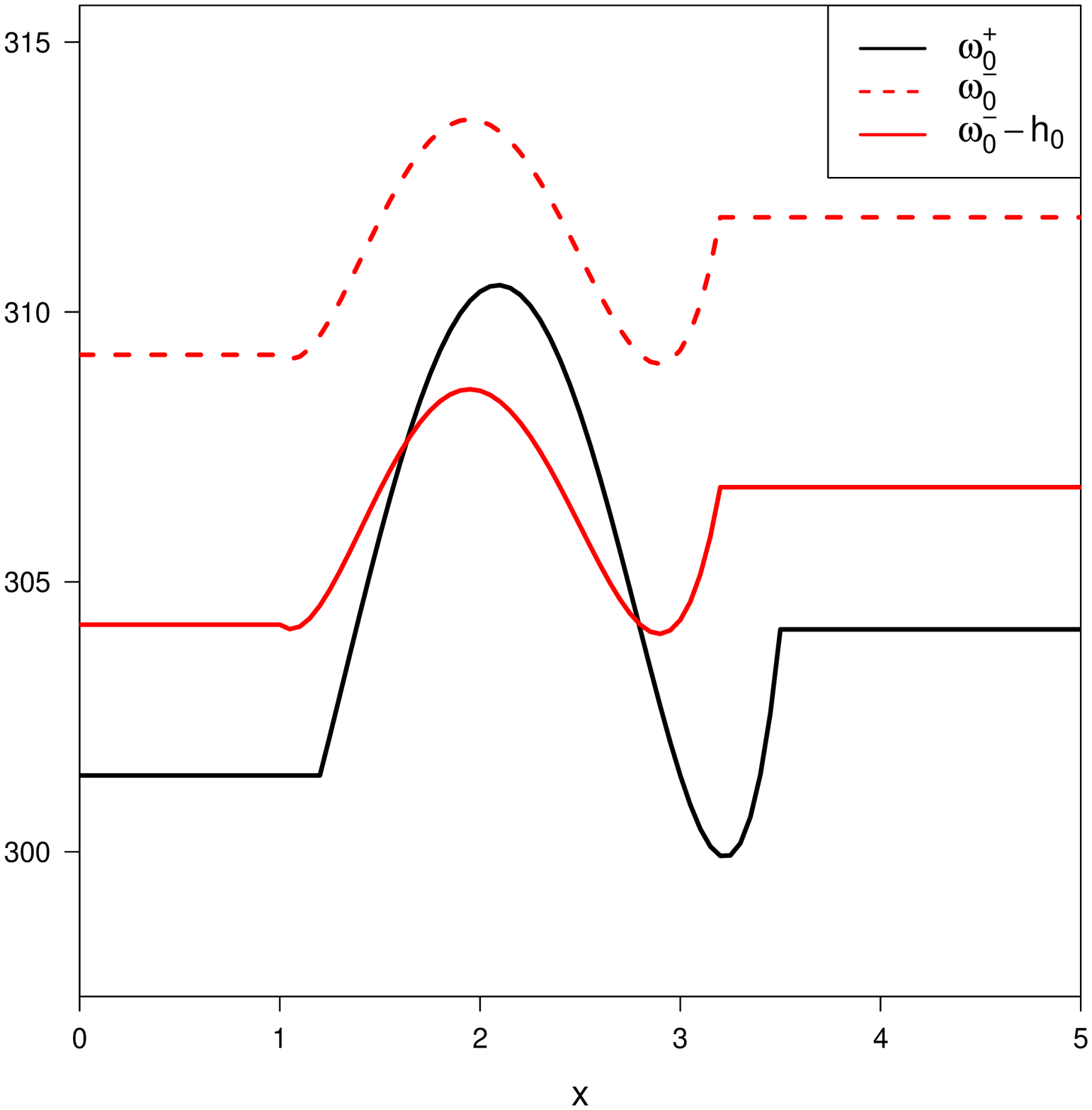}}
	\caption{No--Switch payoffs $\omega^{\pm}_0(x)$ of the consumer given the producer's strategy $\mathcal{C}_p$. }
	\label{fig:BR_NoSwi} 
\end{figure}

\subsection{Proof of Proposition~\ref{prop:cons-best-response}}
\begin{proof}[Proof of Proposition~\ref{prop:cons-best-response}]
By construction, the functions $w^\pm (x)$ in \eqref{eq:Con_DS_payoff} solve the system of VIs in \eqref{eq:QVI-cons1}-\eqref{eq:QVI-cons2} and satisfy $w^+ \in \C^2 ((x_\ell ^+ , x_h ^+) \setminus \{y_\ell\}) \cap \C^1 ((x_\ell ^+, x_h ^+)) \cap \C^0 (\mathbb R)$ and $w^- \in \C^2 ((x_\ell ^-, x_h ^-) \setminus \{y_h\}) \cap \C^1 ((x_\ell ^- , x_h ^-)) \cap \C^0 (\mathbb R)$. Let $N$ denote the pure jump component in $X$'s dynamics associated to the producer's strategy with thresholds $(x_\ell ^\pm, x_\ell ^{\pm \ast}; x_h ^\pm, x_h ^{\pm \ast})$. The proof is structured in two steps.\medskip

\noindent\emph{-- Step 1: optimality.} The following verification argument proves that such functions coincide with the best--response payoffs of the consumer and that the switching times $\hat \sigma_i$ as in the statement are optimal provided they are admissible. First, by an approximation procedure as in the first part of the proof in \cite[Theorem 3.3]{ABCCV20}, we can assume without loss of generality that $w^+ \in \C^2 ((x_\ell ^+, x_h ^+)) \cap \C^0 (\mathbb R)$. Let $\mu_{0-} = \mu_+$. Consider two consecutive switching times of any consumer admissible strategy, say $\sigma_{2i}$ and $\sigma_{2i +1}$, for $i \ge 0$ with the convention $\sigma_0 =0$, and recall that over $[\sigma_{2i}, \sigma_{2i + 1})$ the state process $X$ has drift $\mu_+$.
Applying It\^o's formula to $e^{-\beta t} w^+ (X_t)$ over the interval $[\sigma_{2i} \wedge T , \sigma_{2i +1}\wedge T)$, for some finite $T >0$, we obtain
\begin{align*}
e^{-\beta (\sigma_{2i +1}\wedge T)} w^+ (X_{\sigma_{2i +1}\wedge T}) & = e^{-\beta (\sigma_{2i}\wedge T)} w^+ (X_{\sigma_{2i}\wedge T})\\
& \quad + \int_{\sigma_{2i} \wedge T} ^{\sigma_{2i +1} \wedge T} e^{-\beta s} \left\{ w^{+}_x (X_s ) dX_s + \frac{\sigma^2}{2}\w_{xx} ^{+} (X_s ) ds -\beta X_s ds \right\} \\
& \quad + \sum_{\sigma_{2i} \wedge T < u \le \sigma_{2i+1} \wedge T} e^{-\beta s}\left\{ \Delta w^+ (X_s) + w_x ^{+} (X_s) \Delta N_s\right\}.
\end{align*}
Using the dynamics $dX_s = \mu_+ ds + \sigma dW_s -dN_s$ between the two switching times above, localizing the martingale part through a suitable sequence of stopping times $\zeta_n$ and taking expectation on both sides, we obtain
\begin{align*}
\mathbb E & \left[ e^{-\beta \zeta^{2i+1} _{n,T}} w^+ (X_{\zeta^{2i+1}_{n,T}})\right]  = \mathbb E \left[ e^{-\beta \zeta^{2i} _{n,T}} w^+ (X_{\zeta^{2i} _{n,T}})\right]\\
&  + \mathbb E \left[ \int_{\zeta^{2i} _{n,T}} ^{\zeta^{2i+1} _{n,T}} e^{-\beta s} \left\{ w_x ^{+} (X_s )\mu_+ + \frac{\sigma^2}{2} w_{xx} ^{+} (X_s ) -\beta X_s \right\} ds \right]
 + \mathbb E \left[ \sum_{\zeta^{2i} _{n,T} \le s < \zeta^{2i+1} _{n,T}} e^{-\beta s} \Delta w^+ (X_s)\right],
\end{align*}
where we set $\zeta_{n,T}^{k} := \sigma_{k}\wedge \zeta_n \wedge T$.
Notice first that the third summand above vanishes since between $\sigma_{2i}$ and $\sigma_{2i +1}$, the state process $X$ can jump only due to the impulses of the producer, hence at any of such jumps the $\C^0$-pasting condition at $x^+_h$ yields
\[ \Delta w^+ (X_s ) = (w^+ (X_s ) - w^+ (X_{s-} ))\mathbf 1_{(\Delta X_s \neq 0)} = (w^+ (x_h ^\ast) -w^+ (x_h ^+))\mathbf 1_{(\Delta X_s \neq 0)} = 0.\]
Regarding the second summand, we use the variational inequality \eqref{eq:QVI-cons1} so that we can write
\begin{align*}
\mathbb E \left[ e^{-\beta \zeta^{2i+1} _{n,T}} w^+ (X_{\zeta^{2i+1} _{n,T}})\right] \le \mathbb E \left[ e^{-\beta \zeta^{2i} _{n,T}} w^+ (X_{\zeta^{2i} _{n,T}})\right] - \mathbb E \left[ \int_{\zeta^{2i} _{n,T}} ^{\zeta^{2i+1} _{n,T}} e^{-\beta s} \pi_c (X_s ) ds \right].
\end{align*}
Now, letting $n \to \infty$ we obtain by dominated convergence that
\[ \mathbb E  \left[ e^{-\beta \sigma_{2i+1} \wedge T} w^+ (X_{\sigma_{2i+1}\wedge T})\right] \le \mathbb E \left[ e^{-\beta \sigma_{2i} \wedge T} w^+ (X_{\sigma_{2i}\wedge T})\right] - \mathbb E \left[ \int_{\sigma_{2i}} ^{\sigma_{2i+1}} e^{-\beta s} \pi_c (X_{s \wedge T} ) ds \right], \quad i \ge 0.\]
Analogously, we can get the same inequality between the switching times $\sigma_{2i -1}$ and $\sigma_{2i}$ for $i \ge 1$ with $w^-$ replacing $w^+$, so summing them all up we have
\begin{align*}  - \mathbb E \left[ \int_0 ^{(\sup_i \sigma_i) \wedge T} \pi_c (X_s) ds\right] & \ge \sum_{i \ge 0} \mathbb E \left[ e^{-\beta \sigma_{2i+1} \wedge T} w^+ (X_{\sigma_{2i+1}\wedge T}) - e^{-\beta \sigma_{2i} \wedge T} w^+ (X_{\sigma_{2i}\wedge T}) \right] \\
& \quad + \sum_{i \ge 1} \mathbb E \left[ e^{-\beta \sigma_{2i} \wedge T} w^- (X_{\sigma_{2i}\wedge T}) - e^{-\beta \sigma_{2i-1} \wedge T} w^- (X_{\sigma_{2i-1}\wedge T}) \right] .
\end{align*}
Note that by admissibility $\sum_{i \ge 1} e^{-\beta \sigma_i} \in L^2 (\mathbb P)$, which implies  $\sup_{i \ge 1} \sigma_i = +\infty$ almost surely. Then, using the $\C^0$-pasting conditions in \eqref{smooth-pasting} and letting $T \to \infty$, we finally obtain
\[ \mathbb E \left[ \int_0 ^{+\infty} \pi_c (X_s) ds\right] + \sum_{i \ge 1} \mathbb E \left[ e^{-\beta \sigma_i} h_0 \right] \le w^+ (x) ,\]
for any admissible consumer strategy $(\sigma_i)$. Applying the same arguments to the sequence $(\hat \sigma_i)$ we would get equalities instead of inequalities everywhere. The proof for the case $\mu_{0-}=\mu_-$ is analogous and therefore is omitted.\medskip

\noindent\emph{-- Step 2: admissibility.} To conclude we show that the switching times $\hat \sigma_i$ are admissible, i.e.~they belong to the set $\mathcal A_c$. To do so, notice first that $(\hat \sigma_i)$ is a sequence of $[0,\infty)$-valued stopping times. Hence, it remains to show that a.s.~$\hat \sigma_i < \hat \sigma_{i+1}$ for all $i\ge 0$, and $\sum_{i \ge 1} e^{-\beta \hat \sigma_i} \in L^2 (\mathbb P)$. The former follows from $y_\ell < y_h$. For the latter, we can proceed as in the proof of \cite[Prop. 4.7]{ABCCV20}, whose main idea is to write each $\sigma_i$ as a sum of independent exit times for some (scaled) Brownian motion with possibly different drifts and initial conditions. First, let us denote $(\tau^\prime _k)_{k \ge 1}$ the increasing
sequence of stopping times exhausting the intervention times of both players. Therefore we have
\begin{align*}
\mathbb E\left[ \left( \sum_{i \ge 1} e^{-\beta \sigma_i}\right)^2\right] & \le \mathbb E\left[ \left( \sum_{k \ge 1} e^{-\beta \tau^\prime _k}\right)^2\right] \le \lim_{m \to\infty} 2 \mathbb E\left[ \sum_{1\le k \le  r \le m} e^{- \beta (\tau^\prime _r + \tau^\prime _k)}\right] \\
& \le \lim_{m \to\infty} 2 \mathbb E\left[ \sum_{1\le k \le m} e^{- 2\beta \tau^\prime _k }\right] = 2 \mathbb E\left[ \sum_{k \ge 1} e^{- 2\beta \tau^\prime _k }\right],
\end{align*}
hence it suffices to prove that $\sum_{k \ge 1} e^{- 2\beta \tau^\prime _k } \in L^1 (\mathbb P)$. Now, notice that $\tau^\prime _k$, $k \ge 1$, can be represented as  $\sum_{r=1}^k \zeta_r$, where $\zeta_r$ is a sequence of independent random variables distributed as the exit time, say $\zeta^{z,\mu}$, of one of the processes $z + \mu t + \sigma W_t$ with
\[ (z, \mu) \in \mathcal Z^\pm :=  \left\{ (y_h, \mu_+),(y_{\ell}, \mu_- ), (x_{h}^{+ \ast} , \mu_+), ( x_\ell ^{- \ast} , \mu_-) \right\} ,  \]
from the respective intervals
\[ (-\infty, y_h ), \quad (y_\ell, +\infty), \quad (-\infty, x_h ^+), \quad (x_\ell ^- , +\infty). \]
Due to the independence of the sequence $\zeta_r$ we have
\begin{align*}
\mathbb E\left[ \sum_{k \ge 1} e^{- 2\beta \tau^\prime _k }\right]  = \sum_{k \ge 1} \prod_{r=1}^k \mathbb E\left[ e^{- 2\beta \zeta_r }\right] \le  \sum_{k \ge 1} \left(\mathbb E\left[ e^{- 2\beta \min_{(z,\mu) \in \mathcal Z^\pm}\zeta^{z,\mu} }\right]\right)^k ,
\end{align*}
which is a convergent geometric series, due to $\beta >0$ and the fact that $\zeta^{z,\mu} > 0$ almost surely for all $(z,\mu) \in \mathcal Z^\pm$. This shows that sequence of switching times $\hat \sigma_i$ is an admissible consumer's strategy and concludes the proof.
\end{proof}

\subsection{Proofs of Propositions~\ref{prop:prod-best-response1} and~\ref{prop:prod-best-response2}}
\begin{proof}[Proof of Proposition~\ref{prop:prod-best-response1}]
Let $v: \{\mu_- , \mu_+ \} \times \mathbb R \to \mathbb R$ be the function defined as $v(\mu_\pm , x) = v^\pm (x)$, with $v^\pm$ as in \eqref{eq:DS_OneSide}. By construction, the functions $(v^+, v^-)$ solve the system of VIs in \eqref{eq:QVI-prod-nonpree} and moreover $v^\pm \in \C^2 ((x_\ell ^+ , x_h ^-) \setminus \{y_\ell, y_h\}) \cap \C^0 (\mathbb R)$, hence not necessarily $\C^1$ at the points $y_\ell, y_h$. Recall that $\mu_t = \mu_+ \sum_{i=0}^\infty \mathbf 1_{\{ \sigma_{2i} \le t < \sigma_{2i+1} \} } + \mu_-  \sum_{i=1}^\infty \mathbf 1_{\{ \sigma_{2i-1} \le t < \sigma_{2i} \} }$, $t \ge 0$, where without loss of generality we can assume $\sigma_i$ is the $i$-th switching instance taken by the consumer in the case $\mu_{0-}=\mu_+$ (remember the convention $\sigma_0 =0$). The other case $\mu_{0-}=\mu_-$ can be treated in a similar way, it is therefore omitted. We split the rest of the proof in two steps.\medskip

\noindent\emph{-- Step 1: optimality.} The following verification argument proves that such functions coincide with the best--response payoffs of the producer and that the impulse strategy as in the statement is optimal provided it is admissible. First, by an approximation procedure as in the first part of the proof in \cite[Theorem 3.3]{ABCCV20}, we can assume without loss of generality that $v^\pm \in \C^2 ((x_\ell ^+, x_h ^-)) \cap \C^0 (\mathbb R)$. Consider any producer admissible strategy $(\tau_i, \xi_i)_{i \ge 1}$ as in the first part of Definition \ref{def:adm}.
Applying It\^o's formula to $e^{-\beta t} v (\mu_t, X_t)$ over the interval $[ \sigma_{2i} \wedge T , \sigma_{2i +1}\wedge T)$, for some finite $T >0$, we obtain
\begin{align*}
e^{-\beta (\sigma_{2i +1}\wedge T)} v (\mu_{\sigma_{2i +1}\wedge T}, X_{\sigma_{2i +1}\wedge T}) & = e^{-\beta (\sigma_{2i +1}\wedge T)} v^+ (X_{\sigma_{2i +1}\wedge T}) \\
& = e^{-\beta (\sigma_{2i}\wedge T)} v^+ (X_{\sigma_{2i}\wedge T})\\
& \quad + \int_{\sigma_{2i} \wedge T} ^{\sigma_{2i +1} \wedge T} e^{-\beta s} \left\{ v^{+}_x (X_s ) dX_s + \frac{\sigma^2}{2} v^+ _{xx} (X_s ) ds -\beta v^+(X_s) ds \right\} \\
& \quad + \sum_{\sigma_{2i} \wedge T < s \le \sigma_{2i+1} \wedge T} e^{-\beta s}\left\{ \Delta v^+ (X_s) + v_x ^{+} (X_s) \Delta N_s \right\},
\end{align*}
where the first equality comes from the fact that over $[ \sigma_{2i} \wedge T , \sigma_{2i +1}\wedge T)$, the drift equals $\mu_+$ (remember that $\mu_{0-}=\mu_+$). Using the dynamics $dX_s = \mu_+ ds + \sigma dW_s -dN_s$, with $N_t := \sum_{i \ge 1} \xi_i \mathbf 1_{\{ \tau_i \le t \} }$, between the two switching times above, localizing the martingale part through a suitable sequence of stopping times $\zeta_n$ and taking expectation on both sides, we obtain
\begin{align*}
\mathbb E & \left[ e^{-\beta \zeta^{2i+1} _{n,T}} v^+ (X_{\zeta^{2i+1}_{n,T}})\right]  = \mathbb E \left[ e^{-\beta \zeta^{2i} _{n,T}} v^+ (X_{\zeta^{2i} _{n,T}})\right]\\
& + \mathbb E \left[ \int_{\zeta^{2i} _{n,T}} ^{\zeta^{2i+1} _{n,T}} e^{-\beta s} \left\{ v_x ^{+} (X_s )\mu_+ + \frac{\sigma^2}{2} v_{xx} ^{+} (X_s ) -\beta v^+(X_s) \right\} ds \right]
 + \mathbb E \left[ \sum_{\zeta^{2i} _{n,T} \le s < \zeta^{2i+1} _{n,T}} e^{-\beta s} \Delta v^+ (X_s)\right],
\end{align*}
where we set $\zeta_{n,T}^{k} := \sigma_{k}\wedge \zeta_n \wedge T$.
For the third summand above, notice that between $\sigma_{2i}$ and $\sigma_{2i +1}$, due to the non-local term in the variational inequality \eqref{eq:QVI-prod-nonpree}, the state process $X$ can jump only due to the impulses of the producer and at any of such jumps we have
\begin{align*} \Delta v^+ (X_{\tau_i} ) & \le -K_p (\xi _i), \quad i \ge 0,
\end{align*}
implying
\[ \mathbb E \left[ \sum_{\zeta^{2i} _{n,T} \le s < \zeta^{2i+1} _{n,T}} e^{-\beta s} \Delta v^+ (X_s)\right] \le \mathbb E \left[ \sum_{j: \zeta^{2i} _{n,T} \le \tau_j < \zeta^{2i+1} _{n,T}} e^{-\beta s} K_p (\xi_j )\right]. \]
Regarding the second summand, we use the variational inequality \eqref{eq:QVI-prod-nonpree} so that we can write
\begin{align*}
\mathbb E \left[ e^{-\beta \zeta^{2i+1} _{n,T}} v^+ (X_{\zeta^{2i+1} _{n,T}})\right] \le & \, \mathbb E \left[ e^{-\beta \zeta^{2i} _{n,T}} v^+ (X_{\zeta^{2i} _{n,T}})\right] - \mathbb E \left[ \int_{\zeta^{2i} _{n,T}} ^{\zeta^{2i+1} _{n,T}} e^{-\beta s} \pi_p (X_s ) ds \right]\\
& + \mathbb E \left[ \sum_{j: \zeta^{2i} _{n,T} \le \tau_j < \zeta^{2i+1} _{n,T}} e^{-\beta s} K_p (\xi_j )\right].
\end{align*}
Now, due to $X_t \in [ x_\ell ^+ , x_h ^-]$ for all $t \ge 0$, letting $n \to \infty$ we obtain by dominated convergence that
\begin{align*} \mathbb E  \left[ e^{-\beta \sigma_{2i+1} \wedge T} v^+ (X_{\sigma_{2i+1}\wedge T})\right] \le \, &  \mathbb E \left[ e^{-\beta \sigma_{2i} \wedge T} v^+ (X_{\sigma_{2i}\wedge T})\right] - \mathbb E \left[ \int_{\sigma_{2i}} ^{\sigma_{2i+1}} e^{-\beta s} \pi_p (X_{s \wedge T} ) ds \right] \\
& + \mathbb E \left[ \sum_{j: \sigma_{2i} \le \tau_j < \sigma_{2i+1}} e^{-\beta s} K_p (\xi _j )\right], \quad i \ge 0.\end{align*}
Analogously, we can get the same inequality between the switching times $\sigma_{2i -1}$ and $\sigma_{2i}$ for $i \ge 1$ with $v^-$ replacing $v^+$, so summing them all up we have
\begin{align}  - \mathbb E \left[ \int_0 ^{(\sup_i \sigma_i) \wedge T} \pi_p (X_s) ds\right] & \ge \sum_{i \ge 0} \mathbb E \left[ e^{-\beta \sigma_{2i+1} \wedge T} v^+ (X_{\sigma_{2i+1}\wedge T}) - e^{-\beta \sigma_{2i} \wedge T} v^+ (X_{\sigma_{2i}\wedge T}) \right] \nonumber \\
& \quad + \sum_{i \ge 1} \mathbb E \left[ e^{-\beta \sigma_{2i} \wedge T} v^- (X_{\sigma_{2i}\wedge T}) - e^{-\beta \sigma_{2i-1} \wedge T} v^- (X_{\sigma_{2i-1}\wedge T}) \right]. \label{eq:opt-prop3}
\end{align}
Note that by admissibility $\sum_{i \ge 1} e^{-\beta \sigma_i} \in L^2 (\mathbb P)$, which implies  $\sup_{i \ge 0} \sigma_i = +\infty$ almost surely. Then, using the $\C^0$-pasting conditions in \eqref{eq:prod-sys} and letting $T \to \infty$, we finally obtain
\[ \mathbb E \left[ \int_0 ^{+\infty} \pi_p (X_s) ds\right] + \sum_{i \ge 1} \mathbb E \left[ e^{-\beta \tau_i} K_p (\xi_i) \right] \le v^+ (x) ,\]
for any admissible producer's strategy $(\tau_i , \xi_i)_{i \ge 1}$. Applying the same arguments to the impulse strategy $(\tau^* _i , \xi^* _i)_{i \ge 1}$ as in the statement we would get equalities instead of inequalities everywhere. Notice that the second order conditions \eqref{eq:SOC} guarantee the optimality of the impulses $\xi_i ^*$. 
\medskip

\noindent\emph{-- Step 2: admissibility.} To conclude the proof, we need to show that the impulse strategy $(\tau_i ^* , \xi_i ^*)_{i \ge 1}$ is admissible as in the first part of Definition \ref{def:adm}. Property 1 is granted by the dynamics of the state variable $X$ and the fact that producer's thresholds satisfy $x_\ell ^+ < x_\ell ^{+*}, x_h ^{-*} < x_h ^-$.

Property 2 is obviously satisfied by definition of the optimal impulses $\xi_i ^*$ as in the statement. Hence, we are left with showing property 3, i.e. $\sum_{i \ge 1} e^{-\beta \tau^* _i } \xi^* _i \in L^2 (\mathbb P)$. We can proceed once more as in the proof of \cite[Prop. 4.7]{ABCCV20} and in the second part of Proposition \ref{prop:cons-best-response}'s proof. We provide all details for reader's convenience. First, let us denote $(\tau^\prime _k)_{k \ge 1}$ the increasing 
sequence of stopping times exhausting the intervention times of both players. Since the optimal impulses $(\xi_i ^*)_{i \ge 1}$ are uniformly bounded by some positive constant, say $\kappa$, we have
\begin{align*}
\mathbb E\left[ \left( \sum_{i \ge 1} \xi^*_i e^{-\beta \tau_i}\right)^2\right] & \le \kappa^2 \mathbb E\left[ \left( \sum_{k \ge 1} e^{-\beta \tau^\prime _k}\right)^2\right] \le \lim_{m \to\infty} 2 \mathbb E\left[ \sum_{1\le k \le  r \le m} e^{- \beta (\tau^\prime _r + \tau^\prime _k)}\right] \\
& \le \lim_{m \to\infty} 2\kappa^2 \mathbb E\left[ \sum_{1\le k \le m} e^{- 2\beta \tau^\prime _k }\right] = 2\kappa^2  \mathbb E\left[ \sum_{k \ge 1} e^{- 2\beta \tau^\prime _k }\right],
\end{align*}
hence it suffices to prove that $\sum_{k \ge 1} e^{- 2\beta \tau^\prime _k } \in L^1 (\mathbb P)$. Now, notice that $\tau^\prime _k$, $k \ge 1$, can be represented as  $\sum_{r=1}^k \zeta_r$, where $\zeta_r$ is a sequence of independent random variables distributed as the exit time, say $\zeta^{z,\mu}$, of one of the processes $z + \mu t + \sigma W_t$ with
\[ (z, \mu) \in \mathcal Z^\pm := \{ (y_h, \mu_+),(y_{\ell}, \mu_- ), (x_{h}^{- \ast} , \mu_-), ( x_\ell ^{+ \ast} , \mu_+)\} ,  \]
from the respective intervals
\[ (-\infty, y_h ), \quad (y_\ell, +\infty), \quad (-\infty, x_h ^-), \quad (x_\ell ^+ , +\infty). \]
Due to the independence of the sequence $\zeta_r$ we have
\begin{align*}
\mathbb E\left[ \sum_{k \ge 1} e^{- 2\beta \tau^\prime _k }\right]  = \sum_{k \ge 1} \prod_{r=1}^k \mathbb E\left[ e^{- 2\beta \zeta_r }\right] \le  \sum_{k \ge 1} \left(\mathbb E\left[ e^{- 2\beta \min_{(z,\mu) \in \mathcal Z^\pm}\zeta^{z,\mu} }\right]\right)^k ,
\end{align*}
which is a convergent geometric series, due to $\beta >0$ and the fact that $\zeta^{z,\mu} > 0$ almost surely for all $(z,\mu) \in \mathcal Z^\pm$. This shows that $(\tau^*_i , \xi_i ^*)_{i \ge 1}$ is an admissible producer's impulse strategy and concludes the proof.
\end{proof}

\begin{proof}[Proof of Proposition~\ref{prop:prod-best-response2}] Here, notice that $x_h ^+ = y_h$ so that, given producer's priority in case of simultaneous interventions (cf. Remark \ref{rmk:priority}), the drift is always equal to $\mu_+$ (recall that we are in the case $\mu_{0-} = \mu_+$). Hence, this proof can be performed as the one of Proposition~\ref{prop:prod-best-response1}, by ignoring the intervals where the drift is $\mu_-$ so that the second half in the RHS of inequality \eqref{eq:opt-prop3} is zero. The admissibility is proved in the same way. The details are therefore omitted.
\end{proof}

\subsection{Equilibrium Dynamics Computation}
Let $(a, b)$ be an arbitrary interval and $x \in (a,b)$ be an interior starting location. We define $\delta_+(x;a,b)$ to be the first passage time associated to the interval $(a, b)$ of a Brownian Motion with drift $\mu_+$ starting from $x$ and $P_+(x; a, b)$ to be the probability that this BM hits $a$ before $b$ (similarly for $\delta_-(x;a,b)$ and $P_-(x; a, b)$ associated to drift $\mu_-$). These quantities admit explicit expressions, see \cite{Borodin}.

The expected time $\tau_- := \inf \{ t : \mu_t = \mu_-\}$ for $\mu_t$ to switch from $\mu_+$ to $\mu_-$ within a double--switch and one--sided impulse equilibrium is then
\begin{align}\label{eq:time-to-switch}
\mathbb{E}\big[ \tau_-\big]=\mathbb{E}\big[\delta_+(x_0; x^+_\ell, y_h)\big]+\frac{P_+(x_0; x^+_\ell , y_h)}{\mathbf{P}_{I^+_h, S_-}}\mathbb{E}_+\big[\delta(x^{+\ast}_\ell; x^+ _\ell, y_h)\big],
\end{align}
where $\mathbf{P}$ is the transition matrix of $M^\ast_n$. Above, the first term denotes the time to either reach $x_\ell ^+$ (producer impulses up) or $y_h$ (switch to contraction); the second term counts the additional time if $x_\ell^+$ is reached first multiplied by the respective probability $P_+(x_0; x^+_\ell , y_h)$.
Let $\vec{\zeta}$ be the resulting vector of expected sojourn times. Then the long-run proportion of time that  $X^\ast$ carries  a positive drift ($\mu_+$) is
\begin{align}
\rho_+ =\frac{\Pi_{S_+}\zeta_{S_+}+\Pi_{I^+_\ell}\zeta_{I^+_\ell}+\Pi_{I^+_h}\zeta_{I^+_h}}{\vec{\Pi}\cdot\vec{\zeta}^\dagger},
\end{align}
and similarly the long-run proportion associated to a negative drift ($\mu_-$) is $\rho_-=1-\rho_+$.

\bibliographystyle{plain}

\end{document}